\documentclass[smallextended]{svjour3}

\smartqed 

\usepackage{graphicx,amssymb,amsmath,color}
\usepackage{hyperref}

\usepackage{dcolumn}
\usepackage{bm}
\usepackage{mathtools} 
\usepackage{mathrsfs}
\usepackage{yfonts}
\usepackage[makeroom]{cancel}

\usepackage{txfonts}

\DeclareMathOperator*{\opls}{\scalerel*{\oplus}{\textstyle\sum}}
\DeclareMathOperator*{\Span}{\scalerel*{\mathtt{Span}}{\textstyle\sum}}

\DeclareMathOperator*{\Union}{\scalerel*{\cup}{\textstyle\sum}}

\usepackage{scalerel}

\usepackage{bbm}

\newcommand{\hs}[0]{\hspace*{3.5mm}}
\newcommand{\be}[0]{\begin{eqnarray*}}
	\newcommand{\ee}[0]{\end{eqnarray*}}
\newcommand{\ben}[0]{\begin{eqnarray}}
\newcommand{\een}[0]{\end{eqnarray}}

\newcommand{\simx}[1]{\substack{\phantom{a}_{#1}\phantom{a}\\\widetilde{\phantom{#1}}}}
\newcommand{\Q}{\ensuremath{\boldsymbol{\mathtt{Q}}}}
\newcommand{\RR}{\ensuremath{\boldsymbol{\mathtt{R}}}}
\newcommand{\SSS}{\ensuremath{\boldsymbol{\mathtt{S}}}}
\newcommand{\aff}{\ensuremath{\mathtt{\alpha}}}	
\newcommand{\F}{\ensuremath{\boldsymbol{\mathtt{F}}}}	
\newcommand{\LL}{\ensuremath{\boldsymbol{\mathtt{L}}}}	
\newcommand{\HH}{\ensuremath{\boldsymbol{\mathtt{H}}}}	
\newcommand{\bgm}{\ensuremath{\boldsymbol{\mathtt{\gamma}}}}	
\newcommand{\LTr}{\ensuremath{\mathbb{F}\LL}}	
\newcommand{\Hmap}{\ensuremath{\mathbb{F}\HH}}	
\newcommand{\Saction}{\ensuremath{\mathtt{S}}} 	
\newcommand{\Ii}{\ensuremath{\mathtt{I}}} 	
\newcommand{\JJ}{\ensuremath{\mathcal{J}}} 	

\newcommand{\vb}[1]{\ensuremath{|_{_{#1}}}}		
\newcommand{\binfty}{\ensuremath{\boldsymbol{\mathtt{\infty}}}}
\newcommand{\id}[0]{\ensuremath{\text{id}}}

\newcommand{\bt}{\ensuremath{\boldsymbol{t}}} 
\newcommand{\R}[0]{\ensuremath{\varmathbb{R}}}  
\newcommand{\VS}[0]{\ensuremath{\varmathbb{V}}}  

\newcommand{\M}[0]{\ensuremath{\boldsymbol{\mathtt{M}}}}  
\newcommand{\Nn}[0]{\ensuremath{\boldsymbol{\mathtt{N}}}}  
\newcommand{\BSigma}[0]{\ensuremath{\boldsymbol{\Sigma}}}  
\newcommand{\X}[0]{\ensuremath{\boldsymbol{\mathtt{X}}}}  
\newcommand{\Y}[0]{\ensuremath{\boldsymbol{\mathtt{Y}}}}  
\newcommand{\Z}[0]{\ensuremath{\boldsymbol{\mathtt{Z}}}}  
\newcommand{\E}[0]{\ensuremath{\boldsymbol{\mathtt{E}}}}  
\newcommand{\JE}[0]{\ensuremath{\boldsymbol{\mathtt{J}}\E}}  
\newcommand{\JY}[0]{\ensuremath{\boldsymbol{\mathtt{J}}\Y}}  
\newcommand{\dJY}[0]{\ensuremath{\mathtt{J}\Y}}  
\newcommand{\dJZ}[0]{\ensuremath{\mathtt{J}\Z}}  
\newcommand{\Phase}[0]{\ensuremath{\boldsymbol{\mathtt{P}}}}  
\newcommand{\Fibre}[0]{\ensuremath{\boldsymbol{\mathtt{F}}}}  
\newcommand{\p}[0]{\ensuremath{\boldsymbol{\mathtt{p}}}}  
\newcommand{\q}[0]{\ensuremath{\boldsymbol{\mathtt{q}}}}  
\newcommand{\h}[0]{\ensuremath{\boldsymbol{\mathtt{h}}}}  
\newcommand{\djet}[0]{\ensuremath{\mathtt{j}}}  
\newcommand{\e}[0]{\ensuremath{\boldsymbol{\mathtt{e}}}}  
\newcommand{\GGG}[0]{\ensuremath{\boldsymbol{\mathtt{G}}}}  
\newcommand{\AAA}[0]{\ensuremath{\boldsymbol{\mathtt{A}}}}  
\newcommand{\AAAA}[2]{\ensuremath{\boldsymbol{\mathtt{A}}^{#1}_{\phantom{#1}#2}}}  
\newcommand{\BBB}[0]{\ensuremath{\boldsymbol{\mathtt{B}}}}  
\newcommand{\g}[0]{\ensuremath{\boldsymbol{\mathtt{g}}}}  
\newcommand{\EE}[0]{\ensuremath{\boldsymbol{\mathtt{E}}}}  
\newcommand{\bleps}{\ensuremath{\boldsymbol{\varepsilon}}} 

\newcommand{\dd}{\ensuremath{\boldsymbol{\mathtt{d}}}}  
\newcommand{\DD}{\ensuremath{\boldsymbol{\mathtt{D}}}}  
\newcommand{\lder}[1]{\ensuremath{\pounds}_{#1}}  
\newcommand{\Tan}[0]{\ensuremath{\mathtt{T}}}  
\newcommand{\Cotan}[0]{\ensuremath{\mathtt{T^*}}}  
\newcommand{\Ver}[0]{\ensuremath{\mathtt{V}}}  
\newcommand{\MVer}[0]{\ensuremath{\M\text{-}\Ver}}  
\newcommand{\HVer}[0]{\ensuremath{\HH\text{-}\Ver}}  
\newcommand{\Mhor}[0]{\ensuremath{\M\text{-}\mathtt{hor}}}  
\newcommand{\Hhor}[0]{\ensuremath{\HH\text{-}\mathtt{hor}}}  

\newcommand{\xxx}{\ensuremath{\boldsymbol{\mathtt{x}}}}	
\newcommand{\y}{\ensuremath{\boldsymbol{\mathtt{y}}}}	
\newcommand{\z}{\ensuremath{\boldsymbol{\mathtt{z}}}}	
\newcommand{\uu}{\ensuremath{\boldsymbol{\mathtt{u}}}}	
\newcommand{\vv}{\ensuremath{\boldsymbol{\mathtt{v}}}}	
\newcommand{\ww}{\ensuremath{\boldsymbol{\mathtt{w}}}}	
\newcommand{\VV}{\ensuremath{\boldsymbol{\mathtt{V}}}}	
\newcommand{\WW}{\ensuremath{\boldsymbol{\mathtt{W}}}}	
\newcommand{\bxi}{\ensuremath{\boldsymbol{\mathtt{\xi}}}}	
\newcommand{\bmu}{\ensuremath{\boldsymbol{\mathtt{\mu}}}}	
\newcommand{\bnu}{\ensuremath{\boldsymbol{\mathtt{\nu}}}}	
\newcommand{\Bxi}{\ensuremath{\boldsymbol{\mathtt{\Xi}}}}	
\newcommand{\bpi}{\ensuremath{\boldsymbol{\mathtt{\pi}}}}	
\newcommand{\Sec}{\ensuremath{\mathtt{Sec}}}	
\newcommand{\sect}{\ensuremath{\boldsymbol{\varphi}}}	
\newcommand{\forms}{\ensuremath{\mathtt{\Lambda}}}	
\newcommand{\mef}{\ensuremath{\mathtt{\Theta}}} 
\newcommand{\msf}{\ensuremath{\mathtt{\Omega}}}	
\newcommand{\kmsf}{\ensuremath{\boldsymbol{\mathtt{\omega}}}}	
\newcommand{\kmef}{\ensuremath{\boldsymbol{\mathtt{\theta}}}}


\newcommand{\blomega}{\ensuremath{\boldsymbol{\omega}}}	
\newcommand{\lsn}{\ensuremath{\left[\!\left[}}	
\newcommand{\rsn}{\ensuremath{\right]\!\right]}}	
\newcommand{\lpb}{\ensuremath{ \{\![}}	
\newcommand{\rpb}{\ensuremath{]\!\}}}	
\newcommand{\Lpb}{\ensuremath{ \big\{\!\big[}}	
\newcommand{\Rpb}{\ensuremath{\big]\!\big\}}}	
\newcommand{\Ham}[0]{\ensuremath{\mathtt{Ham}}}  
\newcommand{\TTT}[0]{\ensuremath{\boldsymbol{\mathtt{T}}}}  
\newcommand{\ttt}[0]{\ensuremath{\boldsymbol{\mathtt{t}}}}  
\newcommand{\rr}[0]{\ensuremath{\boldsymbol{\mathtt{r}}}}  
\newcommand{\sss}{\ensuremath{\boldsymbol{\mathfrak{s}}}}  


\newcommand{\f}[0]{\ensuremath{\boldsymbol{\mathtt{f}}}}  

\newcommand{\Aut}[1]{\ensuremath{\mathfrak{Aut}(#1)}}		
\newcommand{\Hom}[2]{\ensuremath{\mathfrak{Hom}(#1,#2)}}		
\newcommand{\GG}{\ensuremath{\mathtt{G}}}		
\newcommand{\alg}[1]{\ensuremath{\mathfrak{alg}(#1)}}		
\newcommand{\algx}[2]{\ensuremath{\mathfrak{alg}_{#2}(#1)}}		
\newcommand{\Diff}[1]{\ensuremath{\mathfrak{Diff}}(#1)}  
\newcommand{\dDiff}[1]{\ensuremath{\dd\text{-}\Diff{#1}}}  
\newcommand{\DiffM}[0]{\Diff{\M}}  
\newcommand{\SOf}[1]{\ensuremath{\mathtt{SO}(#1)}}  
\newcommand{\SO}[0]{\SOf{\bleta, \M}}  
\newcommand{\OO}[2]{\mathtt{O}^{#1}_{\phantom{#1}#2}}  
\newcommand{\GL}[0]{\ensuremath{\mathtt{GL}}}  

\newcommand{\proj}[2]{\ensuremath{\mathbbmss{p}_{#1\text{, }#2}}}  
\newcommand{\proh}{\ensuremath{\mathbbmss{p}}}  
\newcommand{\bleta}{\ensuremath{\etaup}}  
\newcommand{\bstar}[0]{\boldsymbol{*}}  





\begin{document}

\title{Covariant Quantum Gravity I:\\ Covariant Hamiltonian Framework}

\author{Mari\' an Pilc}


\institute{Institute of Theoretical Physics, Faculty of Mathematics and Physics,
Charles University in Prague, V Hole\v{s}ovi\v{c}k\'{a}ch 2, 180~00 Prague
8, Czech Republic\\
\email{marian.pilc@gmail.com}
}

\date{Received: date / Accepted: date}

\maketitle

\begin{abstract}
The first part of the series formulates the Einstein-Cartan theory in the covariant hamiltonian framework. The first section revises the general multisymplectic approach and introduces the notion of the d-jet bundles. Since the whole Standard Model Lagrangian (including gravity) can be written as the functional of the forms, the structure of the d-jet bundles is more appropriate for the covariant hamiltonian analysis than the standard jet bundle approach. The definition of the local covariant Poisson bracket on the space of covariant observables is recalled. The main goal of the work is to show that the gauge group of the Einstein-Cartan theory is given by the semidirect product of the local Lorentz group and the group of spacetime diffeomorphisms. Vanishing of the integral generators of the gauge group is equivalent to equations of motion of the Einstein-Cartan theory and the local covariant algebra generated by Noether's currents is closed Lie algebra.
\keywords{Graded Manifolds \and Covariant Hamiltonian Framework \and Local Poisson Bracket \and Einstein-Cartan Theory}
\PACS{04.20.Fy \and 04.60.Ds \and 04.60.Gw \and 11.10.Ef}
\end{abstract}

\section{Introduction}

The first part of the series, which proposes as hypothesis a new theory of Covariant Quantum Gravity (CQG) with continuous quantum geometry, formulates the Einstein-Cartan theory within the covariant hamiltonian framework. The Einstein-Cartan theory, also familiar as the Kibble-Sciama theory, is a gauge theory where the local Poincar\' e group plays a role of the gauge symmetry\cite{Kibble,Hehl1,Hehl2}. Standard ADM formulation\cite{Wald} of General Relativity requires time+space splitting of the spacetime therefore the Hamilton formalism, which is necessary for any rigorous quantum formulation, breaks the explicit covariance and the algebra of constraints is no more closed Lie algebra, similar result can be obtained within the Einstein-Cartan theory\cite{Nikolic}. In the case when the gravitation is interacting with the pressure-free dust then there exists the privileged system of the coordinates comoving with every single grain of the dust which enables to rewrite equivalently the ADM constraints in such a way that they form closed Lie algebra\cite{BrownKuchar}. Kucha\v r also tried to rewrite the ADM constraints for the gravitation interacting with scalar field but in this case the ADM constraints can be rewritten only implicitly and the result depends on the solution of the equations of motion\cite{Kuchar}. Another possible solution of this Lie algebra problem was proposed in the Phoenix Project where all constraints of the system are contained in a single Master Constraint\cite{Thiemann_Master_Constraint}, but the Master Constraint for LQG is quadratic in the Hamiltonian constraint and given Lie algebra is no more associated with the local Poincar\' e group.\\
\hs The problem just mentioned yields the question whether there exists covariant hamiltonian formalism which can be applied here and which does not require the space+time splitting. Fortunatelly, we know that there exists an affirmative answer based on ideas of the multisymplectic geometry which generalize familiar symplectic structures\cite{Dedecker,GMS,Momentum_maps,HK1}.\\
\hs Usually, the covariant hamiltonian description works  with a notion of the jet bundles. The jet bundle is a fibre bundle constructed from the given configuration bundle $\Y$ over the spacetime $\M$, with the local coordinates $(x^{\mu},y^A)$, by adding the first derivatives of the variables $y^A$ to $\Y$, i.e. locally $(x^{\mu},y^A, v^A_{\mu})$. The Standard Model Lagrangian (including gravity) can be written as a functional of the forms over the spacetime $\M$ with values in a certain vector space (or in its submanifold as in the case of gravity) and their gauge covariant exterior derivatives, therefore it is more suitable to work within this structure, called the d-jet bundles, instead of the jet bundles. Therefore the basic results (covariant hamiltonian equations of motion, momentum maps, Noether's charges) in the language of the d-jet bundles should be introduced. This is the task of the section \ref{d_jet}.\\
\hs The section \ref{LGPB} deals with the construction of the local covariant Poisson bracket. We recall basic definitions of the local covariant bracket, observables and associated hamiltonian vector fields \cite{FPR} for the general multisymplectic manifold. In the section \ref{d_jet}  we have introduced two multisymplectic structures, the kinematical and dynamical, hence we need to know how their local covariant brackets are related. We also need to explore the general shape of the local observable, which is given by sum of generators of the group $\DiffM$ of all spacetime diffeomorphisms and $\M$-horizontal simply differentiable $(n-1)$-forms, see theorem \ref{Observable_canonical_decomposition_theorem}. Thus on general level we arrive into Lie algebra of local covariant observables. But as we know, the standard quantization procedure is based on searching of representations of different kind of an integral-like observables and Poisson bracket. We do not proceed such construction here, but rather we left it to the second part of the series where we explore instantaneous formalism\cite{Momentum_maps_II} in detail. On the other hand, the searching of the local Poisson algebra representation is also considered and explored in the literature and this step is used to be called a pre-quantization\cite{Kan3,Kan4}, but in some sence it can be viewed as the first quantization.\\
\hs We deal with the Einstein-Cartan theory in the section \ref{section_ECT}. At first we introduce a graded bundle of the right-handed coframes whose elements are interpreted as orthonormal vierbeins in the Einstein-Cartan theory. As another independent variable of the Einstein-Cartan theory is considered a metric-compatible connection and we finally arrive at the full covariant configuration bundle and the multisymplectic structure over it. Since the covariant Legendre map is singular we must proceed a multisymplectic reduction. Next, we explore equations of motion in the point of view of the covariant hamiltonian formalism. As the last thing, we find out that the gauge group of the Einstein-Cartan theory is given by the semidirect product of the local Lorentz group $\SO$ and the group of spacetime diffeomorphisms $\DiffM$ and show that equations of motion are given by vanishing of Noether's charges related to generators of the gauge group.\\
\\
NOTATION:\\\\
We use following convention in the series.\\\\
$\M - \text{Spacetime manifold with }\dim\M=n,$ (if we are dealing with Einstein-Cartan theory we have $n=4$).\\
$\BSigma - \text{Spatial manifold with }\dim\BSigma=n-1.$\\
\\
Spacetime indices are labeled by $\mu,\nu,\bar{\mu},\hat{\mu},\dots=0,1,\dots,n-1$, where indices with hats, bars are considered as standard without any additional meaning. Spatial indices are labeled by $\alpha,\beta,\gamma,\bar{\alpha},\hat{\alpha},\dots=1,\dots,n-1$.\\
\\
Multi-index notation\\\\
Let $B_{\mu_{p+1}\dots\mu_q}$ be totally antisymmetric $(0\le p<q\le n)$ then we set\\
$B_{(\mu)^p_q}=B_{\mu_{p+1}\dots\mu_q}$,\\
$B_{(\mu)_q}=B_{(\mu)^0_q}$,\\
$B_{(\mu)^p}=B_{(\mu)^p_n}$,\\
$B_{(\mu)}=B_{(\mu)^0_n}$.\\\\
Silent multi-index summation\\\\
$B_{(\mu)^p_q}C^{(\mu)^p_q}=\frac{1}{(q-p)!}B_{\mu_{p+1}\dots\mu_q}C^{\mu_{p+1}\dots\mu_q}.$\\\\
Lebesgue's coordinate measures\\\\
$\dd x^{(\mu)^p_q}=\dd x^{\mu_{p+1}}\wedge\dots \dd x^{\mu_q}$,\\
$\dd\Sigma_{(\mu)_q}=\varepsilon_{(\mu)_q(\mu)^q}\dd x^{(\mu)^q}$,\\
where $\varepsilon_{(\mu)}$ and $\bar{\varepsilon}_{(\mu)}$ are Levi-Civita symbols ($\varepsilon_{01\dots(n-1)}=\bar{\varepsilon}^{01\dots(n-1)}=1$).\\\\
Coframe indices runs through $a,b,\bar{a},...=0,1,...,n-1$ and Levi-Civita symbols $\bleps_{(a)}$ and $\bar{\bleps}^{(a)}$ are given by ($\bleps_{01\dots(n-1)}=\bar{\bleps}^{01\dots(n-1)}=1$) and Minkowski metric tensor has signature $(\bleta_{ab})=\text{diag}(-1,+1,+1,\dots,+1,+1)$. General multi-indices $A,B,\bar{A},\dots$ are running through some certain finite set depending on considered theory.

\section{Covariant Hamiltonian Formalism: d-jet bundles, Multisymplectic manifolds, Covariant momentum maps and Equations of Motion}\label{d_jet}
In the classical mechanics the hamiltonian analysis takes place on the symplectic manifold ($\Phase=\Cotan\Y,\omegaup$), where $\Y$ is the finite dimensional configuration space and $\omegaup=-\dd\varthetaup=-\dd (p_A\dd y^A)$ is the canonical symplectic $2$-form on $\Phase$. In the case of the classical field theory one usually starts with the infinite dimensional configuration manifold and then canonically constructs the infinite dimensional phase space. This construction requires the time+space splitting of the variables and the Legendre map relates the velocities, i.e. time derivatives of variables, with the canonical momenta. This breaks an explicit covariance. There also exists another approach\cite{Dedecker,Momentum_maps,GMS} which works with the multisymplectic structure and even more these two constructions are mutually complementary as we will see in the next part of the series. In this approach the infinite-dimensional configuration (or phase) space is replaced by the set of all (sufficiently smooth) sections of a certain finite-dimensional fibre bundle over the spacetime. While in the symplectic case the Legendre map gives, in the non-degenerate case, one-to-one relation between the velocities and the canonical momenta, the multisymplectic canonical momenta are related by the generalized or covariant Legendre map to the exterior derivatives of the fields.\\
\hs Let $\M$ be a spacetime manifold with dimension $n$ with local coordinates be $x^{\mu}$, where $\mu,\nu,\dots\in\{0,1,\dots,n-1\}$. Since we are interested in gravity where the metric is one of the observables it is also assumed that $\M$ is the topological smooth manifold only. It is well known\cite{Hawking_Ellis,Wald} that the equations of motion of the standard fields are well possessed if there exists global Cauchy surface and the spacetime has a structure $\M\simeq\BSigma\times\R$, where $x^0\in\R$ is interpreted as the time and $\BSigma_t=\{\xxx\in\M;x^0(\xxx)=t\}$ are supposed to be the achronal sections through $\M$ and play role of the Cauchy surfaces. As it was mentioned in the case of gravity there is no background metric and therefore the notion of the Cauchy surface depends on the solution. But if one assumes only the product condition $\M\simeq\BSigma\times\R$ and if for the initial conditions the initial embedding $\BSigma_{t_{\text{ini}}}$ of $\BSigma$ is the Cauchy surface then the equations of motion for the gravitational field are well possessed\cite{Wald,Hawking_Ellis}. Thus, let $\M\simeq\BSigma\times\R$. In order to avoid an analysis of the boundary terms and the overlapping conditions it is also assumed that $\BSigma$ is compact boundaryless orientable $(n-1)$-dimensional manifold and the considered fields are globally defined, therefore the bundles constructed in this part are trivial. Surprisingly, the triviality condition of the configuration bundle does not restrict our approach in $3+1$ dimensional space-time, since each $3$-dimensional manifold $\BSigma$ is parallelizable and therefore also $\M=\BSigma\times\R$ is also parallelizable.  Anyway, the notion of $d$-jet bundles and their duals is formulated for general case.\\
\hs The standard multisymplectic methods work with "scalar" fields $y^A$ and their spacetime derivatives $y^A_{,\,\mu}$. This means that if one wants to describe for example an electromagnetic field then $y^A\equiv A_{\mu}$ and the coordinate index $\mu$ is hidden in the general multi-index $A$ of $y^A$. On the other hand the Einstein-Hilbert-Palatini and the Standard model Lagrangians can be formulated within forms ($0$-forms: scalar, Dirac, neutrino fields or $1$-forms: electroweak, strong, gravitational connections or orthonormal coframe field) and their gauge-covariant exterior derivatives. Hence, it seems to be more suitable to develop the multisymplectic methods using directly the structures of the exterior algebra over the spacetime $\M$.
\subsection{Graded Manifolds}
By graded manifold $\Y$ we mean following (we are introducing the simplified definition rather than its general version in terms of categories, sheaves).
\begin{definition}
	Let $\proj{\M}{\Y}:\Y\to\M$ be a fibre bundle, where $\M,\Y$ are oriented open or closed finite dimensional smooth manifolds and the bundle projection $\proj{\M}{\Y}$ is smooth map. Let there exist a vector bundle $\M\VS$ given by direct sum 
	\be
	\M\VS = \opls\limits_{i=0}^n \forms^i(\M,\VS_i)
	\ee
	of vector bundles $\forms^i(\M,\VS_i)$ of $i$-forms over $\M$ with values in real finite dimensional vector space $\VS_i$ such that the typical fibre $\Y^f$ of the bundle $\Y$ is submanifold of the typical fibre vector space $\M\VS^f$ of the bundle $\M\VS$ and \be\Tan\Y^f=\Tan_{\Y^f}\M\VS^f\ee holds then $\Y$ is called graded manifold.
\end{definition}
NOTE: $\Y$ might be considered as complex manifold but for $\forall\y\in\Y$ we require $\y^{\bstar}\in\Y_{\xxx}$, where $\xxx=\proj{\M}{\Y}(\y)$ and $\Y_{\xxx}=\proj{\M}{\Y}^{-1}(\xxx)$ is the fibre over $\xxx$ containing $\y$ and ${}^{\bstar}$ is a complex conjugation inherited from the complex vector spaces $\VS_i$.\\
\hs The coordinates $\y^A$ on the fibre $\Y_{\xxx}$ are forms in $\xxx$ with values in a certain real vector space (or its submanifold as in the case of gravity) labeled by the index $A$ which is independent on the $\M$-coordinate indices $\mu,\nu,...$. In general, the degree $q_A$ of $\y^A$ may depends on $A$, but if there is no confusion we do not write this dependence explicitly $q=q_A$, i.e.
\be
\y^A=\frac{1}{q!}y^A_{\mu_1\dots\mu_q}\dd x^{\mu_1}\wedge\dots\wedge\dd x^{\mu_q}=y^A_{(\mu)_q}\dd x^{(\mu)_q}
\ee
keeping in mind this assumption.\\
\hs Let $\F$ be general smooth function $\F:\Y\to\forms^m\M$ satisfying $\proj{\M}{\forms\M}\circ\F=\proj{\M}{\Y}$. We say that $\F$ is simply differentiable if its variation given by small vertical changes $\delta\y$ can be written as
\ben
\F(\y+\delta\y)=\F(\y)+\delta\y^A\wedge\frac{\partial^L \F}{\partial\y^A}=\F(\vv)+\frac{\partial^R \F}{\partial\y^A}\wedge\delta\y^A.
\label{definition_simple_derivative}
\een
$(m-q)$-forms $\frac{\partial^L \F}{\partial\y^A}$ or $\frac{\partial^R \F}{\partial\y^A}$ are called left or right simple derivatives of $\F$ with respect to $\y^A$, respectively. These forms are nontrivial only for $0\le q\le m\le n$ and they are related by
\be
\frac{\partial^R\F}{\partial\;\y^A}=(-1)^{q(m-q)}\frac{\partial^L\F}{\partial\;\y^A}.
\ee
Following theorem shows the most general shape of simply differentiable function.

\begin{theorem}\label{Theorem_on_simple_differentiability}
	Smooth function $\F:\Y\to\forms^m\M$ satisfying $\proj{\M}{\forms\M}\circ\F=\proj{\M}{\Y}$ is simply differentiable if and only if one of the following conditions is satisfied
	\be
	\begin{tabular}{r p{9.5cm}}
		$\phantom{(}i)$& $m=n$,\\
		$\phantom{(}ii)$& $m<n$ and $\F$ is a finite polynom of the type
		\be
		\F=\F_0+\F_A\wedge\z^A+\frac{1}{2}\F_{AB}\wedge\z^A\wedge\z^B+\frac{1}{3!}\F_{ABC}\wedge\z^A\wedge\z^B\wedge\z^C+\dots,
		\ee
		where $\lambda^A$ is $0$-form part of $\y^A$, $\z^A$ is $q_A$-forms part of $\y^A$ with $0<q_A\le m$, $\ww^A$ is $q_A$-forms part of $\y^A$ with $m<q_A\le n$, i.e. $\y^A=\lambda^A+\z^A+\ww^A$, and smooth functions $\F_0$, $\F_A$, $\F_{AB}$, $\F_{ABC}$, $\dots$ depend only on $\xxx\in\M$ and $\lambda^A$.
		\\
	\end{tabular}
	\ee
\end{theorem}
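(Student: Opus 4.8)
The plan is to treat the statement as a fibrewise, pointwise-in-$\xxx$ assertion about how $\F$ depends on the fibre coordinates, splitting them following the theorem as $\y^A=\lambda^A+\z^A+\ww^A$ into the $0$-forms $\lambda^A$, the forms $\z^A$ of degree $0<q_A\le m$, and the forms $\ww^A$ of degree $m<q_A\le n$. For the sufficiency direction I would dispatch case (i) first: when $m=n$ the wedge map $P\mapsto\dd x^{(\mu)_{q_A}}\wedge P$ from $(n-q_A)$-forms into the one-dimensional space $\forms^n\M$ is surjective, so the honest vertical variation $\frac{\partial\F}{\partial y^A_{(\mu)}}\,\delta y^A_{(\mu)}$ can always be reorganised as $\delta\y^A\wedge\frac{\partial^L\F}{\partial\y^A}$; hence every smooth top-form is simply differentiable. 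For case (ii) I would simply differentiate the proposed polynomial: the $\delta\lambda^A$ pieces reproduce $\frac{\partial^L\F}{\partial\lambda^A}$, an $m$-form, so $0$-forms are never obstructed, while varying the explicit factors $\z^A$ produces $\frac{\partial^L\F}{\partial\z^A}=\F_A+\F_{AB}\wedge\z^B+\dots$, and everything collects into the required shape $\sum_A\delta\y^A\wedge\frac{\partial^L\F}{\partial\y^A}$.

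For the necessity direction I would first extract the component form of the definition. Expanding $\delta\y^A=\delta y^A_{(\mu)}\,\dd x^{(\mu)}$ and matching the coefficient of each independent $\delta y^A_{(\mu)}$ in $\F(\y+\delta\y)=\F(\y)+\delta\y^A\wedge\frac{\partial^L\F}{\partial\y^A}$ gives the identity $\frac{\partial\F}{\partial y^A_{(\mu)}}=\dd x^{(\mu)}\wedge P^A$, with one and the same $(m-q_A)$-form $P^A:=\frac{\partial^L\F}{\partial\y^A}$ for every component index $(\mu)$. Two consequences are immediate. First, for $q_A>m$ the form $P^A$ would have negative degree, hence vanishes, so $\F$ cannot depend on $\ww^A$. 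Second, writing $\F=F_K\,\dd x^K$, the identity forces the components to respect a support condition: $F_K$ depends on a component $z^A_{(\mu)}$ only when $(\mu)\subseteq K$.

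The heart of the argument is to upgrade this support condition to genuine polynomiality, and this is where $m<n$ is essential. Differentiating the component identity once more and using equality of mixed partials yields $\dd x^{(\mu)}\wedge\frac{\partial P^A}{\partial z^B_{(\nu)}}=\dd x^{(\nu)}\wedge\frac{\partial P^B}{\partial z^A_{(\mu)}}$, whose right-hand side is divisible by $\dd x^{(\nu)}$. I would then prove the key lemma: if $m<n$ and an $(m-q_A)$-form $R$ satisfies that $\dd x^{(\mu)}\wedge R$ is divisible by $\dd x^{\nu}$ for every index $(\mu)$ of length $q_A$, then $R$ is itself divisible by $\dd x^{\nu}$. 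The proof is a counting argument: $\dd x^{\nu}\wedge\dd x^{(\mu)}\wedge R=0$ for all $(\mu)$ forces every component of $\dd x^{\nu}\wedge R$ to vanish, because precisely when $m<n$ one can always choose a length-$q_A$ index $(\mu)$ disjoint from the support of that component. Applied index by index this shows each $\frac{\partial P^A}{\partial z^B_{(\nu)}}$ is divisible by $\dd x^{(\nu)}$, i.e. $P^A$ is again simply differentiable, now into $\forms^{m-q_A}\M$.

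With the lemma in hand I would induct on $m$, the base case $m=0$ being a function of $\xxx$ and $\lambda$ only. By the inductive hypothesis each simple derivative $P^A=\frac{\partial^L\F}{\partial\z^A}$, mapping into $\forms^{m-q_A}\M$ with $m-q_A<m$, is a finite polynomial of the stated type; integrating these polynomial $\z$-derivatives and adding the $\z$-independent term $\F_0(\xxx,\lambda)$, namely $\F$ evaluated at $\z=0$, recovers $\F$ as a polynomial in the $\z^A$. Equality of mixed partials makes the coefficients $\F_{A_1\dots A_k}$ totally symmetric in their indices; they are $(m-q_{A_1}-\dots-q_{A_k})$-forms depending only on $\xxx$ and $\lambda$, and this degree constraint forces them to vanish once $q_{A_1}+\dots+q_{A_k}>m$, giving the finite polynomial. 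I expect the main obstacle to be precisely the divisibility lemma and its attendant bookkeeping — assembling the component-wise quotients into honest coefficient forms and tracking the left/right-derivative signs — since this is the one place where the hypothesis $m<n$ genuinely enters and where the contrast with the unconstrained top-degree case $m=n$ becomes visible.
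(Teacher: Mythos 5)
Your outline tracks the paper's proof closely in its opening moves --- case $(i)$ via the surjectivity of wedging into top degree, sufficiency of the polynomial form by direct differentiation, the vanishing of the $\ww^A$-dependence, and the support condition $F_K$ depending on $z^A_{(\mu)}$ only for $(\mu)\subseteq K$ are all correct and correspond to the paper's consequences of relation (\ref{AUX_simple_differentiability_theorem_1}) --- and your divisibility lemma is itself true: the counting $n-(m-q_A+1)\ge q_A\Leftrightarrow m<n$ is sound, the choice of a length-$q_A$ multi-index disjoint from the support of each component is exactly where $m<n$ enters, and this is an attractive replacement for part of the paper's component manipulations. The problem is the pivot on which your whole induction turns: the claim that componentwise divisibility of each $\frac{\partial P^A}{\partial z^B_{(\nu)}}$ by $\dd x^{(\nu)}$ means ``$P^A$ is again simply differentiable.'' That implication is false. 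Simple differentiability of $P^A$ requires one single quotient $\frac{\partial^L P^A}{\partial\z^B}$ serving all components $(\nu)$ simultaneously, whereas your lemma only produces $(\nu)$-dependent quotients $Q^{AB}_{(\nu)}$. Concretely, take a single $1$-form variable $\z=z_\mu\,\dd x^\mu$ and $P=z_1\,\dd x^1$: every partial derivative is divisible as required ($\frac{\partial P}{\partial z_1}=\dd x^1\wedge 1$, all others vanish), yet no common $0$-form $Q$ exists, since $\dd x^\nu\wedge Q=0$ for $\nu\ne 1$ forces $Q=0$ while $\dd x^1\wedge Q=\dd x^1$ forces $Q=1$. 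So divisibility alone cannot feed the inductive hypothesis.

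What excludes such a $P$ from arising as a simple derivative of an actual $\F$ is the full mixed-partial identity $\dd x^{(\mu)}\wedge\frac{\partial P^A}{\partial z^B_{(\nu)}}=\dd x^{(\nu)}\wedge\frac{\partial P^B}{\partial z^A_{(\mu)}}$ read as an \emph{equality for all pairs} $(\mu),(\nu)$ (for $P=z_1\,\dd x^1$ it already fails at $\mu=2$, $\nu=1$); but your argument uses this identity only to extract divisibility of one side and then discards the remaining information. Upgrading the $(\nu)$-wise quotients to a uniform one --- the step you dismiss as ``assembling the component-wise quotients into honest coefficient forms'' --- is not bookkeeping: it is the actual combinatorial core of the theorem, and it is precisely what the paper spends its hardest passage establishing, via the homogeneous decomposition (\ref{AUX_simple_differentiability_theorem_2}), the Euler-type integration along $\gamma(t)=t\cdot z^A_{(\mu)_{q_A}}$, and the verification through (\ref{AUX_simple_differentiability_theorem_3}) and (\ref{AUX_simple_differentiability_theorem_4}) that the contracted coefficients $\Gamma_{A_1\dots A_r\,(\nu)^{Q_r}_m}\big((\nu)_m\big)$ do not depend on the completing multi-index $(\nu)_m$. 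Until you supply the analogous consistency argument --- necessarily exploiting the symmetry relations across all index pairs, not just divisibility --- the inductive appeal to the hypothesis for $P^A$ is unjustified and the necessity direction of your proof is incomplete.
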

\begin{proof}$\phantom{}$\\
	$\phantom{(i}i)$ $m=n$\\
	We have
	\be
	\F(\y^A+\delta\y^A)&=&\tilde{F}(y^A_{(\mu)_q}+\delta y^A_{(\mu)_q})\dd\Sigma=\tilde{F}\dd\Sigma+
	\frac{\partial\tilde{F}}{\partial y^A_{(\mu)_q}}\delta y^A_{(\mu)_q}\dd\Sigma,
	\ee
	where $\dd\Sigma$ is coordinate Lebesgue's volume form. Since in general
	\ben
	\delta y^A_{(\mu)_q}\dd\Sigma=\delta\y^A\wedge\dd\Sigma_{(\mu)_q} \label{n_form_times_components_of_r_form_formula}
	\een
	it is obvious that such $\F$ is always simply differentiable.\\
	$\phantom{(}ii)$ $m<n$\\
	We split $\y^A$ into three parts $\y^A=\lambda^A+\z^A+\ww^A$, where $\lambda^A$ is $0$-form part of $\y^A$, $\z^A$ is $q_A$-forms part of $\y^A$ with $0<q_A\le m$ and $\ww^A$ is $q_A$-forms part of $\y^A$ with $m<q_A\le n$. Within the proof we do not use Einstein summation convention for indices $A,B,...$ and we write this summation explicitly.\\
	For $\delta\y^A=\delta\ww^A$ we have 
	\be
	\frac{\partial^L \F}{\partial\ww^A}=0 \text{, hence }0=\delta\F=\sum\limits_A\frac{\partial\F}{\partial w^A_{(\mu)_{q_A}}}\delta w^A_{(\mu)_{q_A}}\text{, i.e. }\frac{\partial\F}{\partial w^A_{(\mu)_{q_A}}}=0.
	\ee
	For $\delta\y^A=\delta\lambda^A$ we have 
	\be
	\delta\F=\sum\limits_A\frac{\partial\F}{\partial \lambda^A}\delta\lambda^A=\sum\limits_A\frac{\partial\F}{\partial \lambda^A}\wedge\delta\lambda^A=\sum\limits_A\frac{\partial^R\F}{\partial \lambda^A}\wedge\delta\lambda^A.
	\ee
	For $\delta\y^A=\delta\ww^A$ the proof is little bit longer, therefore we do not proceed all steps explicitly, but rather we show only the main milestones of the proof. After some math the definition of simple differentiability (\ref{definition_simple_derivative}) for $\F=F_{(\nu)_m}\dd x^{(\nu)_m}$ yields
	\ben
	\frac{\partial F_{(\nu)_m}}{\partial z^A_{(\lambda)_{q_A}}}=\frac{m!}{\binom{n-m+q_A}{q_A}}\frac{\partial F_{(\mu)_{q_A}(\lambda)^{q_A}_m}}{\partial z^A_{(\mu)_{q_A}}}
	\delta^{(\lambda)_{q_A}(\lambda)^{q_A}_m}_{(\nu)_m}.\label{AUX_simple_differentiability_theorem_1}
	\een
	Multi-index $(\mu)_q$ can be immersed into the set of mutually different numbers $\{\mu_i\}_{i=1}^q=\{\mu_1,\dots,\mu_q\}$ therefore we can compare two multi-indices $(\mu)_q$ and $(\nu)_r$ by
	$(\mu)_q\subset(\nu)_r$ $\Leftrightarrow$ $\{\mu_i\}_{i=1}^q\subset \{\nu_j\}_{j=1}^r$ and define equivalence by $(\mu)_q\sim(\nu)_r$ $\Leftrightarrow$ $(\mu)_q\subset(\nu)_r$ and $(\nu)_r\subset(\mu)_q$. Relation (\ref{AUX_simple_differentiability_theorem_1}) implies
	\be
	\frac{\partial F_{(\nu)_m}}{\partial z^A_{(\lambda)_{q_A}}}=0
	\ee
	for $(\lambda)_{q_A}\not\subset(\nu)_m$ which means that $F_{(\nu)_m}$ depends only on those $z^A_{(\lambda)_{q_A}}$ having $(\lambda)_{q_A}\subset(\nu)_m$. Further, we have
	\be
	\frac{\partial F_{(\nu)_m}}{\partial z^A_{(\lambda)_{q_A}}}=\frac{m!}{\binom{n-m+q_A}{q_A}}\left(
	\xcancel{\sum\limits_{(\lambda)_{q_A}}}\frac{\partial F_{(\lambda)_{q_A}(\lambda)^{q_A}_m}}{z^A_{(\lambda)_{q_A}}}+
	\sum\limits_{(\mu)_{q_A}\nsim(\lambda)_{q_A}}\frac{\partial F_{(\mu)_{q_A}(\lambda)^{q_A}_m}}{z^A_{(\mu)_{q_A}}}
	\right)\delta^{(\lambda)_{q_A}(\lambda)^{q_A}_m}_{(\nu)_m}.
	\ee
	The canceled summation mark in the first term means that we do not sum through the repeating indices $(\lambda)_{q_A}$. The second term in big brackets depends on $z^A_{(\mu)_{q_A}}$ having $(\mu)_{q_A}\nsim(\lambda)_{q_A}$for $(\lambda)_{q_A}(\lambda)^{q_A}_m\subset(\nu)_m$, while the left hand side of the equation does not. This is possible only if $\frac{\partial F_{(\nu)_m}}{\partial z^A_{(\lambda)_{q_A}}}$ does not depend on $z^A_{(\lambda)_{q_A}}$, i.e. $F_{(\nu)_m}$ depends on $z^A_{(\lambda)_{q_A}}$ having $(\lambda)_{q_A}\subset(\nu)_m$ at most linearly. This implies that $F_{(\nu)_m}$ is finite polynomial in $z^A_{(\nu)_{\q_A}}$
	\be
	F_{(\nu)_m}=\sum\limits_{r=0}^{\infty}\alpha^r_{(\nu)_m},
	\ee
	where, $q_i=q_{A_i}$,
	\ben
	\alpha^r_{(\nu)_m}=\frac{1}{r!}
	\sum_{A_1,\dots,A_r}
	\beta_{A_1\dots A_r\,(\nu)_m}^{\phantom{A_1\dots A_r\,}(\lambda^1)_{q_1}:\dots:(\lambda^r)_{q_r}}z^{A_1}_{(\lambda^1)_{q_1}}\dots z^{A_r}_{(\lambda^r)_{q_r}}\label{AUX_simple_differentiability_theorem_2}
	\een
	are homogeneous polynomials of the $r$-th order and only finite number of them are not vanishing identically. $\beta_{A_1\dots A_r\,(\nu)_m}^{\phantom{A_1\dots A_r\,}(\lambda^1)_{q_1}:\dots:(\lambda^r)_{q_r}}$'s depend on $\xxx\in\M$ and $\lambda^A$ only. The colon between two multi-indices $(\lambda^1)_{q_1}:(\lambda^2)_{q_2}$ means that these multi-indices cannot be exchanged, i.e. while two multi-indices without column are graded-symmetric $\Lambda_{(\lambda^1)_{q_1}(\lambda^2)_{q_2}}=(-1)^{q_1 q_2}\Lambda_{(\lambda^2)_{q_2}(\lambda^1)_{q_1}}$
	there is no such explicit relation between  $\Lambda_{(\lambda^1)_{q_1}:(\lambda^2)_{q_2}}$ and $\Lambda_{(\lambda^2)_{q_2}:(\lambda^1)_{q_1}}$. Since two homogeneous polynomials with different orders are linearly independent we can deal directly with homogeneous parts $F_{(\nu)_m}=\alpha^r_{(\nu)_m}$ without lost of generality.\\
	\hs Now, if we enter the ansatz (\ref{AUX_simple_differentiability_theorem_2}) into (\ref{AUX_simple_differentiability_theorem_1}) and use relation
	\be
	\alpha^r_{(\nu)_m}=\int\limits_{\gamma}\sum\limits_{A}\frac{\partial \alpha^r_{(\nu)_m} }{\partial z^A_{(\mu)_{q_A}}}\dd z^A_{(\mu)_{q_A}},
	\ee
	where the integration path is given by $\gamma(t)=(t\cdot z^A_{(\mu)_{q_A}})$ for $t\in\langle 0;1\rangle$, then we get after some math, $Q_r=q_1+\dots+q_r$,
	\ben
	\beta_{A_1\dots A_r\,(\nu^1)_{q_1}\phantom{:}\dots\phantom{:}(\nu^r)_{q_r}(\nu)^{Q_r}_m}^{\phantom{A_1\dots A_r\,}(\lambda^1)_{q_1}:\dots:(\lambda^r)_{q_r}}&=&
	\frac{(m-Q_r)!\left(\prod\limits_{i=1}^r q_i!\right)}{r}\times \label{AUX_simple_differentiability_theorem_3}\\
	&&\sum\limits_{i=1}^r\frac{\binom{m}{q_i}}{\binom{n-m+q_i}{q_i}}
	\beta_{A_1\dots A_r\,(\mu^1)_{q_1}\phantom{:}\dots\phantom{:}(\mu^i)_{q_i}\phantom{:}\dots\phantom{:}(\mu^r)_{q_r}(\mu)^{Q_r}_m}^{\phantom{A_1\dots A_r\,}(\lambda^1)_{q_1}:\dots:(\mu^i)_{q_i}:\dots:(\lambda^r)_{q_r}}
	\delta^{(\mu^1)_{q_1}\dots(\lambda^i)_{q_i}\dots(\mu^r)_{q_r}(\mu)^{Q_r}_m}_{(\nu^1)_{q_1}\dots(\nu^i)_{q_i}\dots(\nu^r)_{q_r}(\nu)^{Q_r}_m}
	\nonumber
	\een
	We can more specify coefficients $\beta_{A_1\dots A_r\,(\nu^1)_{q_1}\phantom{:}\dots\phantom{:}(\nu^r)_{q_r}(\nu)^{Q_r}_m}^{\phantom{A_1\dots A_r\,}(\lambda^1)_{q_1}:\dots:(\lambda^r)_{q_r}}$ if we realize that
	\be
	\xcancel{\sum\limits_{(\lambda^1)_{q_1}\dots(\lambda^r)_{q_r}}}
	\frac{\partial^r \alpha^r_{(\lambda^1)_{q_1}\dots(\lambda^r)_{q_r}(\lambda)^{Q_r}_m}}{\partial z^{A_1}_{(\lambda^1)_{q_1}}\dots \partial z^{A_r}_{(\lambda^r)_{q_r}}}=
	\xcancel{\sum\limits_{(\lambda^1)_{q_1}\dots(\lambda^r)_{q_r}}}
	\beta_{A_1\dots A_r\,(\lambda^1)_{q_1}\phantom{:}\dots\phantom{:}(\lambda^r)_{q_r}(\lambda)^{Q_r}_m}^{\phantom{A_1\dots A_r\,}(\lambda^1)_{q_1}:\dots:(\lambda^r)_{q_r}}=
	\Gamma_{A_1\dots A_r\,(\lambda)^{Q_r}_m}((\lambda)_m)	
	\ee	
	does not depend on $z^A_{(\lambda)_{q_A}}$. On the other hand at this moment we do not know whether $\Gamma$ term is same for all $(\lambda)_m$ or not, hence we write this dependence explicitly.
	We have (no silent multi-index summation!)
	\ben
	\beta_{A_1\dots A_r\,(\nu^1)_{q_1}\phantom{:}\dots\phantom{:}(\nu^r)_{q_r}(\nu)^{Q_r}_m}^{\phantom{A_1\dots A_r\,}(\lambda^1)_{q_1}:\dots:(\lambda^r)_{q_r}}=
	\frac{m!}{(Q_r-m)!}
	\sum\limits_{(\lambda)^{Q_r}_m}
	\delta_{(\nu^1)_{q_1}\dots(\nu^r)_{q_r}(\nu)^{Q_r}_m}^{(\lambda^1)_{q_1}\dots(\lambda^r)_{q_r}(\lambda)^{Q_r}_m}\Gamma_{A_1\dots A_r\,(\lambda)^{Q_r}_m}((\nu)_m).
	\label{AUX_simple_differentiability_theorem_4}
	\een
	Now, if we enter (\ref{AUX_simple_differentiability_theorem_4}) into (\ref{AUX_simple_differentiability_theorem_3}) we find out
	\be
	\Gamma_{A_1\dots A_r\,(\nu)^{Q_r}_m}((\nu)_m)=\Gamma_{A_1\dots A_r\,(\nu)^{Q_r}_m}.
	\ee
\qed
\end{proof}
\subsection{$d$-jet Bundles}
\hs Graded manifold $\Y$ plays role of the basic playground where all possible physical configurations or states are given by set of all possible finite families of local smooth sections $\{\sect_i\}_{i\in I}$, i.e. maps $\sect_i:\DD_{\sect_i}\subset\M\to\Y$, $\proj{\M}{\Y}\circ\sect_i=\id$, with suitable overlapping conditions on $\DD_{\sect_i}\cap\DD_{\sect_j}$ and covering definition ranges $\Union\limits_{i\in I}\DD_{\sect_i}=\M$. We denote the set of local sections by $\Sec_0(\M,\Y)$ while $\Sec(\M,\Y)$ is used for the set of global sections,i.e. set of those mappings $\sect\in\Sec_0(\M\,\Y)$ having definition range $\DD_{\sect}$ equal to $\M$. Local section $\sect\in\Sec_0(\M,\Y)$ defines a submanifold $\sect(\DD_{\sect})\subset\Y$ given locally by $\big(x^{\mu}(\xxx),\y^A_{\sect}(\xxx)\big)$. Two sections $\sect$, $\sect '\in \Sec_0(\M,\Y)$ are equivalent $\sect\simx{\xxx}\sect '$ in the point $\xxx\in\DD_{\sect}\cap\DD_{\sect '}\subset\M$ if $\sect(\xxx)=\sect '(\xxx)$, i.e. $\big(x^{\mu}(\xxx),\y^{A}_{\sect}(\xxx)\big)=\big(x^{\mu}(\xxx),\y^{A}_{\sect '}(\xxx)\big)$, and 
$\dd\y^{A}_{\sect}(\xxx)=\dd\y^{A}_{\sect '}(\xxx)$, where $\dd$ is the exterior derivative on $\M$. If the equivalence class of $\sect(\xxx)$ is denoted $\big[\sect\big]_{\xxx}$ then the set defined by
\be
\dJY:=\bigcup_{\substack{\xxx\in\M\\\sect\in\Sec_0(\M,\Y)}}\big[\sect\big]_{\xxx}
\ee
is the manifold which can be locally described by coordinates $(x^{\mu}, \y^A,\vv^A)$ where $\vv^A$, the generalized velocities of $\y^A$, are $(q+1)$-forms on $\M$. The projection $\proj{\Y}{\dJY}:\dJY\to\Y$, given by $\proj{\Y}{\dJY}:(x^{\mu}, \y^A,\vv^A)\mapsto(x^{\mu}, \y^A)$, turns $\dJY$ into the fibre bundle and the chain $ \dJY\stackrel{\proj{\Y}{\dJY}}{\longrightarrow}\Y\stackrel{\proj{\M}{\Y}}{\longrightarrow}\M$ is called d-jet triple.\\
\hs Let $\sect\in\Sec_0(\M,\Y)$ then $\djet\sect:\M\to\dJY$ defined by 
\be\djet\sect:\xxx\mapsto\big[\sect\big]_{\xxx}=\big(x^{\mu}(\xxx),\y^A_{\sect}(\xxx),\dd\y^A_{\sect}(\xxx)\big)\ee
is a section on the bundle $\proj{\M}{\Y}\circ\proj{\Y}{\dJY}=\proj{\M}{\dJY}:\dJY\to\M$ called the d-jet prolongation of $\sect$. Section of the bundle $\proj{\M}{\JY}:\JY\to\M$ is called holonomic if it is d-jet prolongation of some section of the bundle $\proj{\M}{\Y}:\Y\to\M$. If for $\forall A$ we have $q=0$ then d-jet bundle reduces into the first jet bundle. For $q=n$ the d-jet bundle is trivial, since every $\M$ horizontal $(n+1)$-form on $\M$ is trivially vanishing, thus $\dJY\simeq\Y$.\\
\hs Let $\dJY\to\Y\to\M$ and $\dJZ\to\Z\to\Nn$ be two d-jet triples and let there exists a homomorphism between bundles $\Y\to\M$ and $\Z\to\Nn$, i.e. two diffeomorphisms $\bleta_{\Nn\M}:\M\to\Nn$ and $\bleta_{\Z\Y}:\Y\to\Z$ tied by $\bleta_{\Nn\M}\circ\proj{\M}{\Y}=\proj{\Nn}{\Z}\circ\bleta_{\Z\Y}$. If $\sect\in\Sec_0(\M,\Y)$, then $\bleta_{\Z\Y}\circ\sect\circ(\bleta_{\Nn\M})^{-1}$ is a section on $\Z\to\Nn$. If for every point $\forall\xxx\in\M$ and for arbitrary two equivalent sections $\sect\simx{\xxx}\sect '\in\Sec_0(\M,\Y)$ the structural condition 
\ben
\bleta_{\Z\Y}\circ\sect\circ(\bleta_{\Nn\M})^{-1}\simx{\bleta_{\Nn\M}(\xxx)}\,\bleta_{\Z\Y}\circ\sect '\circ(\bleta_{\Nn\M})^{-1}\label{structural_condition}
\een
is satisfied, then it is possible to prolong the map $\bleta_{\Z\Y}$ to the d-jet part of the triples by
\be
\bleta_{\dJZ\dJY}\equiv\djet\bleta_{\Z\Y}:\big[\sect\big]_{\xxx}\mapsto\big[\bleta_{\Z\Y}\circ\sect\circ(\bleta_{\Nn\M})^{-1}\big]_{\bleta_{\Nn\M}(\xxx)}.
\ee
A tri-map $\bleta=\big( \bleta_{\Nn\M}, \bleta_{\Z\Y},\djet\bleta_{\Z\Y}\big):\big(\M, \Y,\JY\big)\to\big(\Nn, \Z,\dJZ\big)$ is called d-jet homomorphism.
 We write $\bleta\in\Hom{\Y}{\Z}$, or simply $\bleta_{\Z\Y}\in\Hom{\Y}{\Z}$. In the case when two $d$-jet triples coincide then the tri-map $\bleta$ is called d-jet automorphism and set of all d-jet authomorphisms is denoted by $\Aut{\Y}$. In general, it may happen that the inverse maps $(\bleta^{-1}_{\Nn,\M},\bleta_{\Z\Y}^{-1})$ are not d-jet homomorphism, see example given by transformations (\ref{non_d_jet_transformation}) and (\ref{canonical_momenta_ECT}). In the case when the inverse maps $(\bleta^{-1}_{\Nn,\M},\bleta_{\Z\Y}^{-1})$ are also d-jet homomorphism then the tri-map $\bleta=\big( \bleta_{\Nn\M}, \bleta_{\Z\Y},\djet\bleta_{\Z\Y}\big)$ is called d-jet diffeomorphism and set of all d-jet diffeomorphisms is denoted by $\dDiff{\Y}$.\\

\hs Let $\bleta=\big( \bleta_{\Nn\M}, \bleta_{\Z\Y},\djet\bleta_{\Z\Y}\big)\in\Hom{\Y}{\Z}$ be d-jet homomorphism from $\Y$ to $\Z$ then by definition $\bleta_{\Z\Y}$ is diffeomorphism, i.e. infinitely many times differentiable map, but it does not mean that $\bleta_{\Z\Y}$ is (in)finitely many times simply differentiable. Following theorem says how general d-jet homomorphism can be characterised within the notion of the simple differentiation.
\begin{theorem}\label{Theorem_d_jet_homomorphism_extension}
	Bundle homomorphism $(\bleta_{\Nn\M},\bleta_{\Z\Y})$ can be extended into d-jet homomorphism if and only if
	\be
	&\bleta_{\Nn\M}:&\big(x^{\mu}\big)\mapsto\big(\bar{x}^{\bar{\mu}}=(\bleta_{\Nn\M})^{\bar{\mu}}(x^{\mu})\big),\\
	&\bleta_{\Z\Y}:&\big(x^{\mu},\y^A)\mapsto
	\Big(\bar{x}^{\bar{\mu}}=\big(\bleta_{\Nn\M})^{\bar{\mu}}(x^{\mu}\big), \bar{\y}^{\bar{A}}=\big(\bleta^{-1}_{\Nn\M}\big)^*\overline{\Y}^{\bar{A}}(x^{\mu},\y^A)\Big),
	\ee
	where each $\overline{\Y}^{\bar{A}}\in\Sec(\Y,\forms\Y)$ is simply differentiable $\M$-horizontal form on $\Y$.
\end{theorem}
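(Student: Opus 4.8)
The plan is to reduce the whole statement to a single equivalence: the pair $(\bleta_{\Nn\M},\bleta_{\Z\Y})$ extends to a d-jet homomorphism if and only if each $\overline{\Y}^{\bar A}$ is simply differentiable, after first pinning down the coordinate shape of $\bleta_{\Z\Y}$. Since $\bleta_{\Z\Y}$ is by hypothesis a bundle homomorphism covering $\bleta_{\Nn\M}$, in local coordinates it must send $(x^\mu,\y^A)$ to $\big(\bar x^{\bar\mu}=(\bleta_{\Nn\M})^{\bar\mu}(x),\bar\y^{\bar A}\big)$ with $\bar\y^{\bar A}$ a $q_{\bar A}$-form at $\bar\xxx=\bleta_{\Nn\M}(\xxx)$ depending only on $(x^\mu,\y^A)$. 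I would set $\overline{\Y}^{\bar A}:=\bleta_{\Nn\M}^*(\bar\y^{\bar A}\circ\bleta_{\Z\Y})$, so that $\bar\y^{\bar A}=(\bleta^{-1}_{\Nn\M})^*\overline{\Y}^{\bar A}(x^\mu,\y^A)$ holds by construction; then $\overline{\Y}^{\bar A}$ is a $q_{\bar A}$-form on $\Y$, and it is $\M$-horizontal because $\bar\y^{\bar A}$ is a fibre-form coordinate on $\Z$ (built from the $\dd\bar x^{\bar\mu}$ only) and pullback by a bundle homomorphism preserves horizontality. Thus the only genuinely restrictive clause remaining in the statement is the \emph{simple differentiability} of $\overline{\Y}^{\bar A}$, and everything reduces to showing this is equivalent to the structural condition (\ref{structural_condition}).

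The bridge between the two is the behaviour of $\overline{\Y}^{\bar A}$ under pullback along a section. For $\sect\in\Sec_0(\M,\Y)$ the pushed-forward section $\bar\sect=\bleta_{\Z\Y}\circ\sect\circ(\bleta_{\Nn\M})^{-1}$ has fibre coordinate $\bar\y^{\bar A}_{\bar\sect}=(\bleta^{-1}_{\Nn\M})^*(\sect^*\overline{\Y}^{\bar A})$, and since $\dd$ commutes with pullback, $\dd\bar\y^{\bar A}_{\bar\sect}=(\bleta^{-1}_{\Nn\M})^*\dd(\sect^*\overline{\Y}^{\bar A})$. Because $\bleta_{\Nn\M}$ is a fixed diffeomorphism, the structural condition (\ref{structural_condition}) — i.e. $\sect\simx{\xxx}\sect'\Rightarrow\bar\sect\simx{\bar{\xxx}}\bar\sect'$ — is therefore equivalent to the assertion that both $\sect^*\overline{\Y}^{\bar A}(\xxx)$ and $\dd(\sect^*\overline{\Y}^{\bar A})(\xxx)$ depend on $\sect$ only through the d-jet data $\big(\y^A_\sect(\xxx),\dd\y^A_\sect(\xxx)\big)$. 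The value $\sect^*\overline{\Y}^{\bar A}(\xxx)$ always passes this test, since horizontality forces its coefficients to be $\overline{\Y}^{\bar A}_{(\mu)_{q_{\bar A}}}\big(\xxx,\y^A_\sect(\xxx)\big)$, reading off only $\y^A_\sect(\xxx)$. So the whole matter rests on the derivative term $\dd(\sect^*\overline{\Y}^{\bar A})(\xxx)$.

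For that term I would expand by the chain rule: writing $\overline{\Y}^{\bar A}=\overline{\Y}^{\bar A}_{(\mu)_q}(x,y^B_{(\lambda)})\,\dd x^{(\mu)_q}$ with $q=q_{\bar A}$, one gets $\dd(\sect^*\overline{\Y}^{\bar A})=\big(\partial_\nu\overline{\Y}^{\bar A}_{(\mu)_q}+\frac{\partial\overline{\Y}^{\bar A}_{(\mu)_q}}{\partial y^B_{(\lambda)}}\,\partial_\nu y^B_{(\lambda),\sect}\big)\dd x^\nu\wedge\dd x^{(\mu)_q}$. The first summand depends only on $\y^A_\sect(\xxx)$, but the second contains the \emph{full} partial derivatives $\partial_\nu y^B_{(\lambda),\sect}$, whereas the jet datum $\dd\y^B_\sect$ records only their antisymmetrisation $\partial_{[\nu}y^B_{(\lambda)]}$. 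Hence $\dd(\sect^*\overline{\Y}^{\bar A})$ is a function of $\dd\y^B_\sect$ alone precisely when the remaining, purely symmetric, part of $\partial_\nu y^B_{(\lambda),\sect}$ drops out, and this is exactly the requirement that the variation of $\overline{\Y}^{\bar A}$ be expressible as $\delta\y^B\wedge\frac{\partial^L\overline{\Y}^{\bar A}}{\partial\y^B}$, i.e. simple differentiability in the sense of (\ref{definition_simple_derivative}).

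I therefore plan to prove the two implications separately. For ``simply differentiable $\Rightarrow$ structural condition'' I would establish the clean identity $\dd(\sect^*\overline{\Y}^{\bar A})=\sect^*(\dd_\Hor\overline{\Y}^{\bar A})+\dd\y^B_\sect\wedge\sect^*\frac{\partial^L\overline{\Y}^{\bar A}}{\partial\y^B}$, whose right-hand side manifestly sees $\sect$ only through $\y^A_\sect(\xxx)$ and $\dd\y^B_\sect(\xxx)$, using the normalisation (\ref{n_form_times_components_of_r_form_formula}) to match components. The converse is the harder direction and the main obstacle: assuming the structural condition, I must rule out any symmetric-derivative dependence, which I would do by constructing, whenever $\overline{\Y}^{\bar A}$ fails to be simply differentiable, an explicit pair $\sect\simx{\xxx}\sect'$ — same value and same $\dd\y^A$ at $\xxx$, but differing symmetric derivatives — on which $\dd(\sect^*\overline{\Y}^{\bar A})$ takes different values. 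The polynomial classification of Theorem \ref{Theorem_on_simple_differentiability} guarantees such a section exists exactly in the non-simply-differentiable case, and it simultaneously confirms that top-degree forms ($q_{\bar A}=n$) impose no condition, matching the triviality $\dJY\simeq\Y$ noted there.
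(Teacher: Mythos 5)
Your setup coincides with the paper's: the same coordinate normal form for the bundle homomorphism, the same definition of $\overline{\Y}^{\bar A}$ by pullback, and the same identification that everything hinges on whether $\dd(\sect^*\overline{\Y}^{\bar A})$ sees only the antisymmetrised part of $\partial_\nu y^B_{(\lambda),\sect}$. Your easy direction (simple differentiability $\Rightarrow$ structural condition), via the identity $\dd(\sect^*\overline{\Y}^{\bar A})=\sect^*\big(\dd x^\nu\wedge\lder{\partial_\nu}\overline{\Y}^{\bar A}\big)+\dd\y^B_{\sect}\wedge\sect^*\frac{\partial^L\overline{\Y}^{\bar A}}{\partial\y^B}$, is correct and is in fact the implication the paper leaves to the reader, so that part is fine and even more explicit than the published argument.

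The gap is in the hard direction. Your contrapositive plan needs the following pointwise statement: if $\overline{\Y}^{\bar A}$ fails to be simply differentiable at some $(x,\y)$, then there exists an array $c^B_{(\mu)_q,\nu}$ with vanishing antisymmetrisation $c^B_{[(\mu)_q,\nu]}=0$ which is nevertheless detected by the chain-rule map $c\mapsto\frac{\partial\bar{Y}^{\bar A}_{(\lambda)_{\bar q}}}{\partial y^B_{(\mu)_q}}\,c^B_{(\mu)_q,\nu}\,\dd x^{\nu}\wedge\dd x^{(\lambda)_{\bar q}}$. (Realising such a $c$ by a pair of equivalent polynomial sections is indeed trivial; that is not the issue.) Equivalently, you must show: if this map annihilates every symmetric array, then the partial derivatives factor as $\frac{\partial\bar{Y}^{\bar A}_{(\lambda)_{\bar q}}}{\partial y^B_{(\mu)_q}}=\binom{\bar q}{q}\,\delta^{(\mu)_q}_{[(\lambda)_q}Q^{\bar A}_{B(\lambda)^{q}_{\bar q}]}$ (or vanish, with $q\le\bar q$), which is precisely simple differentiability at that point. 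This combinatorial factorisation is the actual core of the paper's proof of Theorem \ref{Theorem_d_jet_homomorphism_extension}, and it is exactly what you do not supply: you delegate it to Theorem \ref{Theorem_on_simple_differentiability}, but that theorem runs in the wrong direction — its polynomial classification is \emph{derived from} simple differentiability as a hypothesis, and nothing in it says that failure of simple differentiability forces a symmetric-derivative dependence of the chain-rule term. As written, your sentence ``the polynomial classification guarantees such a section exists exactly in the non-simply-differentiable case'' asserts the very equivalence to be proved. The fix is to prove the factorisation lemma directly (as the paper does, by exploiting that the wedge with $\dd x^{\nu}$ couples different $\nu$-slices of $c$ and contracting with suitably chosen symmetric arrays); once that lemma is in hand, your contrapositive and the paper's direct Taylor-expansion argument become two phrasings of the same proof.
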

\begin{proof}
	Let $(x^{\mu},\y^A=y^A_{(\mu)_q}\dd x^{(\mu)_q})$ or $(\bar{x}^{\bar{\mu}},\bar{\y}^{\bar{A}}=\bar{y}^{\bar{A}}_{(\bar{\mu})_{\bar{q}}}\dd \bar{x}^{(\bar{\mu})_{\bar{q}}})$ be local coordinates on bundle $\Y\to\M$ or $\Z\to\Nn$, respectively, then $(\bleta_{\Nn\M},\bleta_{\Z\Y})$ is bundle homomorphism if and only if it looks like
	\be
	&\bleta_{\Nn\M}:&\big(x^{\mu}\big)\mapsto\big((\bleta_{\Nn\M})^{\bar{\mu}}(x^{\mu})\big),\\
	&\bleta_{\Z\Y}:&\big(x^{\mu},y^A_{(\mu)_q})\mapsto
	\Big(\big(\bleta_{\Nn\M})^{\bar{\mu}}(x^{\mu}\big), \Z^{\bar{A}}\big(x^{\mu},y^A_{(\mu)_q}\big)\Big).
	\ee
	Now, let $\sect\in\Sec_0(\M,\Y)$ be arbitrary section given locally $\sect(\xxx)=\big(x^{\mu}(\xxx),\y^A_{\sect}(\xxx)\big)\equiv\Big(x^{\mu},y^A_{(\mu)_q}\big(x^{\mu}\big)\Big)$, then section on $\Z\to\Nn$ is given by
	\be
	\bleta_{\Z\Y}\circ\sect\circ(\bleta_{\Nn\M})^{-1}(\bar{\xxx})&=&
	\bigg(
	\bar{x}^{\bar{\mu}}(\bar{\xxx}), Z^{\bar{A}}_{(\bar{\mu})_{\bar{q}}}\Big(x^{\mu}(\bar{\xxx}),y^A_{(\mu)_q}\big(x^{\mu}(\bar{\xxx})\big)\Big)
	\bigg),\\
	&=&
	\big(
	\bar{x}^{\bar{\mu}}(\bar{\xxx}), \Z^{\bar{A}}_{\sect}(\bar{\xxx})\big),
	\ee
	where $x^{\mu}(\bar{\xxx})=\big(\bleta_{\Nn\M}^{-1}\big)^{\mu}(\bar{\xxx})$ for short. Now, if we set $\overline{\Y}^{\bar{A}}=(\bleta_{\Nn\M})^*\Z^{\bar{A}}$ and calculate $\dd\Z^{\bar{A}}_{\sect}$ then we get
	\be
	(\bleta_{\Nn\M})^*\dd\Z^{\bar{A}}_{\sect}=\dd\overline{\Y}^{\bar{A}}_{\sect}=
	\bar{Y}^{\bar{A}}_{(\lambda)_{\bar{q}},\nu}\dd x^{\nu}\wedge\dd x^{(\lambda)_{\bar{q}}}+
	\frac{\partial\bar{Y}^{\bar{A}}_{(\lambda)_{\bar{q}}}}{\partial y^A_{(\mu)_q}}y^A_{(\mu)_q,\nu}\dd x^{\nu}\wedge\dd x^{(\lambda)_{\bar{q}}}.
	\ee
	Structural condition (\ref{structural_condition}) yields that $\dd\Z^{\bar{A}}_{\sect}$ should depends only on antisymmetric part of $y^A_{(\mu)_q,\nu}$. Since index $\nu$ is summarized with $\dd x^{\nu}$ we have that the last term should looks like
	\be
	\frac{\partial\bar{Y}^{\bar{A}}_{(\lambda)_{\bar{q}}}}{\partial y^A_{(\mu)_q}}y^A_{(\mu)_q,\nu}\dd x^{\nu}\wedge\dd x^{(\lambda)_{\bar{q}}}
	=y^A_{(\mu)_q,\nu}\dd x^{\nu}\wedge\dd x^{(\mu)_q}\wedge Q^{\bar{A}}_{A(\lambda)^{\bar{q}}_q}\dd x^{(\lambda)^{\bar{q}}_q}.
	\ee
	This condition is satisfied if only if $q\le\bar{q}$ and
	\be
	\frac{\partial\bar{Y}^{\bar{A}}_{(\lambda)_{\bar{q}}}}{\partial y^A_{(\mu)_q}}=
	\binom{\bar{q}}{q}\delta^{\phantom{[}(\mu)_q\phantom{\bar{A}]}}_{[(\lambda)_q\phantom{A}}Q^{\bar{A}}_{A(\lambda)^{q}_{\bar{q}}]}
	\ee
	or $\frac{\partial\bar{Y}^{\bar{A}}_{(\lambda)_{\bar{q}}}}{\partial y^A_{(\mu)_q}}=0$. Using this formula in Taylor expansion of 
	$\bar{\Y}^{\bar{A}}$ yields immediately that $\bar{\Y}^{\bar{A}}$ is simply differentiable. Proof of opposite implication is straightforward, hence we leave it for reader.
	
\qed\end{proof}
\subsection{Duals of $d$-jet bundles}
Let us introduce a dual of the d-jet bundle. It plays an important role in the covariant hamiltonian formalism. It can be seen that $\vv^A$-part of coordinates on the d-jet bundle transforms for general d-jet diffeomorphism in affine way. In other words $\vv^A$  behaves as point in affine space rather than vector. So, if one wants to talk about dual of $\vv^A$-space then one must consider its affine dual. For the purposes of the covariant hamiltonian formalism it is the most suitable to consider affine maps to the space $\forms^n\M$ of $n$-forms on $\M$. Using (\ref{n_form_times_components_of_r_form_formula}) we see that these can be labeled by $(n-q-1)$-forms $\p_A$ and $n$-form $\h$ where the affine map $\aff$ is given by
\be
\aff(\vv^A)=(-1)^{n-q}\p_A\wedge\vv^A+\h.
\ee
The factor $(-1)^{n-q}$ is chosen in such way that the local covariant Poisson bracket constructed in the next section among canonically conjugated coordinates $\y^A,\p_B$ is equal to identity, see lemma \ref{Lemma_on_general_smeared_canonical_observable}. So, the dual $\dJY^*$ of the d-jet bundle $\dJY$ is defined as a fibre bundle $\proj{\Y}{\dJY^*}:\dJY^*\to\Y$ locally described by coordinates as
$(x^{\mu},\; \y^A,\; \p_A,\; \h)$.\\
\hs There can be constructed a couple of forms on the d-jet dual $\dJY^*$. The first canonical $n$-form, called the kinema\-tical Cartan-Poincar\' e form,  is defined as
\ben\label{mef_dual}
\kmef=(-1)^{n-q}\p_A\wedge\dd\y^A+\h
\een
and the second canonical $(n+1)$-form, called the kinematical multisymplectic form, is given by
\ben\label{msf_dual}
\kmsf=-\dd\kmef=(-1)^{n-q-1}\dd\p_A\wedge\dd\y^A-\dd\h.
\een
The word 'multisymplectic' means following. Let $(\Fibre,\kmsf)$ be a pair of manifold equipp\-ed with nondegenerate, i.e. mapping $\vv\in\Tan\Fibre\mapsto i_{\vv}\kmsf$ has only trivial kernel, closed $(n+1)$-form $\kmsf$ on $\Fibre$, $n<\dim \Fibre$, then the pair $(\Fibre,\kmsf)$ is called the multisymplectic manifold. We assume that $\kmsf$ is also exact, i.e. there exists the $n$-form $\mef$, such that $\kmsf=-\dd\mef$. It is obvious that if $n=1$ (spacetime is only time), then the definition reduces to the usual definition of the symplectic manifold. 
\subsection{Covariant Legendre Transformation}
In mechanics the Legendre transformation  relates the velocities with the canonical momenta and the Lagrangian with the Hamiltonian. Such transformation is driven by the Lagrange function. In the multisymplectic context it is similar. Lagrange function $\LL:(x^{\mu},\y^A,\vv^A)\mapsto\LL(x^{\mu},\y^A,\vv^A)\in\forms^n\M$ is a map from the d-jet bundle $\dJY$ to the space of $n$-forms on $\M$. Let $\bgm$, $\bgm_0\in\dJY_{\y}:=(\proj{\Y}{\dJY})^{-1}(\y)$ then the affine approximation of the Lagrangian in the point $\bgm_0$ on $\dJY_{\y}$
\be
\LL(\bgm)\simeq\LL(\bgm_0)+\frac{\partial^R\LL}{\partial\,\vv^A}(\bgm_0)\wedge(\ww^A-\vv^A),
\ee
where $\bgm=(x^{\mu},\y^A,\ww^A)$ and $\bgm_0=(x^{\mu},\y^A,\vv^A)$, defines an element of the d-jet dual $\dJY^*$ by
\be
-\LL(\bgm)\simeq-\LL-\frac{\partial^R\LL}{\partial\,\vv^A}\wedge(\ww^A-\vv^A)=(-1)^{n-q}\p_A\wedge\ww^A+\h.
\ee
So, the Legendre transformation $\LTr$ is a map from $\dJY$ to its d-jet dual $\dJY^*$ given by
\ben
\LTr:(x^{\mu},\y^A,\vv^A)\mapsto\bigg(x^{\mu},\y^A,\p_A=(-1)^{n-q-1}\frac{\partial^R\LL}{\partial\,\vv^A},
\h=\frac{\partial^R\LL}{\partial\,\vv^A}\wedge\vv^A-\LL\bigg).\label{Legendre_transformation}
\een
Let $\Phase\subset\dJY^*$ be a submanifold of the d-jet dual defined as an image of the Legendre transformation, i.e. $\Phase:=\LTr(\dJY)$, and let $\LTr_{\Phase}$ be $\LTr$ considered as a map onto $\Phase$. If $\LTr_{\Phase}:\dJY\to\Phase$ is a diffeomorphism then the Lagrangian is called regular and then there also exists its inverse $\LTr_{\Phase}^{-1}:\Phase\to\dJY$ which allows to express the generalized velocities $\vv^A$ as functions of $x^{\mu},\y^A,\p_A$. The fibre bundle $\proj{\Y}{\Phase}:\Phase\to\Y$ is called the dynamical phase bundle, or simply the phase bundle. The phase bundle $\Phase$ is locally described by the coordinates $(x^{\mu},\y^A,\p_A)$, where the coordinates $\p_A$ are called the canonical momenta. From now until the section \ref{section_ECT} we are dealing with regular Lagrangians only.\\
\hs There can be defined the canonical forms in regular case by using the canonical forms on $\dJY^*$ and the Legendre transformation. The (dynamical) Cartan-Poincar\' e $n$-form $\mef$ on $\Phase$ is 
\ben
\mef:=\kmef\vb{\Phase}=(-1)^{n-q}\p_A\wedge\dd\y^A+\HH,\label{mef_dynamical}
\een
where $\M$-horizontal map $\HH:\Phase\to\forms^n\Phase$ is the Hamiltonian defined by
\ben
\HH\circ\LTr=\frac{\partial^R\LL}{\partial\,\vv^A}\wedge\vv^A-\LL.\label{general_hamiltonian}
\een
The (dynamical) multisymplectic $(n+1)$-form is given in a standard way as
\ben
\msf:=-\dd\mef=(-1)^{n-q-1}\dd\p_A\wedge\dd\y^A-\dd\HH.\label{msf_dynamical}
\een

\subsection{Equations of Motion, Symmetry Group and Noether's Charges}
\hs The main goal of this subsection is to derive the covariant Hamilton equations of motion from the variational action principle, find the relation between them and the symmetry group of the system. This relation rarely exists, so the sufficient conditions should be derived. It will be shown in the next section that in the case of the Einstein-Cartan theory the conditions are satisfied. From now due to simplicity we suppose that the bundle $\Y\to\M$ is trivial. Therefore $\JY\to\M$, $\dJY^*\to\M$ and $\Phase\to\M$ are also trivial. Let $\sect\in\Sec(\M,\Phase)$ be the section of the phase bundle $\Phase\to\M$. Then the Lagrangian evaluated on the section $\sect$ is given by the pullback of the 
Cartan-Poincar\' e $n$-form $\mef$
\ben
\LL(x^{\mu},\y^A_{\sect},\p_A^{\sect},\dd\y^A_{\sect})=-\sect^*(\mef)=(-1)^{n-q-1}\p^{\sect}_A\wedge\dd\y^A_{\sect}-
\HH(x^{\mu},\y^A_{\sect},\p_A^{\sect}).\label{Lagrangian_pullback_of_CP_form}
\een
The action of the system in the state given by the section $\sect$ on the spacetime interval $\M_{\Ii}=\cup_{t\in \Ii}\BSigma_t$, where $\Ii=\langle t_{ini},\,t_{fin}\rangle$, is given by
\ben
\Saction_{\Ii}(\sect)=\int\limits_{\M_{\Ii}}-\sect^*(\mef)\label{action_integral}.
\een
There exists a following theorem\cite{Momentum_maps}
\begin{theorem}\label{theorem_eom_msf}
	Section $\sect\in\Sec(\M,\Phase)$ is a stationary point of the action integral (\ref{action_integral}) if and only if
	for all vector fields $\vv\in\Sec(\Phase,\Tan\Phase)$ on the phase bundle $\Phase$ the following expression is satisfied
	\ben
	\sect^*(i_{\vv}\msf)=0.\label{msf_vanishinig}
	\een
\end{theorem}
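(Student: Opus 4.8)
The plan is to read the statement as the first-variation formula for the action \eqref{action_integral}: I would compute $\frac{d}{ds}\big|_{s=0}\Saction_{\Ii}(\sect_s)$ for a one-parameter family of sections generated by $\vv$, rewrite the integrand with Cartan's magic formula and $\msf=-\dd\mef$, discard an exact term by Stokes' theorem, and finally strip the integral sign with the fundamental lemma of the calculus of variations. Note that only the exactness $\msf=-\dd\mef$ is used here; the nondegeneracy built into the multisymplectic structure plays no role in this particular equivalence.

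First I would generate the variation by a flow. Given $\vv\in\Sec(\Phase,\Tan\Phase)$, let $\eta_s$ be its local flow, so that $\eta_0=\id$ and $\frac{d}{ds}\eta_s\big|_{s=0}=\vv$, and set $\sect_s=\eta_s\circ\sect$. Because $\sect$ is an embedding, the action is an integral of the $n$-form $\mef$ over the image submanifold, $\Saction_{\Ii}(\sect)=-\int_{\sect(\M_{\Ii})}\mef$, and deforming the image by $\eta_s$ gives $\Saction_{\Ii}(\sect_s)=-\int_{\M_{\Ii}}\sect^*(\eta_s^*\mef)$. This submanifold viewpoint is what lets the argument accommodate a completely general $\vv$, not merely a vertical one, since I never need $\eta_s\circ\sect$ to remain a section. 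Differentiating at $s=0$ yields
\[
\frac{d}{ds}\Big|_{s=0}\Saction_{\Ii}(\sect_s)=-\int_{\M_{\Ii}}\sect^*(\lder{\vv}\mef).
\]

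Next I would invoke Cartan's formula $\lder{\vv}=i_{\vv}\dd+\dd\,i_{\vv}$ together with $\msf=-\dd\mef$ to get $\lder{\vv}\mef=-i_{\vv}\msf+\dd(i_{\vv}\mef)$. Since pullback commutes with $\dd$, the second term pulls back to an exact $n$-form on $\M_{\Ii}$, and Stokes' theorem turns it into a boundary integral over $\partial\M_{\Ii}=\BSigma_{t_{fin}}-\BSigma_{t_{ini}}$:
\[
\frac{d}{ds}\Big|_{s=0}\Saction_{\Ii}(\sect_s)=\int_{\M_{\Ii}}\sect^*(i_{\vv}\msf)-\int_{\partial\M_{\Ii}}\sect^*(i_{\vv}\mef).
\]
For the admissible variations, i.e. those $\vv$ that vanish along the Cauchy slices $\BSigma_{t_{ini}},\BSigma_{t_{fin}}$ so that the endpoints are held fixed, the boundary term disappears, and stationarity of the action becomes the condition $\int_{\M_{\Ii}}\sect^*(i_{\vv}\msf)=0$ for every such $\vv$.

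Finally I would pass from the integrated identity to the pointwise one. The value $\big(\sect^*(i_{\vv}\msf)\big)_{\xxx}$ depends on $\vv$ only through $\vv(\sect(\xxx))\in\Tan_{\sect(\xxx)}\Phase$, and linearly in it; since for any interior point $\xxx$ and any prescribed $w\in\Tan_{\sect(\xxx)}\Phase$ one can choose a boundary-vanishing $\vv$ supported near $\xxx$ with $\vv(\sect(\xxx))=w$, the fundamental lemma forces $\sect^*(i_{\vv}\msf)=0$ at every interior point, and continuity extends this to all of $\M_{\Ii}$ and to every $\vv\in\Sec(\Phase,\Tan\Phase)$. The converse is immediate, since if $\sect^*(i_{\vv}\msf)=0$ for all $\vv$ then in particular the first variation vanishes for all admissible $\vv$. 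I expect the real work to sit in this last step: justifying the localisation and the fundamental lemma in the bundle setting, verifying that the integrand depends on $\vv$ only through its pointwise value, and confirming that the boundary contribution is genuinely killed by the admissibility condition rather than by a hidden appeal to the equations of motion.
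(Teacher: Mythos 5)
Your proof is correct, but it takes a genuinely different route from the paper's. The paper works in coordinates: it first varies the section components $\deltaup\y^A,\deltaup\p_A$ directly to extract the covariant Hamilton equations (\ref{Hamilton_equations}), and then establishes the equivalence with $\sect^*(i_{\vv}\msf)=0$ through the explicit algebraic identity (\ref{proof_useful_relation}), which decomposes $i_{\vv}\msf$ into terms proportional to the two Hamilton-equation factors plus an $i_{\bxi}$ cross-term that is quadratic in those factors and a purely $\M$-horizontal $(n+1)$-form that vanishes identically on the $n$-dimensional $\M$. Your intrinsic argument --- flow-generated deformation of the image submanifold, Cartan's formula $\lder{\vv}\mef=-i_{\vv}\msf+\dd\, i_{\vv}\mef$, Stokes, fundamental lemma --- is shorter, coordinate-free, and makes transparent that only exactness of $\msf$ is used; but it never exhibits the Hamilton equations, which the paper needs in coordinate form for the later Noether theorems \ref{Neother_theorem}--\ref{Converse_Noether_theorem_integral_version}, so the paper's heavier computation is not wasted effort. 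One point in your argument deserves an explicit line rather than the remark that you ``never need $\eta_s\circ\sect$ to remain a section'': since stationarity is naturally defined over variations of \emph{sections} (vertical deformations with fixed boundary data), you must reconcile this with your derivative along arbitrary flows. The reconciliation is standard but should be stated: any $\vv$ along $\sect(\M_{\Ii})$ splits as a part tangent to the image plus a vertical part, and for tangential $\ww$ one has $\sect^*(i_{\ww}\msf)=0$ identically, because it evaluates the antisymmetric $(n+1)$-form $\msf$ on $n+1$ vectors lying in the $n$-dimensional tangent space of the image. With that line added, your forward direction follows from vertical stationarity alone, and this tangential-degeneracy observation plays exactly the role that the vanishing of the last two terms of (\ref{proof_useful_relation}) plays in the paper's proof.
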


\begin{proof}
	A variation of the action integral (\ref{action_integral}) is, where $\sect+\deltaup\sect=\big(x^{\mu}(\xxx),\y^A_{\sect}(\xxx)+\deltaup\y^A(\xxx),\p^{\sect}_A(\xxx)+\deltaup\p_A(\xxx)\big)$ means small variation of the section $\sect$,
	\be
	\deltaup\Saction_{\Ii}(\sect)
	=\int\limits_{\M_{\Ii}}\left[
	\deltaup\p_A\wedge\sect^* \left((-1)^{n-q-1}\dd\y^A-\frac{\partial^L\HH}{\partial\,\p_A}\right)+
	\sect^*\left(-\,\dd\p_A-\frac{\partial^R\HH}{\partial\,\y^A}\right)\wedge\deltaup\y^A
	\right].
	\ee
	The stationarity condition of (\ref{action_integral}) yields the covariant Hamilton equations
	\ben
	0&=&\,\sect^* \,\left(\;(-1)^{n-q-1}\dd\y^A\,-\,\frac{\partial^L\HH}{\partial\,\p_A}\,\right)\;=\;
	(-1)^{n-q-1}\dd\y^A_{\sect}(\xxx)-\frac{\partial^L\HH}{\partial\,\p_A}\big(x^{\mu}(\xxx),\y^A_{\sect}(\xxx),\p^{\sect}_A(\xxx)\big),\nonumber\\
	\label{Hamilton_equations}\\
	0&=&\sect^*\left(-\,\dd\p_A-\frac{\partial^R\HH}{\partial\,\y^A}\right)=
	-\,\dd\p^{\sect}_A(\xxx)-\frac{\partial^R\HH}{\partial\,\y^A}\big(x^{\mu}(\xxx),\y^A_{\sect}(\xxx),\p^{\sect}_A(\xxx)\big).\nonumber
	\een
	Let $\vv\in\Sec(\Phase,\Tan\Phase)$ be arbitrary vector field on $\Phase$. It can be expressed as
	\be
	\vv=\xi^{\mu}\partial_{\mu}+v^A_{(\nu)_q}\partial_{y^A_{(\nu)_q}}+w_{A\,(\nu)^{q+1}}\partial_{p_{A\,(\nu)^{q+1}}}\equiv
	\bxi+\vv^A\partial_{\y^A}+\ww_A\partial_{\p_A}
	\ee
	Action of the interior product of $\vv$ with the dynamical multisymplectic form $\msf$ is given by
	\ben
	i_{\vv}\msf&=&\ww_A\wedge\left((-1)^{n-q-1}\dd\y^A-\frac{\partial^L\HH}{\partial\,\p_A}\right)-
	\left(-\,\dd\p_A-\frac{\partial^R\HH}{\partial\,\y^A}\right)\wedge\vv^A-\nonumber\\
	& &-i_{\bxi}\left\{
	\left(-\,\dd\p_A-\frac{\partial^R\HH}{\partial\,\y^A}\right)\wedge\left((-1)^{n-q-1}\dd\y^A-\frac{\partial^L\HH}{\partial\,\p_A}\right)
	\right\}+\label{proof_useful_relation}\\
	& &+i_{\bxi}\left(\frac{\partial^R\HH}{\partial\,\y^A}\wedge\frac{\partial^L\HH}{\partial\,\p_A}\right)\nonumber
	\een
	The last term is vanishing since the object in the bracket is horizontal $(n+1)$-form over $\M$. The rest is obvious. If the pullback of $i_{\vv}\msf$ along $\sect\in\Sec(\M,\Phase)$ is vanishing for arbitrary $\vv$ then
	\be
	\sect^*(i_{\vv}\msf)=0\;\;\Rightarrow\;\; \sect^*\left((-1)^{n-q-1}\dd\y^A-\frac{\partial^L\HH}{\partial\,\p_A}\right)=0\text{ and }\sect^*\left(-\,\dd\p_A-\frac{\partial^R\HH}{\partial\,\y^A}\right)=0
	\ee
	which means the section $\sect$ must satisfy the Hamilton equations (\ref{Hamilton_equations}). And vice versa if the equations (\ref{Hamilton_equations}) are satisfied then $\sect^*(i_{\vv}\msf)=0$ for arbitrary $\vv\in\Sec(\Phase,\Tan\Phase)$.
\qed\end{proof}
Let  $\GG$ be a certain subgroup of the group of all d-jet diffeomorphisms $\dDiff{\Phase}$ on the phase bundle $\Phase\to\M$ and let for all flows $\bleta_{\Phase}:\GG\times\R\to\GG$ there exists generating vector field $\vv\in\Sec(\Phase,\Tan\Phase)$. Linear span of such vector fields forms Lie algebra $\alg{\GG}\subset\Sec(\Phase,\Tan\Phase)$ of the group $\GG$. The group is called symmetry of the physical system if every element $\bleta_{\Phase}\in\GG$ keeps the dynamical Cartan-Poincar\' e $n$-form $\mef$ invariant
\ben
(\bleta_{\Phase}^{-1})^*\mef=\mef \label{symmetry_invariant_mef}
\een
For arbitrary vector $\vv\in\alg{\GG}$ this condition yields
\ben
\lder{\vv}\mef=0 \label{symmetry_invariant_mef_lie_derivative}.
\een
Let $\Q\in\Sec(\Phase,\forms^{n-1}\Phase)$ be a $(n-1)$-form on $\Phase$ and let there exists a vector field $\vv\in\Sec(\Phase,\Tan\Phase)$ such that 
\be
i_{\vv}\msf=\dd\Q,
\ee
then $\vv$ is the vector field associated to the $(n-1)$-form $\Q$. Since for $\forall\sect\in\Sec(\M,\Phase)$
\ben
\sect^*(i_{\vv}\msf)=\dd\sect^*\Q,\label{pullback_of_d_general_observable}
\een
the theorem \ref{theorem_eom_msf} implies that if $\sect$ is the solution of the Hamilton equations (\ref{Hamilton_equations}) then $\dd\sect^*\Q=0$, i.e. $\sect^*\Q$ is a current of conserved quantity. This yields the famous Noether's theorem.
\begin{theorem}\label{Neother_theorem}
	Let $\GG$ be the symmetry group of the physical system and let $\sect$ be the solution of the Hamilton equations (\ref{Hamilton_equations}) then for arbitrary $\vv\in\alg{\GG}$ there exists current $\JJ_{\vv}=(i_{\vv}\mef)$, called Noether's current, such that the equation $\dd\sect^*\JJ_{\vv}=0$ is satisfied.
\end{theorem}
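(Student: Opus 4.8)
The plan is to read $\JJ_{\vv}=i_{\vv}\mef$ as an associated $(n-1)$-form in the sense of the definition preceding (\ref{pullback_of_d_general_observable}), using Cartan's magic formula together with the symmetry hypothesis, and then to apply Theorem \ref{theorem_eom_msf} to the solution $\sect$.

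First I would invoke, for an arbitrary generator $\vv\in\alg{\GG}$, Cartan's identity $\lder{\vv}=i_{\vv}\dd+\dd\,i_{\vv}$ applied to the dynamical Cartan-Poincar\'e form $\mef$. Since $\msf=-\dd\mef$ by (\ref{msf_dynamical}), this reads
\be
\lder{\vv}\mef &=& i_{\vv}\dd\mef+\dd\,i_{\vv}\mef=-\,i_{\vv}\msf+\dd\,\JJ_{\vv},
\ee
with $\JJ_{\vv}=i_{\vv}\mef$ an $(n-1)$-form on $\Phase$. The symmetry condition (\ref{symmetry_invariant_mef_lie_derivative}), namely $\lder{\vv}\mef=0$ --- itself the infinitesimal form of the invariance (\ref{symmetry_invariant_mef}) of $\mef$ under every flow of $\GG$ --- then collapses this to $i_{\vv}\msf=\dd\,\JJ_{\vv}$. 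Thus $\vv$ is precisely the vector field associated to the $(n-1)$-form $\Q=\JJ_{\vv}$, so relation (\ref{pullback_of_d_general_observable}) is available for this pair.

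The conclusion is then immediate from the equations of motion. For the solution $\sect$ of the Hamilton equations (\ref{Hamilton_equations}), relation (\ref{pullback_of_d_general_observable}) gives $\dd\,\sect^*\JJ_{\vv}=\sect^*(i_{\vv}\msf)$, while Theorem \ref{theorem_eom_msf} guarantees that $\sect^*(i_{\vv}\msf)=0$ holds for every vector field on $\Phase$, in particular for $\vv$. Hence $\dd\,\sect^*\JJ_{\vv}=0$, which is the claim.

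I do not anticipate a genuine obstacle, as the argument is a direct assembly of Cartan's formula, the symmetry hypothesis, and Theorem \ref{theorem_eom_msf}. The only points demanding care are the bookkeeping of the sign fixed by the convention $\msf=-\dd\mef$ --- which is exactly what makes the two $i_{\vv}\msf$ contributions cancel correctly --- and the commutation $\sect^*\dd=\dd\,\sect^*$ of the pullback with the exterior derivative; the latter is already absorbed into the derivation of (\ref{pullback_of_d_general_observable}), so it needs no separate treatment.
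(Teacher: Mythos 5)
Your proposal is correct and follows essentially the same route as the paper's proof: Cartan's identity $\lder{\vv}\mef=i_{\vv}\dd\mef+\dd\,i_{\vv}\mef$ combined with the symmetry condition $\lder{\vv}\mef=0$ yields $i_{\vv}\msf=\dd\,\JJ_{\vv}$, after which relation (\ref{pullback_of_d_general_observable}) and Theorem \ref{theorem_eom_msf} give $\dd\,\sect^*\JJ_{\vv}=0$. You merely spell out explicitly the steps the paper compresses into ``this with the comments above proves the theorem,'' and your sign bookkeeping under the convention $\msf=-\dd\mef$ matches the paper's.
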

\begin{proof}
	Condition (\ref{symmetry_invariant_mef_lie_derivative}) implies
	\be
	0=\lder{\vv}\mef=i_{\vv}\dd\mef+\dd i_{\vv}\mef\;\;\;\;\Rightarrow\;\;\;\;-i_{\vv}\dd\mef=i_{\vv}\msf=\dd i_{\vv}\mef
	\ee
	This with the comments above proves the theorem.
\qed\end{proof}
Now, turn our attention to the converse of this theorem. In the case when the symmetry group $\GG$ is sufficiently large, this of course depends on the considered physical system, the answer is in affirmative. Let $\algx{\GG}{\p}=\Span\big\{\vv\vb{\p}\in\Tan_{\p}\Phase;\vv\in\alg{\GG}\big\}$ denote a vector space spanned on vectors of the Lie algebra $\alg{\GG}$ settled in the point $\p\in\Phase$. Group $\GG\subset\dDiff{\Phase}$ is called vertically transitive\cite{Momentum_maps} if $\Ver_{\p}\Phase\subset\algx{\GG}{\p}$ for $\forall\p\in\Phase$.
\begin{theorem}\label{Converse_Noether_theorem_diff_version}
	Let $\GG$ be vertically transitive symmetry group of the system and let for $\sect\in\Sec(\M,\Phase)$ the condition
	\be
	\forall\vv\in\alg{\GG}:\;\dd\sect^*\JJ_{\vv}=0
	\ee
	be satisfied, then $\sect$ is a solution of the Hamilton equations (\ref{Hamilton_equations}).
\end{theorem}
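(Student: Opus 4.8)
The plan is to reduce the assertion, via Theorem \ref{theorem_eom_msf}, to the verification that $\sect^*(i_{\vv}\msf)=0$ for a large enough family of vector fields, and then to use vertical transitivity to pass from the vectors of $\alg{\GG}$ — for which the vanishing is handed to us by hypothesis — to all vertical vector fields on $\Phase$. Once $\sect^*(i_{\vv}\msf)=0$ holds for every vertical $\vv$, the structure of $\msf$ will already force the full Hamilton equations \ref{Hamilton_equations}, which is precisely the conclusion sought.

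First I would restate the hypothesis multisymplectically. For each $\vv\in\alg{\GG}$ the symmetry condition \ref{symmetry_invariant_mef_lie_derivative} gives $\lder{\vv}\mef=0$, and since $\msf=-\dd\mef$ the Cartan formula yields $0=\lder{\vv}\mef=i_{\vv}\dd\mef+\dd i_{\vv}\mef=-i_{\vv}\msf+\dd\JJ_{\vv}$, i.e.\ $i_{\vv}\msf=\dd\JJ_{\vv}$, exactly as in the proof of Theorem \ref{Neother_theorem}. Combined with \ref{pullback_of_d_general_observable} this turns the assumed conservation law into $\sect^*(i_{\vv}\msf)=\dd\sect^*\JJ_{\vv}=0$ for every $\vv\in\alg{\GG}$.

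The key step exploits that the interior product is $C^{\infty}(\Phase)$-linear in its vector argument, so $(i_{\vv}\msf)|_{\p}$ depends only on the value $\vv|_{\p}$. Fix $\xxx\in\M$ and set $\p=\sect(\xxx)$. Vertical transitivity gives $\Ver_{\p}\Phase\subset\algx{\GG}{\p}$, so for an arbitrary vertical field $\vv\in\Sec(\Phase,\Tan\Phase)$ we may write $\vv|_{\p}=\sum_i c_i\,\vv_i|_{\p}$ with $\vv_i\in\alg{\GG}$ and scalars $c_i$ (depending on $\xxx$). Tensoriality then gives $(i_{\vv}\msf)|_{\p}=\sum_i c_i\,(i_{\vv_i}\msf)|_{\p}$, and since the pullback $\sect^*$ acts pointwise-linearly on forms we obtain $\sect^*(i_{\vv}\msf)|_{\xxx}=\sum_i c_i\,\sect^*(i_{\vv_i}\msf)|_{\xxx}=0$, each summand vanishing by the previous paragraph. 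As $\xxx$ is arbitrary, $\sect^*(i_{\vv}\msf)=0$ for every vertical $\vv$.

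Finally I would extract the equations of motion by specialising the computation \ref{proof_useful_relation} to a purely vertical field $\vv=\vv^A\partial_{\y^A}+\ww_A\partial_{\p_A}$ (so $\bxi=0$), which kills both $i_{\bxi}$-terms and leaves
\[
i_{\vv}\msf=\ww_A\wedge\left((-1)^{n-q-1}\dd\y^A-\frac{\partial^L\HH}{\partial\,\p_A}\right)-\left(-\,\dd\p_A-\frac{\partial^R\HH}{\partial\,\y^A}\right)\wedge\vv^A.
\]
Because $\vv^A$ and $\ww_A$ are arbitrary and enter linearly and independently, the vanishing of $\sect^*(i_{\vv}\msf)$ for all vertical $\vv$ forces $\sect^*\big((-1)^{n-q-1}\dd\y^A-\partial^L\HH/\partial\p_A\big)=0$ and $\sect^*\big(-\dd\p_A-\partial^R\HH/\partial\y^A\big)=0$, i.e.\ the Hamilton equations \ref{Hamilton_equations}. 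The step I expect to be the main obstacle is making the transition from \emph{pointwise} spanning to the pulled-back identity fully rigorous: one must ensure that vertical transitivity supplies the needed $\alg{\GG}$-vectors at every point of the image $\sect(\M)$ and that the position-dependent coefficients $c_i(\xxx)$ may legitimately be carried through the tensorial interior product and the pointwise-linear pullback. The concluding algebra — that purely vertical fields already saturate the Hamilton equations — is routine once \ref{proof_useful_relation} is in hand.
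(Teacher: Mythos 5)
Your proposal is correct and follows essentially the same route as the paper's own proof: both use the Noether identity $i_{\vv}\msf=\dd\JJ_{\vv}$, the tensoriality of the interior product together with vertical transitivity to realize arbitrary vertical vectors at each point $\p=\sect(\xxx)$ by elements of $\alg{\GG}$, and the specialization of (\ref{proof_useful_relation}) to $\bxi=0$ to extract the Hamilton equations (\ref{Hamilton_equations}). The obstacle you flag at the end is not a genuine one, since the entire argument is pointwise --- the Hamilton equations are conditions at each $\xxx\in\M$, the coefficients $c_i$ are fixed scalars once $\xxx$ is fixed, and no smooth global extension of the vertical field is needed, which is exactly how the paper's proof proceeds.
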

\begin{proof}
	Since $\Ver_{\p}\Phase\subset\algx{\GG}{\p}$ for $\forall\p\in\Phase$ then for all $\vv\in\alg{\GG}$ such that $\vv\vb{\p}=\vv^A\partial_{\y^A}+\ww_A\partial_{\p_A}\in\Ver_{\p}{\Phase}$ one gets
	\be
	\dd\JJ_{\vv}\vb{\p}=i_{\vv}\msf\vb{\p}=\Bigg(\ww_A\wedge\left((-1)^{n-q-1}\dd\y^A-\frac{\partial^L\HH}{\partial\,\p_A}\right)-
	\left(-\,\dd\p_A-\frac{\partial^R\HH}{\partial\,\y^A}\right)\wedge\vv^A\Bigg)\Bigg|_{_{\p}}.
	\ee
	The pullback along the section $\sect\in\Sec(\M,\Phase)$ gives
	\be
	\big(\dd\sect^*\JJ_{\vv})\big|_{_{\xxx}}
	&=&\phantom{-}\sect^*\Bigg(\ww_A\wedge\left((-1)^{n-q-1}\dd\y^A-\frac{\partial^L\HH}{\partial\,\p_A}\right)-
	\left(-\,\dd\p_A-\frac{\partial^R\HH}{\partial\,\y^A}\right)\wedge\vv^A\Bigg)\Bigg|_{_{\xxx}}\\
	&=&\phantom{-}(\sect^*\ww_A)\big|_{_{\xxx}}\wedge\Bigg(\sect^*\left((-1)^{n-q-1}\dd\y^A-\frac{\partial^L\HH}{\partial\,\p_A}\right)\Bigg)\Bigg|_{_{\xxx}}\\
	&&-\Bigg(\sect^*\left(-\,\dd\p_A-\frac{\partial^R\HH}{\partial\,\y^A}\right)\Bigg)\Bigg|_{_{\xxx}}
	\wedge(\sect^*\vv_A)\big|_{_{\xxx}}
	\ee
	Now, it is easy to see, that the arbitrariness of $\vv\vb{\p}\in\Ver_{\p}(\Phase)$ and the condition $\big(\dd\sect^*\JJ_{\vv}\big)\big|_{_{\xxx}}=0$ for all $\xxx\in\M$ imply that $\sect$ is the solution of the Hamilton equations (\ref{Hamilton_equations}).
\qed\end{proof}
Group $\GG\subset\dDiff{\Phase}$ is called localizable if for every pair $(\BSigma_1,\BSigma_2)$ of disjoint embeddings of $\BSigma$ and every 
vector $\vv\in\alg{\GG}$ there exists vector $\ww\in\alg{\GG}$ such that $\vv\vb{\BSigma_1}=\ww\vb{\BSigma_1}$ and $\ww\vb{\BSigma_2}=0$.
Localizable symmetry group is called gauge group. With these in hands we can formulate the second Noether's theorem of integral formulation of the equations of the motion.
\begin{theorem}\label{Converse_Noether_theorem_integral_version}
	Let $\GG$ be the vertically transitive gauge group of the physical system. Section $\sect\in\Sec(\M,\Phase)$ is the solution of the Hamilton equations of motion (\ref{Hamilton_equations}) if and only if for $\forall\vv\in\alg{\GG}$ and all embeddings $\BSigma_0$ of $\BSigma$ into $\M$ the integral
	\be
	\int\limits_{\BSigma_0}\sect^*\JJ_{\vv}=0
	\ee
	is vanishing.
\end{theorem}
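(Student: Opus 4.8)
The plan is to prove the two implications separately, leaning on the differential-form versions of Noether's theorem already established. Throughout, recall that $\JJ_\vv=i_\vv\mef$ is \emph{pointwise} linear in $\vv$, so its value at a point of $\Phase$ depends only on the value of the vector field $\vv$ there; consequently, if $\vv$ and $\ww$ agree along (the $\sect$-lift of) an embedding $\BSigma_0$, then $\sect^*\JJ_\vv$ and $\sect^*\JJ_\ww$ agree on $\BSigma_0$, whereas if $\ww$ vanishes along $\BSigma_0$ then $\sect^*\JJ_\ww=0$ there. I will also use that two disjoint achronal embeddings of the compact boundaryless $\BSigma$ cobound a compact region $R\subset\M$ with $\partial R=\BSigma_2-\BSigma_1$, so that Stokes' theorem applies.

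For the forward implication, assume $\sect$ solves the Hamilton equations (\ref{Hamilton_equations}). By Theorem \ref{Neother_theorem} we have $\dd\sect^*\JJ_\vv=0$ for every $\vv\in\alg{\GG}$, i.e. each $\sect^*\JJ_\vv$ is a closed $(n-1)$-form on $\M$; hence its integral over an embedding of $\BSigma$ is unchanged under homologous deformations. Fix an arbitrary embedding $\BSigma_0$ and choose a disjoint second embedding $\BSigma_2$. Localizability of the gauge group $\GG$ provides $\ww\in\alg{\GG}$ with $\ww\vb{\BSigma_0}=\vv\vb{\BSigma_0}$ and $\ww\vb{\BSigma_2}=0$. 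Applying Stokes to the region bounded by $\BSigma_0$ and $\BSigma_2$, together with $\dd\sect^*\JJ_\ww=0$, gives $\int_{\BSigma_0}\sect^*\JJ_\ww=\int_{\BSigma_2}\sect^*\JJ_\ww$. The right-hand side vanishes because $\ww\vb{\BSigma_2}=0$, while on the left $\sect^*\JJ_\ww=\sect^*\JJ_\vv$ because $\ww\vb{\BSigma_0}=\vv\vb{\BSigma_0}$; therefore $\int_{\BSigma_0}\sect^*\JJ_\vv=0$. This is exactly where the gauge (localizability) hypothesis is essential: conservation alone would only make the charge embedding-independent, not zero.

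For the converse, assume $\int_{\BSigma_0}\sect^*\JJ_\vv=0$ for all $\vv\in\alg{\GG}$ and all embeddings $\BSigma_0$. Take any two embeddings $\BSigma_1,\BSigma_2$ cobounding a region $R$; Stokes gives $\int_R\dd\sect^*\JJ_\vv=\int_{\BSigma_2}\sect^*\JJ_\vv-\int_{\BSigma_1}\sect^*\JJ_\vv=0$. Since the embeddings may be chosen so that $R$ is an arbitrarily thin slab localized around any prescribed point of $\M$, the fundamental lemma of the calculus of variations forces the $n$-form $\dd\sect^*\JJ_\vv$ to vanish pointwise, for every $\vv\in\alg{\GG}$. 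Vertical transitivity of $\GG$ then lets me invoke Theorem \ref{Converse_Noether_theorem_diff_version}, whose hypothesis $\dd\sect^*\JJ_\vv=0$ is now met, to conclude that $\sect$ solves the Hamilton equations (\ref{Hamilton_equations}).

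The main obstacle I expect is geometric bookkeeping rather than analytic subtlety: I must ensure the family of admissible embeddings is rich enough for both steps — in the forward direction that a disjoint cobounding partner $\BSigma_2$ always exists, and in the converse that the regions $R$ between nearby embeddings can be localized around every point of $\M$, so that integral vanishing upgrades to pointwise vanishing of $\dd\sect^*\JJ_\vv$. Both are guaranteed by the product structure $\M\simeq\BSigma\times\R$ with $\BSigma$ compact and boundaryless, but this is the point that must be argued with care; the algebraic inputs (pointwise linearity of $i_\vv\mef$ and the two earlier Noether theorems) are then immediate.
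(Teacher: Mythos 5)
Your proposal is correct and follows essentially the same route as the paper's proof: in the forward direction both arguments combine $\dd\sect^*\JJ_{\vv}=0$ (Theorem \ref{Neother_theorem}), Stokes' theorem on the region between two embeddings, localizability of $\GG$, and the pointwise dependence of $\JJ_{\vv}=i_{\vv}\mef$ on $\vv$; in the converse both use Stokes to get $\int\dd(\sect^*\JJ_{\vv})=0$ over arbitrary regions, conclude $\dd\sect^*\JJ_{\vv}=0$ pointwise, and invoke Theorem \ref{Converse_Noether_theorem_diff_version} via vertical transitivity. The only cosmetic difference is that you make explicit (thin slabs, fundamental lemma, existence of a disjoint cobounding partner under $\M\simeq\BSigma\times\R$) what the paper leaves to ``arbitrariness of the foliation.''
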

\begin{proof}
	If the integral $\int\limits_{\BSigma_0}\sect^*\JJ_{\vv}=0$ is vanishing for all embeddings $\BSigma_0$ and all currents $\JJ_{\vv}$ then we have for  arbitrary foliation $\M_{\Ii}\simeq\BSigma\times\Ii$, where $\Ii=\langle t_{\text{ini}},t_{\text{fin}}\rangle$, an identity
	\be
	0&=&\int\limits_{\BSigma_{\text{fin}}}\sect^*\JJ_{\vv}-\int\limits_{\BSigma_{\text{ini}}}\sect^*\JJ_{\vv}
	=\oint\limits_{\partial\M_{\Ii}}\sect^*\JJ_{\vv}\\
	&=&\int\limits_{\M_{\Ii}}\dd(\sect^*\JJ_{\vv}).
	\ee
	The arbitrariness of the foliation $\M_{\Ii}$ implies for all $\JJ_{\vv}$
	\be
	0=\dd(\sect^*\JJ_{\vv})
	\ee
	and due to theorem \ref{Converse_Noether_theorem_diff_version} the section $\sect$ is the solution of the Hamilton equations of motion.\\
	\hs Conversely if the section $\sect\in\Sec(\M,\Phase)$ is the solution of the Hamilton equations then $0=\dd(\sect^*\JJ_{\vv})$ for all $\JJ_{\vv}$ and therefore
	\be
	0=\int\limits_{\BSigma_{\text{fin}}}\sect^*\JJ_{\vv}-\int\limits_{\BSigma_{\text{ini}}}\sect^*\JJ_{\vv}.
	\ee
	Since $\GG$ is localizable we may choose $\vv$ being vanishing on $\BSigma_{\text{ini}}$ and hence
	\be
	0=\int\limits_{\BSigma_{\text{fin}}}\sect^*\JJ_{\vv}
	\ee
	and arbitrariness of $\BSigma_{\text{fin}}$ yields the result.
\qed\end{proof}
NOTE: The compactness of $\BSigma$ plays crucial role in the theorem, if $\BSigma$ is noncompact then one must take into account boundary terms, e.g. for asymptotically flat spacetimes these yield in general nonvanishing global charges (Energy-momentum "tensor" at infinity, total electric charge,\dots). 

\section{Local Covariant Poisson Brackets}\label{LGPB}
In the canonical quantization Poisson bracket plays a crucial role. Indeed, a certain set of basic kinematical variables equipped with Poisson bracket forms an algebra, called fundamental, and the first step of any approach to the quantization is to find its representation on an appropriate Hilbert space. Therefore it is important to introduce Poisson bracket also in the field theory. As we will see we can define two types of brackets in the field theory $(n\ge 2)$. The first one, called local covariant Poisson bracket, is defined in every point of the phase space. Representation of the local fundamental algebra can be viewed as the first quantization in the context of the covariant quantization, but usually this step is called pre-quantization\cite{Kan3,Kan4}. The task of the section is to define local covariant Poisson bracket among covariant observables and also explore some of their algebraic properties.\\
\hs In the previous section we have introduced two basic multisymplectic manifolds associated with the given physical system. The first one given by the d-jet dual $\JY^*$ and its canonical multisymplectic form $\kmsf$ defined in (\ref{msf_dual}) was playing role in the kinematical description. The second one given by the image of the Legendre transformation $\Phase=\LTr(\JY)$ equipped with the multisymplectic form $\msf$ defined in (\ref{msf_dynamical}) was used for the formulation of the Hamilton equations of motion (\ref{Hamilton_equations}). For a while we are not going to distinguish between them and explore some of their algebraic properties simultaneously. Let $\proj{\M}{\Fibre}:\Fibre\to\M$ be given fibre bundle over the spacetime $\M$, e.g. $\Fibre=\JY^*$ or $\Fibre=\Phase$, and let there be given exact $(n+1)$-multisymplectic form $\kmsf=-\dd\kmef$ over it, e.g. $\kmsf=\kmsf$ for $\JY^*$ or $\kmsf=\msf$  for $\Phase$. In standard symplectic space there is associated the hamiltonian vector field to every hamiltonian observable and it is similar in the multisymplectic case. We say that $\vv\in\Sec(\Fibre,\Tan\Fibre)$ is hamiltonian vector field if there exists hamiltonian $(n-1)$-form, or observable, $\Q\in\Sec(\Fibre,\forms^{n-1}\Fibre)$ such that
\ben
i_{\vv}\kmsf=\dd\Q.\label{Hamiltonian_relation}
\een
Set of all hamiltonian vector fields or set of all hamiltonian $(n-1)$-forms is denoted by $\Ham^1_{\kmsf}\Tan\Fibre$ or $\Ham_{\kmsf}^{n-1}\Fibre$, respecti\-velly. $(\Q,\vv)$ is called hamiltonian pair on $\Fibre$ if $(\Q,\vv)\in\Ham_{\kmsf}^{n-1}\Fibre\times\Ham^1_{\kmsf}\Tan\Fibre$ and $(\Q,\vv)$ are related by (\ref{Hamiltonian_relation}).\\
\hs The local covariant Poisson bracket for observables $\Q_1,\Q_2\in\Ham_{\kmsf}^{n-1}\Fibre$ is defined by \cite{FPR}
\ben
\lpb\Q_1,\Q_2\rpb=i_{\vv_1} i_{\vv_2}\blomega+\dd\left(i_{\vv_1}\Q_2-i_{\vv_2}\Q_1-i_{\vv_1} i_{\vv_2}\mef\right),\label{GPB_n_1}
\een
where $\vv_1,\vv_2\in\Ham_{\kmsf}^1\Tan\Fibre$ are hamiltonian vectors related to $\Q_1,\Q_2$, respectively. The local covariant Poisson bracket defined in \cite{FPR} is equal up to the sign to local covariant Poisson bracket defined here and bracket in \cite{FPR} is also defined for, so-called, Poisson forms with degrees less then $n-1$, but it is sufficient for our purposes to consider only (\ref{GPB_n_1}). Local Poisson bracket defined by (\ref{GPB_n_1}) satisfied Jacobi identity, \cite{FPR} theorem 3.8, 
\be
\lpb\Q_1,\lpb\Q_2,\Q_3\rpb\rpb+\lpb\Q_3,\lpb\Q_1,\Q_2\rpb\rpb+\lpb\Q_2,\lpb\Q_3,\Q_1\rpb\rpb=0,
\ee
where $\Q_1,\Q_2,\Q_3\in\Ham_{\kmsf}^{n-1}\Fibre$. We should note here that $(\Ham^{n-1}_{\kmsf}\Fibre,\lpb.,.\rpb)$ is Lie algebra only, since there is no multiplication on $\Ham_{\kmsf}^{n-1}\Fibre$.\\
\hs We have seen in the previous section that Noether's currents $\JJ_{\vv}$ are generated by vector fields preserving Cartan-Poincar\' e form $\kmef$, i.e. if $\lder{\vv}\kmef=0$ then for $i_{\vv}\kmef=\JJ_{\vv}$ we have $i_{\vv}\kmsf=\dd\JJ_{\vv}$. There already exists Lie algebra defined on 
the set of all Noether's vector fields with product given by Lie bracket $\lsn,\rsn$ on $\Tan\Fibre$. This algebra is naturally represented on the set of all Noether's currents.
\begin{lemma}\label{lemma_Noether_charges_and_their_LPB}
	Let $\vv_1,\vv_2$ be Noether's vectors, i.e. satisfying $\lder{\vv_1}\kmef=\lder{\vv_2}\kmef=0$, then (\ref{GPB_n_1}) yields
	\ben
	\lpb\JJ_{\vv_1},\JJ_{\vv_2}\rpb=\JJ_{\lsn\vv_1,\vv_2\rsn},\label{local_Poisson_bracket_Noethers_charges_general_form}
	\een
	where $\lsn .,.\rsn$ means Lie bracket on $\Tan\Fibre$.
\end{lemma}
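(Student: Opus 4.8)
The plan is to recognise each Noether vector $\vv_i$ as a Hamiltonian vector field whose associated observable is exactly its current $\JJ_{\vv_i}=i_{\vv_i}\kmef$, substitute the pair $(\JJ_{\vv_i},\vv_i)$ into the defining formula (\ref{GPB_n_1}), and then collapse the result using the graded commutator relating interior product and Lie derivative. First I would verify that $(\JJ_{\vv_i},\vv_i)$ is indeed a hamiltonian pair in the sense of (\ref{Hamiltonian_relation}). Applying Cartan's formula to the hypothesis $\lder{\vv_i}\kmef=0$ and using $\kmsf=-\dd\kmef$ gives
\[
0=\lder{\vv_i}\kmef=i_{\vv_i}\dd\kmef+\dd\,i_{\vv_i}\kmef=-\,i_{\vv_i}\kmsf+\dd\JJ_{\vv_i},
\]
so that $i_{\vv_i}\kmsf=\dd\JJ_{\vv_i}$. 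Thus $\JJ_{\vv_i}\in\Ham^{n-1}_{\kmsf}\Fibre$ with associated vector $\vv_i$, and formula (\ref{GPB_n_1}) legitimately applies with $\Q_i=\JJ_{\vv_i}$.

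Next I would substitute into (\ref{GPB_n_1}), which yields
\[
\lpb\JJ_{\vv_1},\JJ_{\vv_2}\rpb
=i_{\vv_1}i_{\vv_2}\kmsf
+\dd\big(i_{\vv_1}\JJ_{\vv_2}-i_{\vv_2}\JJ_{\vv_1}-i_{\vv_1}i_{\vv_2}\kmef\big).
\]
The key simplification is that the second and third terms inside $\dd$ cancel: since $\JJ_{\vv_1}=i_{\vv_1}\kmef$ and interior products anticommute on forms, $i_{\vv_2}\JJ_{\vv_1}=i_{\vv_2}i_{\vv_1}\kmef=-\,i_{\vv_1}i_{\vv_2}\kmef$, so that $-i_{\vv_2}\JJ_{\vv_1}-i_{\vv_1}i_{\vv_2}\kmef=0$. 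Hence the bracket reduces to
\[
\lpb\JJ_{\vv_1},\JJ_{\vv_2}\rpb=i_{\vv_1}i_{\vv_2}\kmsf+\dd\,i_{\vv_1}\JJ_{\vv_2}.
\]

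Finally I would compute the right-hand side $\JJ_{\lsn\vv_1,\vv_2\rsn}=i_{\lsn\vv_1,\vv_2\rsn}\kmef$ independently and match. Invoking the standard graded-commutator identity $i_{\lsn\vv_1,\vv_2\rsn}=\lder{\vv_1}i_{\vv_2}-i_{\vv_2}\lder{\vv_1}$ together with $\lder{\vv_1}\kmef=0$ gives $\JJ_{\lsn\vv_1,\vv_2\rsn}=\lder{\vv_1}(i_{\vv_2}\kmef)=\lder{\vv_1}\JJ_{\vv_2}$, and then Cartan's formula combined with $\dd\JJ_{\vv_2}=i_{\vv_2}\kmsf$ produces $\lder{\vv_1}\JJ_{\vv_2}=i_{\vv_1}i_{\vv_2}\kmsf+\dd\,i_{\vv_1}\JJ_{\vv_2}$, which is identical to the reduced expression for the bracket, establishing (\ref{local_Poisson_bracket_Noethers_charges_general_form}). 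The whole argument is purely algebraic Cartan calculus with no analytic obstacle; the only step demanding care is the sign bookkeeping — the anticommutativity of the interior products and the sign in $\kmsf=-\dd\kmef$ — since these are precisely what force the two spurious total-derivative terms in (\ref{GPB_n_1}) to cancel, and a single sign slip there would spoil the clean identification of the bracket with $\JJ_{\lsn\vv_1,\vv_2\rsn}$.
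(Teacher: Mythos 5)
Your proof is correct and takes essentially the same route as the paper's: both substitute the hamiltonian pair $(\JJ_{\vv_i},\vv_i)$ into (\ref{GPB_n_1}), cancel the exact terms using antisymmetry of interior products, and finish with Cartan's formula together with the identity $i_{\lsn\vv_1,\vv_2\rsn}=\lder{\vv_1}i_{\vv_2}-i_{\vv_2}\lder{\vv_1}$ applied to $\kmef$. The only cosmetic difference is that the paper collapses the bracket directly to $-\lder{\vv_2}i_{\vv_1}\kmef$ whereas you reduce both sides to the common expression $\lder{\vv_1}\JJ_{\vv_2}$; your sign bookkeeping is sound throughout.
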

\begin{proof}
	Direct calculation yields
	\be
	\lpb\JJ_{\vv_1},\JJ_{\vv_2}\rpb&=&-i_{\vv_1}i_{\vv_2}\dd\kmef-\dd i_{\vv_2}i_{\vv_1}\kmef\\
	&=&-i_{\vv_2}\dd i_{\vv_1}\kmef-\dd i_{\vv_2}i_{\vv_1}\kmef\\
	&=&-\lder{\vv_2}i_{\vv_1}\kmef=i_{\lsn\vv_1,\vv_2\rsn}\kmef=\JJ_{\lsn\vv_1,\vv_2\rsn},
	\ee
	where we used relation $i_{\lsn\vv_1,\vv_2\rsn}=\lder{\vv_1}i_{\vv_2}-i_{\vv_2}\lder{\vv_1}$.
\qed\end{proof}
We have constructed two different multisymplectic structures for given physical system $(\Y,\LL)$. Now, we are curious how these structures are related. Let $(\Fibre=\JY^*, \kmsf)$ with related  algebra of observables $(\Ham_{\kmsf}^{n-1}\Fibre,\lpb\,,\,\rpb_{\kmsf})$ 
or $(\Phase,\msf)$ with $(\Ham_{\msf}^{n-1}\Phase,\lpb\,,\,\rpb_{\msf})$ be kinematical or dynamical multisymplectic manifold, respectively. Since $\Fibre$ or $\Phase$ can be locally coordanized by $(x^{\mu},\y^A,\p_A,\h)$ or $(x^{\mu},\y^A,\p_A)$, respectively, we have a bundle chain
\be
\Fibre\stackrel{\proj{\Phase}{\Fibre}}{\longrightarrow}\Phase\stackrel{\proj{\M}{\Phase}}{\longrightarrow}\M\text{\hs and \hs}\proj{\M}{\Fibre}=\proj{\M}{\Phase}\circ\proj{\Phase}{\Fibre}.
\ee
Phase space $\Phase$ can be embedded into the kinematical multisymplectic space $\Fibre$ by Hamilton map $\Hmap:\Phase\to\Fibre$ defined by
\ben
\Hmap:(\xxx,\y^A,\p_A)\mapsto(\xxx,\y^A,\p_A,\h=\HH),\label{Hamilton_map_definition}
\een
where $\HH$ is Hamiltonian defined by (\ref{general_hamiltonian}).\\
\hs Let $\vv\in\Tan\Fibre$ or $\vv\in\Tan\Phase$ then we say that $\vv$ is $\M$-vertical if $(\proj{\M}{\Fibre})_*\vv=0$ or $(\proj{\M}{\Phase})_*\vv=0$ and we denote subbundle of such vectors by $\MVer\Fibre\subset\Tan\Fibre$ or $\MVer\Phase\subset\Tan\Phase$, respectively. Vector $\vv\in\Tan\Fibre$ is called $\HH$-vertical if $(\proj{\Phase}{\Fibre})_*\vv=0$ and subbundle of such vectors is denoted by $\HVer\Fibre\subset\Tan\Fibre$. $(\proh)_*\vv$ means pushforward of the vector $\vv$ along map $\proh$. Let $\alpha\in\forms\Fibre$ or $\alpha\in\forms\Phase$ we say that $\alpha$ is $\M$-horizontal if $\alpha$ is annihilated by all $\M$-vertical vectors and subbundle of such forms is denoted by $\Mhor\Fibre\subset\forms\Fibre$ or $\Mhor\Phase\subset\forms\Phase$, respectivelly. Let $\alpha\in\forms\Fibre$ we say that $\alpha$ is $\HH$-horizontal if $\alpha$ is annihilated by all $\HH$-vertical vectors and subbundle of such forms is denoted by $\Hhor\Fibre\subset\forms\Fibre$.\\
\hs Let $\Q\in\Ham^{n-1}_{\kmsf}\Fibre$ be kinematical observable on $\Fibre$ and $\vv\in\Ham^1_{\kmsf}\Tan\Fibre$ be its related hamiltonian vector. Since relation between bundles $\Fibre$ and $\Phase$ is given by projection $\proj{\Phase}{\Fibre}$ and Hamilton map $\Hmap$ it is important to explore how general $\Q$ depends on $\h$. Let us denote $\h=\bar{h}\dd\Sigma$. We can decompose $\Q$ as 
\ben
\Q=\alpha+\dd\bar{h}\wedge\beta,\label{h-decomposition_of_kinematical_observable}
\een
where $\alpha\in\Sec(\Fibre,\Hhor^{n-1}\Fibre)$ and $\beta\in\Sec(\Fibre,\Hhor^{n-2}\Fibre)$ are certain $\HH$-horizontal form fields on $\Fibre$. Following lemma shows that we can consider each obserable $\Q\in\Ham_{\kmsf}^{n-1}\Fibre$ to be $\HH$-horizontal without lack of generality and therefore $\Ham^{n-1}_{\kmsf}\Fibre$ denotes a set of such $\HH$-horizontal observables when the proof of the statement is done. 
\begin{lemma}\label{H-horizontality_of_observable}
	Let $\Q\in\Ham^{n-1}_{\kmsf}\Fibre$ be general kinematical observable on $(\Fibre,\kmsf)$ and $\vv=\bxi+\ww+\uu\partial_{\h}\in\Ham_{\kmsf}^1\Tan\Fibre$ be its hamiltonian vector field then upto exact term
	\ben
	\Q=\Q_{\Phase}+i_{\bxi}(\h-\HH),\label{formula_H-horizontality_of_observable}
	\een
	where $\Q_{\Phase}$ is $\h$-independent $\HH$-horizontal $(n-1)$-form.
\end{lemma}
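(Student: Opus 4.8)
The plan is to extract the entire dependence of $\Q$ on the energy coordinate $\bar{h}$ (where $\h=\bar{h}\,\dd\Sigma$) by contracting the defining relation (\ref{Hamiltonian_relation}) with the $\HH$-vertical coordinate field $\partial_{\bar{h}}$, and then to integrate the resulting transport equation along $\bar{h}$. First I would record two elementary contractions: from (\ref{msf_dual}) one has $i_{\partial_{\bar{h}}}\kmsf=-\dd\Sigma$, since the $\dd\p_A\wedge\dd\y^A$-part is annihilated by $\partial_{\bar{h}}$ while $i_{\partial_{\bar{h}}}(\dd\bar{h}\wedge\dd\Sigma)=\dd\Sigma$; and from the decomposition (\ref{h-decomposition_of_kinematical_observable}) together with the $\HH$-horizontality of $\alpha,\beta$ one gets $i_{\partial_{\bar{h}}}\Q=\beta$. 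Applying $i_{\partial_{\bar{h}}}$ to $i_{\vv}\kmsf=\dd\Q$, using $i_{\partial_{\bar{h}}}i_{\vv}=-i_{\vv}i_{\partial_{\bar{h}}}$ and the Cartan formula, then yields the master relation
\begin{equation*}
\lder{\partial_{\bar{h}}}\Q=i_{\bxi}\dd\Sigma+\dd\beta ,
\end{equation*}
in which only the $\M$-horizontal part $\bxi$ of $\vv$ survives on the right.

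The decisive step is to show that $\bxi$ does not depend on $\bar{h}$, for this is exactly what makes $i_{\bxi}\dd\Sigma$ the $\bar{h}$-derivative of $i_{\bxi}\h=\bar{h}\,i_{\bxi}\dd\Sigma$. I would obtain it by applying $\lder{\partial_{\bar{h}}}$ to (\ref{Hamiltonian_relation}); since $\kmsf$ is closed, $\lder{\partial_{\bar{h}}}\kmsf=\dd\,i_{\partial_{\bar{h}}}\kmsf=-\dd\dd\Sigma=0$, and one gets $i_{[\partial_{\bar{h}},\vv]}\kmsf=\dd(i_{\bxi}\dd\Sigma)$, where $[\partial_{\bar{h}},\vv]$ is the componentwise $\bar{h}$-derivative of $\vv$. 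Now $\dd(i_{\bxi}\dd\Sigma)=\dd\xi^{\mu}\wedge\dd\Sigma_{\mu}$ carries at most one fibre differential, so the part of it carrying the two fibre differentials $\dd\p_A\wedge\dd\y^A$ vanishes; on the left that same part equals $(-1)^{n-q-1}i_{[\partial_{\bar{h}},\bxi]}(\dd\p_A\wedge\dd\y^A)$. Nondegeneracy of the canonical symplectic part $(-1)^{n-q-1}\dd\p_A\wedge\dd\y^A$ on $\M$-horizontal vectors then forces $[\partial_{\bar{h}},\bxi]=0$. I expect the verification of this injectivity — a purely algebraic fact about contracting an $\M$-horizontal field into $\dd\p_A\wedge\dd\y^A$ in the form-valued coordinates — to be the main technical obstacle, since it requires careful bookkeeping of the interior products of $\bxi$ with the base differentials hidden inside $\dd\p_A$ and $\dd\y^A$.

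With $\bxi$ established $\bar{h}$-independent I would integrate. Choosing an $\M$- and $\HH$-horizontal primitive $B$ of $\beta$ in the $\bar{h}$-direction (so $\partial_{\bar{h}}B=\beta$, legitimate because the bundles are trivial), the master relation gives $\lder{\partial_{\bar{h}}}\big(\Q-i_{\bxi}\h-\dd B\big)=\dd\beta+i_{\bxi}\dd\Sigma-i_{\bxi}\dd\Sigma-\dd\beta=0$, so $\tilde{\Q}_{\Phase}:=\Q-i_{\bxi}\h-\dd B$ is $\bar{h}$-independent. Moreover $i_{\partial_{\bar{h}}}\tilde{\Q}_{\Phase}=\beta-0-\beta=0$, using $i_{\partial_{\bar{h}}}\dd B=\lder{\partial_{\bar{h}}}B=\beta$ and $i_{\partial_{\bar{h}}}(i_{\bxi}\h)=0$, so $\tilde{\Q}_{\Phase}$ is also $\HH$-horizontal.

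Finally I would reinstate the Hamiltonian. Since $\HH$ from (\ref{general_hamiltonian}) is $\bar{h}$-independent and $\M$-horizontal, the term $i_{\bxi}\HH$ is $\bar{h}$-independent and $\HH$-horizontal; writing $i_{\bxi}\h=i_{\bxi}(\h-\HH)+i_{\bxi}\HH$ and setting $\Q_{\Phase}:=\tilde{\Q}_{\Phase}+i_{\bxi}\HH$ yields $\Q=\Q_{\Phase}+i_{\bxi}(\h-\HH)+\dd B$, with $\Q_{\Phase}$ an $\h$-independent $\HH$-horizontal $(n-1)$-form, which is precisely (\ref{formula_H-horizontality_of_observable}) modulo the exact term $\dd B$. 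Beyond routine manipulation, the only places demanding genuine care are the nondegeneracy argument yielding $[\partial_{\bar{h}},\bxi]=0$ and the consistent handling of the form-valued fibre coordinates in every interior product.
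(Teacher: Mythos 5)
Your proposal is correct and follows essentially the same route as the paper's own proof: your master relation $\lder{\partial_{\bar{h}}}\Q=i_{\bxi}\dd\Sigma+\dd\beta$ is precisely the paper's comparison of the $\dd\bar{h}$-terms of $i_{\vv}\kmsf=\dd\Q$ (its relation $i_{\bxi}\dd\Sigma=\frac{\partial\alpha}{\partial\bar{h}}-\dd^{\Phase}\beta$) under the same decomposition $\Q=\alpha+\dd\bar{h}\wedge\beta$, the $\bar{h}$-independence of $\bxi$ is forced in both arguments by isolating the sector quadratic in the fibre differentials $\dd\p_A\wedge\dd\y^A$ (the paper notes the remaining terms are at most linear in $\dd\y^A$, $\dd\p_A$), and your primitive $B$ is the paper's $\gamma=\int_{\bar{H}}^{\bar{h}}\dd\bar{h}\,\beta$. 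One cosmetic slip: $B$ is generally not $\M$-horizontal, since it inherits the $\dd y^A_{(\mu)_q}$- and $\dd p_{A(\mu)^{q+1}}$-legs of $\beta$, but nothing is lost because you only ever use $i_{\partial_{\bar{h}}}B=0$ and $\lder{\partial_{\bar{h}}}B=\beta$, both of which hold for the fibrewise integral.
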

\begin{proof}Let $\Q\in\Ham^{n-1}_{\kmsf}\Fibre$ and let $\vv\in\Ham^1_{\kmsf}\Tan\Fibre$ be its associated hamiltonian vector field, which can be written as
	\ben
	\vv&=&\xi^{\mu}\partial_{\mu}+w^A_{(\mu)_q}\partial_{y^A_{(\mu)_q}}+w_{A(\mu)^{q+1}}\partial_{p_{A{(\mu)^{q+1}}}}+\bar{u}\partial_{\bar{h}},\nonumber\\
	&=&\bxi+\ww^A\partial_{\y^A}+\ww_A\partial_{\p_A}+\uu\partial_{\h},\label{auxiliar_equation_iii}\\
	&=&\bxi+\ww+\uu\partial_{\h}.\nonumber
	\een
	(\ref{Hamiltonian_relation}) and (\ref{h-decomposition_of_kinematical_observable}) yield
	\ben
	\label{auxiliar_equation_iv}
	\begin{tabular}{lcl}
		$\dd^{\Phase}\alpha+\dd\bar{h}\wedge\big(\frac{\partial\alpha}{\partial\bar{h}}-\dd^{\Phase}\beta\big)$
		&$=$&$\phantom{+}
		i_{\bxi}(-1)^{n-q-1}\dd\p_A\wedge\dd\y^A+\dd\bar{h}\wedge i_{\bxi}\dd\Sigma$\\
		&&$+i_{\ww}(-1)^{n-q-1}\dd\p_A\wedge\dd\y^A-\uu,$
	\end{tabular}
	\een
	where we used notation \[\dd^{\Phase}\alpha=\dd x^{\mu}\wedge\lder{\partial_{\mu}}\alpha+\dd y^A_{(\mu)_q}\wedge
	\lder{\partial_{y^A_{(\mu)_q}}}\alpha+
	\dd p_{A(\mu)^{q+1}}\wedge\lder{\partial_{p_{A(\mu)^{q+1}}}}\alpha\]
	for short. If we compare terms containing $\dd\bar{h}$ in (\ref{auxiliar_equation_iv}) we get 
	\be
	i_{\bxi}\dd\Sigma=\frac{\partial\alpha}{\partial\bar{h}}-\dd^{\Phase}\beta,
	\ee
	or
	\be
	\Q=\Q_{\Phase}+\int\limits_{\bar{H}}^{\bar{h}}\left(i_{\bxi}\dd\Sigma+\dd^{\Phase}\beta\right)\dd\bar{h}+\dd\bar{h}\wedge\beta,
	\ee
	where $\Q_{\Phase}=(\proj{\Phase}{\Fibre})^*(\Hmap)^*\alpha$ is $\bar{h}$-independent $\HH$-horizontal $(n-1)$-form and $\HH=\bar{H}\dd\BSigma$. If we set $\gamma=\int\limits_{\bar{H}}^{\bar{h}}\dd\bar{h}\,\beta$ then
	\be
	\Q=\Q_{\Phase}+\int\limits_{\bar{H}}^{\bar{h}}\left(i_{\bxi}\dd\Sigma\right)\dd\bar{h}+\dd^{\Phase}\gamma+
	\dd\bar{h}\wedge\frac{\partial\gamma}{\partial\bar{h}}=\Q_{\Phase}+\int\limits_{\bar{H}}^{\bar{h}}\left(i_{\bxi}\dd\Sigma\right)\dd\bar{h}+\dd\gamma.
	\ee
	Its exterior derivative is
	\be
	\dd\Q=\dd^{\Phase}\Q_{\Phase}+\dd\bar{h}\wedge i_{\bxi}\dd\Sigma+\dd^{\Phase}\int\limits_{\bar{H}}^{\bar{h}}\left(i_{\bxi}\dd\Sigma\right)\dd\bar{h}.
	\ee
	Since the last term is at most linear in $\dd\y^A$ and $\dd\p_A$ we see that the term $\dd^{\Phase}\Q_{\Phase}$ contains complete information about term $i_{\bxi}(-1)^{n-q-1}\dd\p_A\wedge\dd\y^A$ in (\ref{auxiliar_equation_iv}), hence $\bxi$ does not depent on $\bar{h}$. Thus, we can write
	\be
	\Q=\Q_{\Phase}+i_{\bxi}(\h-\HH)+\dd\gamma.
	\ee
\qed\end{proof}
The next important property of hamiltonian flows on $\Fibre$ or $\Phase$ is that they define homomorphisms on the bundle $\Fibre\to\Phase\to\M$ or $\Phase\to\M$, respectively. 
\begin{lemma}
	Let $\vv\in\Ham^1_{\kmsf}\Tan\Fibre$ be hamiltonian vector of observable $\Q^{\Fibre}\in\Ham^{n-1}_{\kmsf}\Fibre$ then the chain of vectors $\vv=\bxi+\ww+\uu\partial_{\h}\to\bxi+\ww\to\bxi$ is projectable on $\Fibre\to\Phase\to\M$.\\
	\hs Let $\VV\in\Ham^1_{\msf}\Tan\Phase$ be hamiltonian vector of observable $\Q^{\Phase}\in\Ham^{n-1}_{\msf}\Phase$ then the chain of vectors $\VV=\Bxi+\WW\to\Bxi$ is projectable on $\Phase\to\M$.
\end{lemma}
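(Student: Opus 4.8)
The plan is to reduce both assertions to statements about coordinate dependence and then read them off from the hamiltonian relation (\ref{Hamiltonian_relation}). Writing $\h=\bar h\,\dd\Sigma$ and expanding $\vv=\bxi+\ww+\uu\partial_{\h}$ as in (\ref{auxiliar_equation_iii}), with $\bxi=\xi^{\mu}\partial_{\mu}$ and $\ww=\ww^{A}\partial_{\y^{A}}+\ww_{A}\partial_{\p_{A}}$, projectability of $\vv$ along $\proj{\Phase}{\Fibre}$ (which forgets $\bar h$) amounts to the components $\xi^{\mu},\ww^{A},\ww_{A}$ being independent of $\bar h$, while projectability of $\bxi+\ww$ along $\proj{\M}{\Phase}$ (which forgets $\y^{A},\p_{A}$) amounts to $\xi^{\mu}$ depending on $x^{\mu}$ only. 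Lemma \ref{H-horizontality_of_observable} already supplies the $\bar h$-independence of $\bxi$, so the two genuinely new facts to establish are (ii) $\xi^{\mu}=\xi^{\mu}(x)$ and (i) $\partial\ww/\partial\bar h=0$. The statement for $(\Phase,\msf)$ is just (ii) with $\kmsf$ replaced by $\msf$.

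The main tool I would use is the grading of forms by their \emph{vertical degree} $r$, meaning the number of coordinate differentials $\dd y^{A}_{(\mu)_q},\dd p_{A(\mu)^{q+1}}$ they carry; I write $[\,\cdot\,]^{(r)}$ for the corresponding homogeneous part. By (\ref{msf_dual}) one has $\kmsf=\kmsf^{[2]}-\dd\bar h\wedge\dd\Sigma$ with $\kmsf^{[2]}=(-1)^{n-q-1}\dd\p_A\wedge\dd\y^A$, the last summand being the only piece of vertical degree $0$ and the only one carrying $\dd\bar h$. Since upon contraction $\bxi$ and $\uu\partial_{\h}$ preserve the vertical degree while $\ww$ lowers it, and since $\kmsf^{[2]}$ is free of $\dd\bar h$, the $\dd\bar h$-free part of the degree-$2$ component of (\ref{Hamiltonian_relation}) reads $i_{\bxi}\kmsf^{[2]}=[\dd\Q]^{(2)}_{\,\dd\bar h\text{-free}}$, and this is the equation from which I would extract (ii).

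\emph{This step is the crux.} The point is that $\kmsf^{[2]}$ has constant coordinate coefficients and $i_{\bxi}$ contracts only its $\dd x$-factors, so on the left every coefficient multiplying the surviving monomials $\dd p_{A(\mu)^{q+1}}\wedge\dd y^{A}_{(\nu)_q}$ is linear in the functions $\xi^{\mu}$; on the right the same form is assembled from $\dd x^{\mu}\wedge\lder{\partial_{\mu}}\Q^{(2)}$ together with the degree-raising pieces $\dd y^{A}_{(\mu)_q}\wedge\lder{\partial_{y^{A}_{(\mu)_q}}}\Q^{(1)}$ and $\dd p_{A(\mu)^{q+1}}\wedge\lder{\partial_{p_{A(\mu)^{q+1}}}}\Q^{(1)}$, i.e. it is the vertical-degree-$2$ part of an \emph{exact} form. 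Matching these coefficients, which must agree identically in the fibre coordinates, should force $\partial\xi^{\mu}/\partial y^{A}_{(\nu)_q}=\partial\xi^{\mu}/\partial p_{A(\mu)^{q+1}}=0$; together with the $\bar h$-independence from Lemma \ref{H-horizontality_of_observable} this yields $\xi^{\mu}=\xi^{\mu}(x)$. I expect the genuine difficulty to lie precisely here, since the vertical components $\ww$ are allowed to depend on every fibre coordinate: one must isolate the single component of (\ref{Hamiltonian_relation}) in which $\bxi$ decouples from $\ww$ and then exploit that $\dd\Q$ is exact rather than a generic $n$-form (the nondegeneracy of $\kmsf$ guaranteeing uniqueness of $\vv$, hence consistency of this component equation).

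With (ii) in hand the rest is routine. For (i) I would substitute the canonical decomposition (\ref{formula_H-horizontality_of_observable}), $\Q=\Q_{\Phase}+i_{\bxi}(\h-\HH)$, into (\ref{Hamiltonian_relation}); since $\Q_{\Phase},\bxi,\HH$ are $\bar h$-independent, $\h$ enters linearly, and $i_{\bxi}\dd\Sigma$ is now horizontal (as $\xi^{\mu}$ is basic), the entire $\bar h$-dependence of $\dd\Q$ sits in terms that either contain $\dd\bar h$ or carry no $\dd\y^A,\dd\p_A$ at all. Comparing with the $\dd\bar h$-free vertical-degree-$1$ part $i_{\ww}\kmsf^{[2]}$, which is linear in $\dd\y^A,\dd\p_A$ with coefficients built from $\ww^{A},\ww_{A}$, shows that $i_{\ww}\kmsf^{[2]}$ is $\bar h$-independent; injectivity of $\ww\mapsto i_{\ww}\kmsf^{[2]}$ on $\MVer\Fibre$ then gives $\partial\ww/\partial\bar h=0$, which is (i). Finally, for the dynamical form I would note from (\ref{msf_dynamical}) that $\msf$ has the same degree-$2$ part $\kmsf^{[2]}$ (the term $\dd\HH$ contributing only in vertical degrees $\le1$), so the degree-$2$ component of $i_{\VV}\msf=\dd\Q^{\Phase}$ is formally identical to the one above and the same comparison forces $\Xi^{\mu}=\Xi^{\mu}(x)$, giving projectability of $\Bxi+\WW$ onto $\Bxi$ on $\Phase\to\M$.
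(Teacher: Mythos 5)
Your reduction of the lemma to the two facts (i) $\partial\ww/\partial\bar h=0$ and (ii) $\xi^{\mu}=\xi^{\mu}(x)$ is correct, and your argument for (i) — substituting $\Q=\Q_{\Phase}+i_{\bxi}(\h-\HH)$ into $i_{\vv}\kmsf=\dd\Q$, isolating the $\dd\bar h$-free vertical-degree-one part, and invoking injectivity of $\ww\mapsto i_{\ww}\big((-1)^{n-q-1}\dd\p_A\wedge\dd\y^A\big)$ on $\M$-vertical vectors — is valid, and in fact a tidy alternative to the paper's bookkeeping. But the step you yourself flag as the crux, (ii), is not proved, and the mechanism you propose for it fails as stated. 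Matching coefficients in $i_{\bxi}\kmsf^{[2]}=[\dd\Q]^{(2)}_{\dd\bar h\text{-free}}$ only \emph{expresses} $\xi^{\mu}$ as combinations of fibre-derivatives of the components of $\Q$ (antisymmetrized $z$-derivatives of the degree-one part plus horizontal derivatives of the degree-two part); nothing in that pointwise identity prevents those combinations, and hence $\xi^{\mu}$, from depending on $y^A_{(\mu)_q}$ and $p_{A(\mu)^{q+1}}$. The exactness of the right-hand side becomes a usable constraint only after differentiation: the information that kills the fibre dependence of $\bxi$ lives one vertical degree higher than the component you are matching. (Note also that for $n=1$ nothing of the sort is true — any $f(q,p,t,h)$ is hamiltonian with $\xi=\partial f/\partial h$ arbitrary — so the argument must genuinely use $n\ge 2$ through the existence of a top vertical-degree component, which your coefficient matching never does.)

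The paper closes exactly this hole by applying $\dd$ to the hamiltonian relation, i.e. using $\dd i_{\vv}\kmsf=0$, equation (\ref{auxiliar_equation_v}). There the only term cubic in the vertical differentials $\dd y^A_{(\mu)_q}$, $\dd p_{A(\mu)^{q+1}}$ is the vertical part of $\dd\big[i_{\bxi}(-1)^{n-q-1}\dd\p_A\wedge\dd\y^A\big]$ — the remaining terms $\dd\big(i_{\ww}\kmsf^{[2]}\big)$, $\dd i_{\bxi}\dd\h$ and $\dd\uu$ all have vertical degree at most two — so its vanishing isolates
\[
\left(\dd y^B_{(\nu)_q}\frac{\partial\xi^{\nu}}{\partial y^B_{(\nu)_q}}+\dd p_{B\,(\nu)^{q+1}}\frac{\partial\xi^{\nu}}{\partial p_{B\,(\nu)^{q+1}}}\right)\wedge i_{\partial_{\nu}}\left(\dd\p_A\wedge\dd\y^A\right)=0,
\]
free of any contamination by $\ww$ or $\Q$, and nondegeneracy of the contraction forces all fibre-derivatives of $\xi^{\nu}$ to vanish; the identical degree count works on $(\Phase,\msf)$, where $\dd i_{\VV}\dd\HH$ likewise has vertical degree at most two, giving $\Xi^{\mu}=\Xi^{\mu}(x)$. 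You could complete your own route without abandoning it — combine $[\dd\Q]^{(3)}=0$ with the vertical differential of your degree-two equation and commute second derivatives — but that computation is precisely the differentiated identity above, and it is the step your proposal omits. Until it is supplied, ``should force'' is an expectation, not a proof.
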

\begin{proof}Relation (\ref{Hamiltonian_relation}) yields $\dd i_{\vv}\kmsf=0$ for $\Fibre$ or $\dd i_{\VV}\msf=0$ for $\Phase$, respectively. We have
	\ben
	\dd\left[i_{\bxi}(\!-\!1)^{n-q-1}\dd\p_A\wedge\dd\y^A\right]\!-\!\dd i_{\bxi}\dd\h\!+\!
	(\!-\!1)^{n-q-1}\left(\dd\ww_A\wedge\dd\y^A\!+\!\dd\p_A\wedge\dd\ww^A\right)\!-\!\dd\uu=0\label{auxiliar_equation_v}
	\een
	for $\Fibre$ or
	\ben
	\dd\left[i_{\Bxi}(\!-\!1)^{n-q-1}\dd\p_A\wedge\dd\y^A\right]\!-\!\dd i_{\VV}\dd\HH\!+\!
	(\!-\!1)^{n-q-1}\left(\dd\WW_A\wedge\dd\y^A\!+\!\dd\p_A\wedge\dd\WW^A\right)=0\label{auxiliar_equation_vi}
	\een
	for $\Phase$, respectively.\\
	\hs Since $\bxi$ does not depend on $\h$ and only the first term in (\ref{auxiliar_equation_v}) is cubic in $\dd y^A_{(\mu)_q}$ and $\dd p_{A\,(\mu)^{q+1}}$ we have
	\be
	(-1)^{n-q-1}\left(\dd y^B_{(\nu)_q}
	\frac{\partial\xi^{\nu}}{\partial y^B_{(\nu)_q}}+
	\dd p_{B\,(\nu)^{q+1}}
	\frac{\partial\xi^{\nu}}{\partial p_{B\,(\nu)^{q+1}}}
	\right)\wedge i_{\partial_{\nu}}\left(\dd\p_A\wedge\dd\y^A\right)=0.
	\ee
	We see that $\bxi$ depends on $x^{\mu}$ only. The second term in (\ref{auxiliar_equation_v}) can be canceled only by $\dd\bar{h}$-part of the forth term.
	This yields that $\dd\bar{h}$-part of the third term should be vanishing, i.e. $\ww$ does not depend on $\h$. Therefore the chain of vector fields $\vv\to\bxi+\ww\to\bxi$ on $\Fibre\to\Phase\to\M$ is projectable.\\
	\hs Similarly for $\Phase$ we have
	\be
	(-1)^{n-q-1}\left(\dd y^B_{(\nu)_q}
	\frac{\partial\Xi^{\nu}}{\partial y^B_{(\nu)_q}}+
	\dd p_{B\,(\nu)^{q+1}}
	\frac{\partial\Xi^{\nu}}{\partial p_{B\,(\nu)^{q+1}}}
	\right)\wedge i_{\partial_{\nu}}\left(\dd\p_A\wedge\dd\y^A\right)=0
	\ee
	and therefore the chain $\VV=\Bxi+\WW\to\Bxi$ is projectable on $\Phase\to\M$.
\qed\end{proof}
\begin{lemma}
	Let $\Q_1^{\Fibre},\Q_2^{\Fibre}\in\Ham^{n-1}_{\kmsf}\Fibre$ and $(\proj{\Phase}{\Fibre})^*(\Hmap)^*\Q_1^{\Fibre}=(\proj{\Phase}{\Fibre})^*(\Hmap)^*\Q_2^{\Fibre}$ then $\Q_1^{\Fibre}=\Q_2^{\Fibre}$.
\end{lemma}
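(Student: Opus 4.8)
The plan is to show that the linear map $\Q^{\Fibre}\mapsto(\proj{\Phase}{\Fibre})^*(\Hmap)^*\Q^{\Fibre}$ has trivial kernel on $\Ham^{n-1}_{\kmsf}\Fibre$. I first note that $(\proj{\Phase}{\Fibre})^*(\Hmap)^*=\Pi^*$, where $\Pi:=\Hmap\circ\proj{\Phase}{\Fibre}\colon\Fibre\to\Fibre$ sends $(x^{\mu},\y^A,\p_A,\h)\mapsto(x^{\mu},\y^A,\p_A,\HH)$ and is idempotent, $\Pi\circ\Pi=\Pi$. By linearity it is then enough to put $\Q:=\Q_1^{\Fibre}-\Q_2^{\Fibre}$, with associated hamiltonian vector $\vv:=\vv_1-\vv_2$, and to deduce $\Q=0$ from $\Pi^*\Q=0$.

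I would then apply Lemma~\ref{H-horizontality_of_observable} to $\Q$. Since by the adopted convention every element of $\Ham^{n-1}_{\kmsf}\Fibre$ is $\HH$-horizontal, the exact term in that lemma drops out and one has the exact identity $\Q=\Q_{\Phase}+i_{\bxi}(\h-\HH)$, in which $\bxi$ is the $\M$-horizontal part of $\vv$ and $\Q_{\Phase}=\Pi^*\alpha$ is $\h$-independent and $\HH$-horizontal. Writing $\h=\bar{h}\,\dd\Sigma$, $\HH=\bar{H}\,\dd\Sigma$ and $i_{\bxi}(\h-\HH)=(\bar{h}-\bar{H})\,i_{\bxi}\dd\Sigma$, I apply $\Pi^*$: the scalar factor obeys $(\bar{h}-\bar{H})\circ\Pi=\bar{H}-\bar{H}=0$, so $\Pi^*\!\left[i_{\bxi}(\h-\HH)\right]=0$, whereas $\Pi^*\Q_{\Phase}=\Q_{\Phase}$ by idempotency. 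Hence $\Pi^*\Q=\Q_{\Phase}$, and the hypothesis forces $\Q_{\Phase}=0$, leaving $\Q=i_{\bxi}(\h-\HH)$; it remains to prove $\bxi=0$.

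The decisive step uses the hamiltonian relation $i_{\vv}\kmsf=\dd\Q$ graded by the number of vertical differentials $\dd\y^A,\dd\p_A$ (the legs $\dd x^{\mu}$ and $\dd\bar{h}$ not being counted). By the projectability lemma $\bxi=\xi^{\mu}(x)\partial_{\mu}$ depends on $x^{\mu}$ only, so a short computation gives $\dd\Q=(\dd\bar{h}-\dd\bar{H})\wedge i_{\bxi}\dd\Sigma+(\bar{h}-\bar{H})(\partial_{\mu}\xi^{\mu})\,\dd\Sigma$, which is at most linear in $\dd\y^A,\dd\p_A$. Writing $\kmsf=\Omega-\dd\bar{h}\wedge\dd\Sigma$ with $\Omega=(-1)^{n-q-1}\dd\p_A\wedge\dd\y^A$, the unique term of $i_{\vv}\kmsf$ that is quadratic in $\dd\y^A,\dd\p_A$ is $i_{\bxi}\Omega$, the remaining pieces $i_{\ww}\Omega$ and $i_{\vv}(\dd\bar{h}\wedge\dd\Sigma)$ contributing only to lower sectors. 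Matching the quadratic sector therefore yields $i_{\bxi}\Omega=0$.

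Finally I would extract $\bxi$ from $i_{\bxi}\Omega=0$. As $\bxi$ contracts only the horizontal $\dd x$-legs of $\Omega$, and the vertical factors $\dd\p_A\wedge\dd\y^A$ (for distinct index values) are linearly independent, each independent component collapses to a condition $i_{\bxi}\dd\Sigma_{\lambda}=0$, with $\dd\Sigma_{\lambda}=i_{\partial_{\lambda}}\dd\Sigma$, for every coordinate direction $\lambda$; since all $\lambda$ already occur for a single field of any admissible degree $0\le q\le n-1$ and $n\ge2$, the collection $\{i_{\bxi}\dd\Sigma_{\lambda}=0\}_{\lambda}$ forces $\xi^{\mu}=0$. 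Thus $\Q=i_{\bxi}(\h-\HH)=0$ and $\Q_1^{\Fibre}=\Q_2^{\Fibre}$. I expect the main obstacle to be precisely this last bookkeeping: confirming that $i_{\bxi}\Omega$ is the only quadratic contribution and that the vanishing of $i_{\bxi}\Omega$ pins down $\bxi$ itself rather than merely some contraction of it.
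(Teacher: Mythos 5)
Your proof is correct and is essentially the paper's own argument: both rest on the decomposition $\Q=\Q_{\Phase}+i_{\bxi}(\h-\HH)$ from Lemma \ref{H-horizontality_of_observable} together with the observation that the bi-vertical term $i_{\bxi}(-1)^{n-q-1}\dd\p_A\wedge\dd\y^A$ in $i_{\vv}\kmsf$ can only be supplied by $\dd\Q_{\Phase}$, so that $\bxi$ is fixed by the common pullback. Your two refinements --- subtracting the observables so as to work in the kernel of the idempotent pullback, and spelling out why $i_{\bxi}\Omega=0$ forces $\bxi=0$ via the conditions $i_{\bxi}\dd\Sigma_{\lambda}=0$ --- merely make explicit the nondegeneracy step the paper leaves implicit in the phrase ``can be canceled only by part of $\dd\Q_{\Phase}$''.
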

\begin{proof}
	Let $\vv_i=\bxi_i+\ww_i+\uu_i\partial_{\h}$ be the decomposition, similar to (\ref{auxiliar_equation_iii}), of the hamiltonian vector $\vv_i\in\Ham^1_{\kmsf}\Tan\Fibre$ associated to the observable $\Q_i^{\Fibre}\in\Ham^{n-1}_{\kmsf}\Fibre$, where $i=1,2$. Each $\Q_i^{\Fibre}$ can be written as
	\be
	\Q^{\Fibre}_i=\Q_{\Phase}+\Q_i\text{, where\hs} \Q_{\Phase}=(\proj{\Phase}{\Fibre})^*(\Hmap)^*\Q_i^{\Fibre}.
	\ee
	For $\Q_i=i_{\bxi_i}(\h-\HH)$ we get
	\ben
	\dd\Q^{\Fibre}_i=\dd\Q_{\Phase}+\dd i_{\bxi_i}(\h-\HH)\label{auxiliar_IV}
	\een
	and also
	\ben
	i_{\vv_i}\kmsf=-i_{\bxi_i}\dd\h+i_{\bxi_i}(-1)^{n-q-1}\dd\p_A\wedge\dd\y^A+i_{\ww_i}(-1)^{n-q-1}\dd\p_A\wedge\dd\y^A-\uu_i.\label{auxiliar_V}
	\een
	Since the second term $i_{\bxi_i}(-1)^{n-q-1}\dd\p_A\wedge\dd\y^A$ in (\ref{auxiliar_V}) can be canceled only by part of $\dd\Q_{\Phase}$ in (\ref{auxiliar_IV}) we have that $\bxi_1=\bxi_2$ and therefore $\Q_1=\Q_2$.
\qed\end{proof}
This lemma shows that there exists at most one observable $\Q^{\Fibre}\in\Ham_{\kmsf}^{n-1}\Fibre$ for given form $\Q^{\Phase}\in\Sec(\Phase,\forms^{n-1}\Phase)$ over $\Phase$ and vice versa $\Ham^{n-1}_{\kmsf}\Fibre$ defines subspace of kinematical observables $\Ham_{\kmsf}^{n-1}\Phase=(\Hmap)^*\Ham^{n-1}_{\kmsf}\Fibre$ in the set of all $(n-1)$-forms $\Sec(\Phase,\forms^{n-1}\Phase)$ on dynamical phase space $\Phase$. Let $\Q^{\Phase}_i\in\Ham_{\kmsf}^{n-1}\Phase$ and let $\Q^{\Fibre}_i$ be its kinematical extension on $\Fibre$ then we can define the local kinematical Poisson bracket on $\Ham_{\kmsf}^{n-1}\Phase$ by
\ben
\lpb\Q^{\Phase}_1,\Q^{\Phase}_2\rpb^{\Phase}_{\kmsf}=(\Hmap)^*\lpb\Q^{\Fibre}_1,\Q^{\Fibre}_2\rpb_{\kmsf}\label{Phase_space_kinematical_PB}.
\een
and get Lie algebra of kinematical observables $(\Ham^{n-1}_{\kmsf}\Phase,\lpb.,.\rpb^{\Phase}_{\kmsf})$ on the dynamical phase space $\Phase$.\\
\hs It seems that we have two local Poisson brackets on dynamical phase space $\Phase$: kinematical algebra $(\Ham^{n-1}_{\kmsf}\Phase,\lpb.,.\rpb_{\kmsf}^{\Phase})$ just defined and $(\Ham^{n-1}_{\msf}\Phase,\lpb.,.\rpb_{\msf})$ as dynamical algebra. 
Problem is that the dynamical bracket $\lpb.,.\rpb^{\Phase}_{\msf}$ depends on the Hamiltonian. Due to (\ref{pullback_of_d_general_observable}) and
theorem \ref{theorem_eom_msf} the dynamical observables are constants of motion we are not able to measure any local information deppending on time evolution only by them. We will show in the next lemma \ref{Lemma_on_dynamical_subalgebra} that the dynamical algebra $(\Ham^{n-1}_{\msf}\Phase,\lpb.,.\rpb_{\msf})$ is subalgebra in $(\Ham^{n-1}_{\kmsf}\Phase,\lpb.,.\rpb^{\Phase}_{\kmsf})$ therefore the bracket defined by (\ref{Phase_space_kinematical_PB}) extends the algebra of observables $(\Ham^{n-1}_{\msf}\Phase,\lpb.,.\rpb_{\msf})$ on the dynamical phase space $\Phase$.\\
\hs Before we do that let us introduce following notation. If we denote
\be
\dd\bxi=\xi^{\mu}_{,\nu}\dd x^{\nu}\otimes\partial_{\mu}
\ee
for $\xi\in\Sec(\M,\Tan\M)$ then we can consider $\dd\bxi$ to be object, not tensor, of the set of vector valued forms\cite{Theory_of_vector_valued_differential_forms}, where the interior product of vector valued $r$-form $\beta=\sum\limits_a\alpha_a\otimes\vv_a\in\forms^r_{\xxx}\X\otimes\Tan_{\xxx}\X$ over $\xxx\in\X$, where $\X$ is general manifold, and $\forall a:\alpha_a\in\forms^r_{\xxx}\X,\vv_a\in\Tan_{\xxx}\X$ is definend for $\forall\theta\in\forms_{\xxx}\X$ by 
\be
i_{\beta}\theta=\sum\limits_a\alpha_a\wedge i_{\vv_a}\theta.
\ee
\begin{lemma}\label{Lemma_on_dynamical_subalgebra}
	$(\Ham^{n-1}_{\msf}\Phase,\lpb.,.\rpb_{\msf})$ is subalgebra in $(\Ham^{n-1}_{\kmsf}\Phase,\lpb.,.\rpb^{\Phase}_{\kmsf})$.
\end{lemma}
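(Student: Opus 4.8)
The plan is to prove the statement in two stages: first that $\Ham^{n-1}_{\msf}\Phase$ sits inside $\Ham^{n-1}_{\kmsf}\Phase$ as a subspace of $\Sec(\Phase,\forms^{n-1}\Phase)$, and then that the two brackets agree on it. Everything rests on the two identities tying the kinematical forms on $\Fibre$ to the dynamical ones on $\Phase$ through $\proj{\Phase}{\Fibre}$ and the Hamilton map $\Hmap$ of (\ref{Hamilton_map_definition}):
\[
\kmef=(\proj{\Phase}{\Fibre})^*\mef+(\h-\HH),\qquad
\kmsf=(\proj{\Phase}{\Fibre})^*\msf-\dd(\h-\HH),
\]
together with $\proj{\Phase}{\Fibre}\circ\Hmap=\id$ and $(\Hmap)^*(\h-\HH)=0$. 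Writing $\h=\bar h\,\dd\Sigma$ and $\HH=\bar H\,\dd\Sigma$, both identities are immediate from the coordinate expressions (\ref{mef_dual})--(\ref{msf_dynamical}).

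For the inclusion, take $\Q^{\Phase}\in\Ham^{n-1}_{\msf}\Phase$ with dynamical hamiltonian vector $\VV=\Bxi+\WW$, so $i_{\VV}\msf=\dd\Q^{\Phase}$. Guided by lemma \ref{H-horizontality_of_observable}, I set
\[
\Q^{\Fibre}:=(\proj{\Phase}{\Fibre})^*\Q^{\Phase}+i_{\Bxi}(\h-\HH),\qquad
\vv:=\Bxi+\WW+\uu\,\partial_{\h},
\]
and fix the free component $\uu=\bar u\,\partial_{\bar h}$ by demanding $i_{\vv}\kmsf=\dd\Q^{\Fibre}$. Using the decomposition of $\kmsf$, the projectability $(\proj{\Phase}{\Fibre})_*\vv=\VV$ (preceding lemma), and the Cartan formula $\dd\, i_{\Bxi}(\h-\HH)=\lder{\Bxi}(\h-\HH)-i_{\Bxi}\dd(\h-\HH)$, the relation collapses after cancelling $(\proj{\Phase}{\Fibre})^*(i_{\VV}\msf)$ to the single $n$-form identity $\lder{\Bxi}(\h-\HH)=-i_{\WW+\uu\partial_{\h}}\dd(\h-\HH)$. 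Since both sides are multiples of $\dd\Sigma$, comparing coefficients yields $\bar u=\VV(\bar H)-(\bar h-\bar H)\,\mathrm{div}\,\Bxi$, a well-defined function on $\Fibre$; hence $\Q^{\Fibre}\in\Ham^{n-1}_{\kmsf}\Fibre$, and as $(\Hmap)^*\Q^{\Fibre}=\Q^{\Phase}$ we get $\Q^{\Phase}\in\Ham^{n-1}_{\kmsf}\Phase$. Note that on the image of $\Hmap$, where $\bar h=\bar H$, this reduces to $\bar u=\VV(\bar H)$; this will be used below.

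To compare the brackets I pull (\ref{GPB_n_1}) for $\lpb\Q^{\Fibre}_1,\Q^{\Fibre}_2\rpb_{\kmsf}$ back along $\Hmap$ term by term. Because each $\vv_i$ projects to $\VV_i$ and $\proj{\Phase}{\Fibre}\circ\Hmap=\id$, contraction of a $(\proj{\Phase}{\Fibre})^*$-form followed by $(\Hmap)^*$ reproduces the contraction on $\Phase$, so the two decomposition identities give
\[
(\Hmap)^*\big(i_{\vv_1}i_{\vv_2}\kmsf\big)=i_{\VV_1}i_{\VV_2}\msf,\qquad
(\Hmap)^*\big(i_{\vv_1}i_{\vv_2}\kmef\big)=i_{\VV_1}i_{\VV_2}\mef,
\]
the stray pieces from $\dd(\h-\HH)$ and $(\h-\HH)$ dropping out because their pullbacks are proportional to $\bar h-\bar H$, to $\dd\bar h-\dd\bar H$, or to $\bar u_i-\VV_i(\bar H)$, each of which vanishes on the image of $\Hmap$. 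In the same way $(\Hmap)^*(i_{\vv_1}\Q^{\Fibre}_2)=i_{\VV_1}\Q^{\Phase}_2$ and $(\Hmap)^*(i_{\vv_2}\Q^{\Fibre}_1)=i_{\VV_2}\Q^{\Phase}_1$, since $i_{\vv_1}i_{\Bxi_2}(\h-\HH)=(\bar h-\bar H)\,i_{\Bxi_1}i_{\Bxi_2}\dd\Sigma$ pulls back to zero. Assembling these, and using that $(\Hmap)^*$ commutes with $\dd$, gives $\lpb\Q^{\Phase}_1,\Q^{\Phase}_2\rpb^{\Phase}_{\kmsf}=(\Hmap)^*\lpb\Q^{\Fibre}_1,\Q^{\Fibre}_2\rpb_{\kmsf}=\lpb\Q^{\Phase}_1,\Q^{\Phase}_2\rpb_{\msf}$. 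As $(\Ham^{n-1}_{\msf}\Phase,\lpb.,.\rpb_{\msf})$ is already a Lie algebra, the bracket remains inside $\Ham^{n-1}_{\msf}\Phase$, and together with the inclusion this is exactly the claimed subalgebra property.

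I expect the real obstacle to be the first stage, namely producing the kinematical lift and pinning down $\uu$ so that the kinematical hamiltonian relation holds. The subtlety is that the $(\h-\HH)$-terms must be carried consistently: the value $\bar u=\VV(\bar H)$ forced on the image of $\Hmap$ in the first stage is precisely what annihilates the leftover $\bar u_i-\VV_i(\bar H)$ contributions under $(\Hmap)^*$ in the second stage, so the two computations must be matched rather than performed in isolation.
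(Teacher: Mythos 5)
Your argument is correct and follows essentially the same route as the paper's proof: you construct the kinematical lift $\Q^{\Fibre}=(\proj{\Phase}{\Fibre})^*\Q^{\Phase}+i_{\Bxi}(\h-\HH)$ dictated by lemma \ref{H-horizontality_of_observable}, solve the kinematical hamiltonian condition for the $\partial_{\h}$-component, and then pull the bracket (\ref{GPB_n_1}) back along $\Hmap$, where every stray term dies because $\bar h=\bar H$, $\dd\bar h=\dd\bar H$ and $\bar u_i=\VV_i(\bar H)$ hold on the image of $\Hmap$. The only difference is bookkeeping: the paper fixes $\uu=i_{\ww}\dd\HH-i_{\dd\bxi}(\h-\HH)$ by term-matching and records the closed-form relation $\lpb\Q^{\Fibre}_1,\Q^{\Fibre}_2\rpb_{\kmsf}=(\proj{\Phase}{\Fibre})^*\lpb\Q^{\Phase}_1,\Q^{\Phase}_2\rpb_{\msf}+i_{\lsn\bxi_1,\bxi_2\rsn}(\h-\HH)$ before pulling back in one stroke, whereas your Cartan-formula collapse and term-by-term pullback — with $\bar u=\VV(\bar H)-(\bar h-\bar H)\,\mathrm{div}\,\Bxi$, which even carries the explicit $\Bxi(\bar H)$ contribution the paper's formula suppresses when $\HH$ depends explicitly on $x^{\mu}$ — amounts to the same computation.
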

\begin{proof}
	Let $\Q^{\Phase}\in\Ham^{n-1}_{\msf}\Phase$ and $\VV\in\Ham^1\Tan\Phase$ be its associated hamiltonian vector. Goal is to find $\Q^{\Fibre}\in\Ham^{n-1}_{\kmsf}\Fibre$ with associated vector $\vv=\bxi+\ww+\uu\partial_{\h}\in\Ham^1_{\kmsf}\Fibre$ such that $(\Hmap)^*\Q^{\Fibre}=\Q^{\Phase}$. We already know that if $\Q^{\Fibre}$ exists then it looks like $\Q^{\Fibre}=(\proj{\Phase}{\Fibre})^*\Q^{\Phase}+i_{\bxi}(\h-\HH)$. We have
	\be
	i_{\vv}\kmsf&=&\dd\Q^{\Fibre},\\
	i_{\vv}\big((\proj{\Phase}{\Fibre})^*\msf-\dd(\h-\HH)\big)&=&(\proj{\Phase}{\Fibre})^*(i_{\VV}\msf)+i_{\dd\bxi}(\h-\HH)-i_{\bxi}\dd(\h-\HH),\\
	i_{\vv}\big((\proj{\Phase}{\Fibre})^*\msf\big)-\uu+i_{\ww}\dd\HH&=&(\proj{\Phase}{\Fibre})^*(i_{\VV}\msf)+i_{\dd\bxi}(\h-\HH).
	\ee
	Since $\vv$ is $\HH$-projectable the last equation yields $(\proj{\Phase}{\Fibre})_*\vv=\VV$ and $\uu=i_{\ww}\dd\HH-i_{\dd\bxi}(\h-\HH)$. Therefore $\Q^{\Fibre}=(\proj{\Phase}{\Fibre})^*\Q^{\Phase}+i_{\bxi}(\h-\HH)$ is wanted extension for all $\forall\Q^{\Phase}\in\Ham_{\msf}^{n-1}\Phase$. We have 
	\be
	\lpb \Q^{\Fibre}_1,\Q^{\Fibre}_2\rpb_{\kmsf}=(\proj{\Phase}{\Fibre})^*\lpb \Q^{\Phase}_1,\Q^{\Phase}_2\rpb_{\msf}+i_{\lsn\bxi_1,\bxi_2\rsn}(\h-\HH)
	\ee
	for $\forall\Q_i^{\Phase}\in\Ham_{\msf}^{n-1}\Phase$ hence definition (\ref{Phase_space_kinematical_PB}) yields directly 
	$\lpb \Q^{\Phase}_1,\Q^{\Phase}_2\rpb_{\kmsf}^{\Phase}=\lpb \Q^{\Phase}_1,\Q^{\Phase}_2\rpb_{\msf}$.
\qed\end{proof}
\hspace{-7pt}We set $(\Ham^{n-1}\Fibre,\lpb.,.\rpb)=(\Ham^{n-1}_{\kmsf}\Fibre,\lpb.,.\rpb_{\kmsf})$ and $(\Ham^{n-1}\Phase,\lpb.,.\rpb)=(\Ham^{n-1}_{\kmsf}\Phase,\lpb.,.\rpb^{\Phase}_{\kmsf})$. Let us construct some observables.
\begin{lemma}\label{Lemma_on_general_smeared_canonical_observable}
	For $\forall\bpi_A\in\Sec(\M,\forms^{n-q-1}\M)$, $\forall\Bxi^A\in\Sec(\M,\forms^{q}\M)$ forms\\
	\begin{tabular}{l p{10.5cm}}
		$\phantom{(}i)$&$\y(\bpi)=\bpi_A\wedge\y^A$,\\
		$\phantom{(}ii)$&$\p(\Bxi)=\p_A\wedge\Bxi^A$,\\
	\end{tabular}\\
	where we write $\bpi_A=\proj{\M}{\Fibre}^{\bstar}\bpi_A$ and similar for $\Bxi^A$,  are observables on $\Fibre$. The one only non-trivial local Poisson bracket is
	\be
	\lpb\bpi_A\wedge\y^A,\p_B\wedge\Bxi^B\rpb=\bpi_A\wedge\Bxi^A.
	\ee
\end{lemma}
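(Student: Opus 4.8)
The plan is to produce explicit hamiltonian vector fields for the two families and then feed them into the defining formula (\ref{GPB_n_1}) read on $(\Fibre,\kmsf)$, i.e. with $\kmsf$ in place of $\blomega$ and $\kmef$ in place of $\mef$. The computation is governed by two facts. First, the elementary contraction identities
\[
i_{\vv^B\partial_{\y^B}}\dd\y^A=\vv^A,\qquad i_{\ww_B\partial_{\p_B}}\dd\p_A=\ww_A,\qquad i_{\uu\partial_\h}\dd\h=\uu,
\]
which are read off from $\dd\y^A=\dd y^A_{(\mu)_q}\wedge\dd x^{(\mu)_q}$ and the analogous expansions of $\dd\p_A$ and $\dd\h$. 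Second, the observation that $\y^A,\p_A,\h,\bpi_A,\Bxi^A$ are all $\M$-horizontal, whereas the hamiltonian vectors written below are $\M$-vertical (they carry no $\partial_\mu$); this forces almost every interior product to collapse to a single term.

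For (i) and (ii) I would guess the vectors living in the directions conjugate to each observable,
\[
\vv_\bpi=\bpi_A\partial_{\p_A}-(\dd\bpi_A\wedge\y^A)\partial_\h,\qquad \vv_\Bxi=-\Bxi^A\partial_{\y^A}+(-1)^{n-q}(\p_A\wedge\dd\Bxi^A)\partial_\h,
\]
and check directly, using $\kmsf=(-1)^{n-q-1}\dd\p_A\wedge\dd\y^A-\dd\h$ and the identities above, that $i_{\vv_\bpi}\kmsf=\dd(\bpi_A\wedge\y^A)$ and $i_{\vv_\Bxi}\kmsf=\dd(\p_A\wedge\Bxi^A)$. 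The $\partial_\h$ components are chosen precisely to reproduce the $\dd\bpi_A\wedge\y^A$ and $(-1)^{n-q-1}\p_A\wedge\dd\Bxi^A$ pieces of the respective exterior derivatives. This establishes that both forms are observables.

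For the cross bracket in (iii) I set $\Q_1=\bpi_A\wedge\y^A$, $\Q_2=\p_B\wedge\Bxi^B$ with the vectors $\vv_\bpi,\vv_\Bxi$. Since $i_{\vv_\Bxi}\kmsf=\dd\Q_2$, the leading term of (\ref{GPB_n_1}) is $i_{\vv_\bpi}\dd\Q_2=i_{\vv_\bpi}\big(\dd\p_A\wedge\Bxi^A+(-1)^{n-q-1}\p_A\wedge\dd\Bxi^A\big)$, and $\M$-verticality of $\vv_\bpi$ kills every factor except $i_{\vv_\bpi}\dd\p_A=\bpi_A$, leaving exactly $\bpi_A\wedge\Bxi^A$. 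It then remains to show the correction $\dd(i_{\vv_\bpi}\Q_2-i_{\vv_\Bxi}\Q_1-i_{\vv_\bpi}i_{\vv_\Bxi}\kmef)$ vanishes. Here $i_{\vv_\bpi}\Q_2=i_{\vv_\Bxi}\Q_1=0$ because the $\M$-vertical vectors annihilate the $\M$-horizontal $\Q_1,\Q_2$, while a one-line computation gives $i_{\vv_\Bxi}\kmef=\p_A\wedge\Bxi^A$ and hence $i_{\vv_\bpi}i_{\vv_\Bxi}\kmef=i_{\vv_\bpi}(\p_A\wedge\Bxi^A)=0$; the correction is therefore identically zero and $\lpb\bpi_A\wedge\y^A,\p_B\wedge\Bxi^B\rpb=\bpi_A\wedge\Bxi^A$.

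The two self-brackets vanish for the same structural reason. For $\lpb\y(\bpi),\y(\bpi')\rpb$ both vectors act only in the $\partial_{\p_A},\partial_\h$ directions, so in $i_{\vv_\bpi}\dd(\bpi'_A\wedge\y^A)$ there is no surviving $\dd\y^A$ to contract and all horizontal factors are annihilated; dually, for $\lpb\p(\Bxi),\p(\Bxi')\rpb$ both vectors act only in the $\partial_{\y^A},\partial_\h$ directions and no $\dd\p_A$ survives. Each correction term collapses by the same horizontality argument. The only real difficulty is the bookkeeping: getting the $(-1)^{n-q}$ signs and the multi-index contractions right in the three elementary identities, together with the correct reading of $\partial_{\y^A},\partial_{\p_A},\partial_\h$ as the form-valued vector fields $v^A_{(\mu)_q}\partial_{y^A_{(\mu)_q}}$ and so on; once these are fixed, horizontality makes every interior product in the lemma single-termed.
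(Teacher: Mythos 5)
Your proposal is correct and follows essentially the same route as the paper's own proof: you guess the same hamiltonian vector fields $\vv(\bpi)$ and $\vv(\Bxi)$, verify (\ref{Hamiltonian_relation}) directly, and then use $\M$-horizontality of the observables together with $\M$-verticality of the vectors to collapse every term of (\ref{GPB_n_1}). Your $\partial_{\h}$-coefficient $(-1)^{n-q}\,\p_A\wedge\dd\Bxi^A$ is in fact the degree-correct version of the paper's $-(-1)^{n-q-1}(\p_A\wedge\Bxi^A)$ (an apparent typo there, as confirmed by the Einstein--Cartan instance $\vv(\EE)=-\EE^a\partial_{\e^a}-(\GGG_a\wedge\dd\EE^a)\partial_{\h}$), so your verification is, if anything, slightly more careful than the printed one.
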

\begin{proof}Direct calculation  shows that vectors
	\[\vv(\bpi)=\bpi_A\partial_{\p_A}-(\dd\bpi_A\wedge\y^A)\partial_{\h}\]
	and
	\[\vv(\Bxi)=-\bxi^A\partial_{\y^A}-(-1)^{n-q-1}(\p_A\wedge\Bxi^A)\partial_{\h}\]
	solve equation (\ref{Hamiltonian_relation}) for appropriate observables $\y(\bpi)$ and $\p(\Bxi)$, respectively. Since both observables are $\M$-horizontal and their hamiltonian vectors are vertical on $\Fibre\to\M$ we have from (\ref{GPB_n_1}) immediately
	\[\lpb\p(\Bxi),\p(\Bxi ')\rpb=\lpb\y(\bpi),\y(\bpi ')\rpb=0\]
	and
	\[\lpb\y(\bpi),\p(\Bxi)\rpb=i_{\vv(\bpi)}\dd(\p_A\wedge\Bxi^A)=\bpi_A\wedge\Bxi^A.\]
\qed\end{proof}
\begin{lemma}\label{Noethers_current_diffeomorphism_general_case}
	There exist representation of $\DiffM$ on $\Ham^{n-1}\Fibre$ generated by Noether's currents $\JJ^{\Fibre}(\bxi)$. $\JJ^{\Fibre}(\bxi)$ are defined for every $\bxi\in\alg\DiffM\equiv\Sec(\M,\Tan\M)$ by
	\be
	\JJ^{\Fibre}(\bxi)=(-1)^{n-q}i_{\bxi}\p_A\wedge\dd\y^A+\p_A\wedge\dd i_{\bxi}\y^A+i_{\bxi}\h
	\ee
	and satisfy identities
	\ben
	\Lpb\JJ^{\Fibre}(\bxi),\JJ^{\Fibre}(\bxi ')\Rpb&=&\JJ^{\Fibre}\big(\lsn\bxi,\bxi '\rsn\big),\label{l1}\\
	\Lpb\y(\bpi),\JJ^{\Fibre}(\bxi)\Rpb&=&\y(-\lder{\bxi}\bpi),\label{l2}\\
	\Lpb\p(\Bxi),\JJ^{\Fibre}(\bxi)\Rpb&=&\p(-\lder{\bxi}\Bxi).\label{l3}
	\een
\end{lemma}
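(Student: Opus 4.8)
The plan is to realise every current as a contraction of the canonical form, $\JJ^{\Fibre}(\bxi)=i_{\widehat{\bxi}}\kmef$, where $\widehat{\bxi}\in\Sec(\Fibre,\Tan\Fibre)$ is the canonical lift of the spacetime field $\bxi\in\Sec(\M,\Tan\M)$ to the total space $\Fibre=\JY^*$. First I would define $\widehat{\bxi}$ as the unique vector field that projects onto $\bxi$ under $\proj{\M}{\Fibre}$ and leaves every form-valued fibre coordinate invariant, $\lder{\widehat{\bxi}}\y^A=\lder{\widehat{\bxi}}\p_A=\lder{\widehat{\bxi}}\h=0$; since $\y^A,\p_A,\h$ are themselves forms on $\M$, these conditions fix the vertical components of $\widehat{\bxi}$ uniquely in terms of $\xi^{\mu}$ and its first derivatives. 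Contracting $\kmef=(-1)^{n-q}\p_A\wedge\dd\y^A+\h$ with $\widehat{\bxi}$ and using $i_{\widehat{\bxi}}\dd\y^A=-\dd\,i_{\bxi}\y^A$ (which follows from $\lder{\widehat{\bxi}}\y^A=0$ together with $i_{\widehat{\bxi}}\y^A=i_{\bxi}\y^A$, valid because $\widehat{\bxi}$ projects to $\bxi$) reproduces exactly the three terms of the stated formula for $\JJ^{\Fibre}(\bxi)$. Because each fibre coordinate is invariant one gets $\lder{\widehat{\bxi}}\kmef=0$, so $\widehat{\bxi}$ is a Noether vector; Cartan's identity then gives $i_{\widehat{\bxi}}\kmsf=\dd\JJ^{\Fibre}(\bxi)$, i.e. $\JJ^{\Fibre}(\bxi)\in\Ham^{n-1}\Fibre$ with Hamiltonian vector $\widehat{\bxi}$.

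The three identities would then all follow from a single master relation: for any observable $\Q\in\Ham^{n-1}\Fibre$ with Hamiltonian vector $\vv_{\Q}$ and any Noether current $\JJ_{\widehat{\bxi}}$,
\[
\lpb\Q,\JJ_{\widehat{\bxi}}\rpb=-\lder{\widehat{\bxi}}\Q .
\]
To obtain this I would insert $\vv_1=\vv_{\Q}$, $\vv_2=\widehat{\bxi}$ into the bracket definition (\ref{GPB_n_1}); since $i_{\widehat{\bxi}}\kmef=\JJ_{\widehat{\bxi}}$, the two copies of $i_{\vv_{\Q}}\JJ_{\widehat{\bxi}}$ inside the exact term cancel, leaving $i_{\vv_{\Q}}i_{\widehat{\bxi}}\kmsf-\dd\,i_{\widehat{\bxi}}\Q$. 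Writing $i_{\vv_{\Q}}i_{\widehat{\bxi}}\kmsf=-i_{\widehat{\bxi}}i_{\vv_{\Q}}\kmsf=-i_{\widehat{\bxi}}\dd\Q$ collapses the whole expression to $-\big(i_{\widehat{\bxi}}\dd+\dd\,i_{\widehat{\bxi}}\big)\Q=-\lder{\widehat{\bxi}}\Q$.

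Applying the master relation with $\Q=\y(\bpi)=\bpi_A\wedge\y^A$ (an observable by lemma \ref{Lemma_on_general_smeared_canonical_observable}) and using $\lder{\widehat{\bxi}}\y^A=0$ together with $\lder{\widehat{\bxi}}\bpi_A=\lder{\bxi}\bpi_A$ (as $\bpi_A$ is pulled back from $\M$) gives $\lpb\y(\bpi),\JJ^{\Fibre}(\bxi)\rpb=-(\lder{\bxi}\bpi_A)\wedge\y^A=\y(-\lder{\bxi}\bpi)$, which is (\ref{l2}); identity (\ref{l3}) follows identically from $\Q=\p(\Bxi)$ and $\lder{\widehat{\bxi}}\p_A=0$. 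For (\ref{l1}) I would invoke lemma \ref{lemma_Noether_charges_and_their_LPB}, which for the two Noether vectors $\widehat{\bxi},\widehat{\bxi'}$ yields $\lpb\JJ^{\Fibre}(\bxi),\JJ^{\Fibre}(\bxi')\rpb=\JJ_{\lsn\widehat{\bxi},\widehat{\bxi'}\rsn}$, and then show the lift is a Lie-algebra homomorphism, $\lsn\widehat{\bxi},\widehat{\bxi'}\rsn=\widehat{\lsn\bxi,\bxi'\rsn}$. The latter is immediate from uniqueness of the lift: the commutator projects onto $\lsn\bxi,\bxi'\rsn$ and satisfies $\lder{\lsn\widehat{\bxi},\widehat{\bxi'}\rsn}\y^A=\lder{\widehat{\bxi}}\lder{\widehat{\bxi'}}\y^A-\lder{\widehat{\bxi'}}\lder{\widehat{\bxi}}\y^A=0$, and likewise for $\p_A,\h$, so it must coincide with the canonical lift of $\lsn\bxi,\bxi'\rsn$. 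This simultaneously establishes that $\bxi\mapsto\JJ^{\Fibre}(\bxi)$ is the asserted representation of $\DiffM$ on $\Ham^{n-1}\Fibre$.

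The main obstacle I expect lies in constructing $\widehat{\bxi}$ and verifying that $i_{\widehat{\bxi}}\kmef$ equals the stated current: because the fibre coordinates $\y^A,\p_A,\h$ are $\M$-forms of varying degrees $q=q_A$, both the interior product and the Lie derivative act at once on the fibre index and on the form index, and one must check that the vertical components forced by $\lder{\widehat{\bxi}}\y^A=0$ are globally well defined and that the signs $(-1)^{n-q}$ come out correctly degree by degree. Once the lift and the invariance $\lder{\widehat{\bxi}}\kmef=0$ are secured, the remainder is formal manipulation with Cartan's identity and the bracket definition.
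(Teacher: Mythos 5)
Your proof is correct and follows essentially the same route as the paper: your canonical lift $\widehat{\bxi}$, characterized by $\lder{\widehat{\bxi}}\y^A=\lder{\widehat{\bxi}}\p_A=\lder{\widehat{\bxi}}\h=0$, is exactly the paper's generator $\ww(\bxi)=\bxi-i_{\dd\bxi}\y^A\partial_{\y^A}-i_{\dd\bxi}\p_A\partial_{\p_A}-i_{\dd\bxi}\h\partial_{\h}$ (the paper obtains it as the generator of the d-jet prolongation of the pullback action via theorem \ref{Theorem_d_jet_homomorphism_extension}, you by an equivalent uniqueness argument), and the paper likewise proves (\ref{l1}) from lemma \ref{lemma_Noether_charges_and_their_LPB} together with $\lsn\ww(\bxi),\ww(\bxi')\rsn=\ww(\lsn\bxi,\bxi'\rsn)$, and (\ref{l2})--(\ref{l3}) from the reduced bracket $\lpb\Q,\JJ^{\Fibre}(\bxi)\rpb=-\lder{\ww(\bxi)}\Q$, whose derivation from (\ref{GPB_n_1}) you spell out where the paper merely asserts it.
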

\begin{proof}
	Let $\bleta\in\DiffM$ then the map defined on kinematical bundle $\Fibre$ by
	\be
	\bleta_{\Fibre}:(\xxx,\y^A,\p_A,\h)\mapsto
	\Big(\bleta(\xxx),\big(\bleta^{-1}(\xxx)\big)^*\y^A,\big(\bleta^{-1}(\xxx)\big)^*\p_A,\big(\bleta^{-1}(\xxx)\big)^*\h\Big)
	\ee
	can be extended into the d-jet diffeomorphism on $\Fibre$ due to theorem \ref{Theorem_d_jet_homomorphism_extension}. Let $\bleta(\lambda)\in\DiffM$ be one-parameter group generated by vector field $\bxi\in\alg{\DiffM}$ then its d-jet extension $\bleta_{\Fibre}$ is generated by vector
	\be
	\ww(\bxi)=\bxi-i_{\dd\bxi}\y^A\partial_{\y^A}-i_{\dd\bxi}\p_A\partial_{\p_A}-i_{\dd\bxi}\h\partial_{\h}.
	\ee
	Since $\lder{\ww(\bxi)}\kmef=0$ hence $\JJ^{\Fibre}(\bxi)=i_{\ww(\bxi)}\kmef$ is kinematical Noether's charge. We have $\lsn\ww(\bxi),\ww(\bxi ')\rsn=\ww\big(\lsn\bxi,\bxi '\rsn\big)$ therefore lemma \ref{lemma_Noether_charges_and_their_LPB} yields (\ref{l1}). If we use that $\JJ^{\Fibre}(\bxi)$ is Noether's charge then the bracket (\ref{GPB_n_1}) can be reduced into the form
	\be
	\lpb\Q,\JJ^{\Fibre}(\bxi)\rpb=-\lder{\ww(\bxi)}\Q
	\ee
	and relations $\lder{\ww(\bxi)}\y^A=\lder{\ww(\bxi)}\p_A=0$ imply (\ref{l2}) and (\ref{l3}) immediately.
\qed\end{proof}
\begin{theorem}\label{Observable_canonical_decomposition_theorem}
	General observable $\f$ on $\Fibre$ can be locally decomposed as
	\ben
	\f=\JJ^{\Fibre}(\bxi)+\Q+\dd\alpha,\label{Observable_canonical_decomposition_formula}
	\een
	where $\JJ^{\Fibre}(\bxi)$ is given by previous lemma \ref{Noethers_current_diffeomorphism_general_case}, $\Q$ is $\M$-horizontal simply differentiable $(n-1)$-form deppending only on $\xxx,\y^A,\p_A$ and $\alpha$ is arbitrary $(n-2)$-form on $\Fibre$. If each fibre $\Fibre_{\xxx}=\proj{\M}{\Fibre}^{-1}(\xxx)$ is contractible then (\ref{Observable_canonical_decomposition_formula}) holds globally.
\end{theorem}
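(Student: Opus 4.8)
The plan is to strip from $\f$ both its dependence on the fibre coordinate $\h$ and the horizontal part of its hamiltonian vector field by subtracting a single diffeomorphism current $\JJ^{\Fibre}(\bxi)$, and then to show that what remains is $\M$-horizontal modulo an exact form. First I would take the hamiltonian vector $\vv=\bxi+\ww+\uu\partial_{\h}\in\Ham^{1}_{\kmsf}\Tan\Fibre$ associated to $\f$. By the projectability lemma proved above, its horizontal part $\bxi$ depends on $x^{\mu}$ only, so $\bxi\in\alg{\DiffM}\equiv\Sec(\M,\Tan\M)$ and the Noether current $\JJ^{\Fibre}(\bxi)$ of Lemma~\ref{Noethers_current_diffeomorphism_general_case} is well defined.

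Next I would set $\g=\f-\JJ^{\Fibre}(\bxi)$, which is again an observable by linearity. Two cancellations occur simultaneously. On the one hand the hamiltonian vector of $\JJ^{\Fibre}(\bxi)$ is $\ww(\bxi)$, whose horizontal part is exactly $\bxi$; hence the hamiltonian vector $\vv-\ww(\bxi)$ of $\g$ is $\M$-vertical. On the other hand Lemma~\ref{H-horizontality_of_observable} shows that the whole $\h$-dependence of $\f$ is carried by the term $i_{\bxi}(\h-\HH)$, i.e.\ by $i_{\bxi}\h$, which is precisely the $\h$-dependent term $i_{\bxi}\h$ of $\JJ^{\Fibre}(\bxi)$. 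Therefore $\g$ is $\h$-independent and has an $\M$-vertical hamiltonian vector.

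It then remains to show that such a $\g$ equals $\Q+\dd\alpha$ with $\Q$ an $\M$-horizontal simply differentiable $(n-1)$-form in $\xxx,\y^{A},\p_{A}$. Writing the hamiltonian relation $i_{\vv_{\g}}\kmsf=\dd\g$ with $\vv_{\g}$ $\M$-vertical, the left-hand side is at most linear in the vertical differentials $\dd\y^{A},\dd\p_{A}$, which strongly restricts the vertical form-degree of $\g$. I would decompose $\g$ according to the number of vertical legs and use the fibrewise Poincar\'e lemma to absorb every term carrying one or more $\dd\y^{A}$ or $\dd\p_{A}$ into an exact $(n-1)$-form $\dd\alpha$. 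The surviving $\M$-horizontal remainder depends only on $\xxx,\y^{A},\p_{A}$ because $\g$ is already $\h$-independent, and Theorem~\ref{Theorem_on_simple_differentiability} (applied with $m=n-1<n$) identifies it, through the polynomial structure forced by the hamiltonian relation, as a simply differentiable form $\Q$. Finally, the only non-global step is this Poincar\'e argument in the fibre; if each fibre $\Fibre_{\xxx}$ is contractible the relevant fibrewise cohomology vanishes and $\alpha$ exists globally, yielding the global decomposition.

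The hard part is this last step: disentangling the exact piece from the $\M$-horizontal piece and proving that the remainder is simply differentiable rather than merely smooth. This requires exploiting that $\dd\g$ inherits a restricted, at-most-linear dependence on the vertical differentials from the multisymplectic form $\kmsf$, rather than treating $\g$ as an unconstrained $(n-1)$-form; the degree bookkeeping there is the genuine content of the theorem.
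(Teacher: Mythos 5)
Your proposal follows the paper's own proof essentially step for step: extract the projectable horizontal part $\bxi$ of the hamiltonian vector, subtract the Noether current $\JJ^{\Fibre}(\bxi)$ to obtain an $\h$-independent remainder with $\M$-vertical hamiltonian vector, then exploit the at-most-linear dependence of $i_{\vv}\kmsf$ on the vertical differentials to strip off all terms with vertical legs via a fibrewise Poincar\'e-lemma iteration (which is exactly where fibre contractibility enters for the global claim), leaving an $\M$-horizontal simply differentiable $\Q$. The only cosmetic difference is bookkeeping of citations: you invoke Lemma \ref{H-horizontality_of_observable} and Theorem \ref{Theorem_on_simple_differentiability} explicitly for the $\h$-independence and the simple differentiability, where the paper argues the first inline and refers back to the proof of Theorem \ref{Theorem_d_jet_homomorphism_extension} for the second.
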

\begin{proof}
	Let $\uu$ be general hamiltonian vector field associated to observable $\f$ and let $\uu$ be given by $\uu=\bxi+\uu^A\partial_{\y^A}+\uu_A\partial_{\p_A}+\bmu\partial_{\h}$.
	If we define vertical hamiltonian vector field
	\be
	\vv=\uu-\ww(\bxi)=\vv^A\partial_{\y^A}+\vv_A\partial_{\p_A}+\bnu\partial_{\h},
	\ee
	then we can decompose $\f=\JJ^{\Fibre}(\bxi)+\tilde\Q$, where $\tilde\Q$ is independent of $\h$ and it is associated with the vertical hamiltonian vector field $\vv$, i.e. we have
	\ben
	\dd\tilde\Q=i_{\vv}\kmsf=(-1)^{n-q-1}\vv_A\wedge\dd\y^A-\dd\p_A\wedge\vv^A-\bnu.\label{aux_eq_theorem}
	\een
	If we employ notation $z^I=\big(y^A_{(\mu)_q}, p_{A(\mu)^{q+1}}\big)$ then we can write
	\be
	\tilde{\Q}=\q+\dd z^{I_1}\wedge\q_{I_1}+\dots+\dd z^{(I)_{n-1}}\wedge\q_{(I)_{n-1}},
	\ee 
	where $\q,\dots\q_{(I)_{n-1}}$ are $M$-horizontal $(n-1)$-,$\dots$,$0$- forms  on $\Fibre$, respectively. We have
	\be
	\dd\tilde{\Q}&=&\phantom{+}\dd\q-\dd z^I\wedge(\dd z^J\wedge\q_{I,J}+\dd^{\M}\q_I)+\dots\\
	&&+(-1)^{n-1}\dd z^{(I)_{n-1}}\wedge(\dd z^{I_n}\wedge\q_{(I)_{n-1},I_n}+\dd^{\M}\q_{(I)_{n-1}})
	\ee
	where $\dd^{\M}\q=\dd x^{\mu}\wedge\lder{\partial_{\mu}}\q$ is horizontal part of the exterior derivative operator $\dd$. Relation (\ref{aux_eq_theorem}) yields $\dd\tilde{\Q}(\vv_1,\dots,\vv_j)=0$, where $1<j\leq n-1$,  for all vertical vectors $\forall\vv_1,\dots,\vv_n\in\Ver\Fibre$. This is possible only if there exist on a fibre $\Fibre_{\xxx}=\proj{\M}{\Fibre}^{-1}(\xxx)$ locally, or if $\Fibre_{\xxx}$ is contractible then globally, $\M$-horizontal $\tilde{\q}_{(I)_{n-2}}$ such that $\q_{(I)_{n-1}}=(-1)^{n-2}(n-1)\tilde{\q}_{[(I)_{n-2},I_{n-1}]}$. Thus, we can write
	\be
	\tilde{\Q}=\Q_{n-1}&=&\phantom{+}\q+\dd z^{I_1}\wedge\q_{I_1}+\dots+\dd z^{(I)_{n-2}}\wedge\q_{(I)_{n-2}}-(-1)^{n-2}\dd z^{(I)_{n-2}}\wedge\dd^{\M}\tilde{\q}_{(I)_{n-2}}\\
	&&+\dd\big(\dd z^{(I)_{i-2}}\wedge\tilde{\q}_{(I)_{n-2}}\big)\\
	&=&\phantom{+}\Q_{n-2}+\dd\big(\dd z^{(I)_{i-2}}\wedge\tilde{\q}_{(I)_{n-2}}\big).
	\ee
	If we continue in calculation of $\Q_i$'s until $\Q_0$ then we get $\tilde{\Q}=\Q_0+\dd\alpha$, where $\Q=\Q_0$ is $\M$-horizontal $(n-1)$-form, keeping
	\be
	i_{\vv}\kmsf=(-1)^{n-q-1}\vv_A\wedge\dd\y^A-\dd\p_a\wedge\vv^A-\bnu=\dd\Q.
	\ee
	Now, recall the proof of theorem \ref{Theorem_d_jet_homomorphism_extension} we see that $\Q$ is simply differentiable and we also get
	\ben
	\vv_A&=&\phantom{-}\frac{\partial^R\Q}{\partial\y^A},\nonumber\\
	\vv^A&=&-\frac{\partial^L\Q}{\partial\p_A},\label{expression_of_hamiltonian_vector_for_horizontal_observable}\\
	\bnu^{\phantom{A}}&=&-\dd^{\M}\Q.\nonumber
	\een
\qed\end{proof}
The local Poisson brackets among two $\M$-horizontal observables $\Q_1,\Q_2$ can be expressed by relations (\ref{expression_of_hamiltonian_vector_for_horizontal_observable}) via
\be
\lpb\Q_1,\Q_2\rpb=\phantom{-}\frac{\partial^R\Q_1}{\partial\y^A}\wedge\frac{\partial^L\Q_2}{\partial\p_A}
-\frac{\partial^R\Q_2}{\partial\y^A}\wedge\frac{\partial^L\Q_1}{\partial\p_A}.
\ee
This expression mimics the well known formula of the classical mechanics.\\
\hs Theorem \ref{Observable_canonical_decomposition_theorem} shows that general observable $\f$ on $\Fibre$ can be locally decomposed up to the exact term as $\f=\JJ^{\Fibre}(\bxi)+\Q$. Theorems
\ref{Theorem_on_simple_differentiability}, \ref{Theorem_d_jet_homomorphism_extension} and relations (\ref{expression_of_hamiltonian_vector_for_horizontal_observable}) yield that its hamiltonian flow is $d$-jet diffeomorphism. Hence we see, that the hamiltonian flows are homomorphisms in the category of $d$-jet bundles.

\section{Einstein-Cartan Theory}\label{section_ECT}
\subsection{Configuration Bundle of Einstein-Cartan Theory}
In the theory of General Relativity or Einstein-Cartan theory the gravitational interaction is described through evolution of the geometry of the spacetime $\M\simeq\BSigma\times\R$. Therefore we can consider the spacetime only as a topological manifold. For simplicity we assume that $\BSigma$ is compact oriented boundaryless three dimensional manifold.\\
\hs Let $\BSigma_{0}$ be given embedding of $\BSigma$ in $\M$. $\BSigma_{0}$ splits $\M$ into two parts. We choose one and denote its closure as $\M^+_{\BSigma_{0}}$. Closure of the complementar set is denoted as $\M^-_{\BSigma_{0}}$. $\bar{\M}^{\pm}_{\BSigma_{0}}$ denote one point compactifications of $\M^{\pm}_{\BSigma_0}$ and if $\pm\binfty$ are added points called future and past of $\M$ then two point compactification of $\M$ is defined as $\bar{\M}=\{\pm\binfty\}\cup\M$. Diffeomorphism $\sss_{\M}:\BSigma\times\R\to\M$ defines for each $t\in\R$ an embedding $\bt:\Sigma\to\M$ by $\bt(\xxx_{\BSigma})=\sss_{\M}(\xxx_{\BSigma},t)$, where $\xxx_{\BSigma}\in\BSigma$. If $\lim\limits_{t\to\pm\infty}\sss(\xxx_{\BSigma},t)=\pm\binfty$ then the map $\sss_{\M}$ is called slicing of $\M$. We say that curve $\bgm$ on $\M$ is topologically causal, or t-causal, if there exists a slicing $\sss_{\M}$ of $\M$ such that the curve $\bgm$ intersects every $\BSigma_{\bt}=\bt(\BSigma)$ just once. t-causal curve $\bgm$ is called future oriented if $\lim\limits_{s\to\pm\infty}\bgm(s)=\pm\binfty$, where $s$ is a parameter on $\bgm$.\\
\hs Let $\E$ denote a graded bundle of right-handed coframes over $\M$, i.e. the fibre bundle $\proj{\M}{\E}:\E\to\M$ of right-handed basis on the cotangent bundle $\Tan^{\bstar}\M$ defined by
\be
\E=\Union\limits_{\xxx,\e}\e_{\xxx},
\ee
where $\e_{\xxx}=(\xxx,\e^a)$ is right-handed base of the cotangent space $\Tan_{\xxx}^{\bstar}\M$. Its typical fibre $\E^f\simeq\E_{\xxx}=\proj{\M}{\E}^{-1}(\xxx)$ is diffeomorphic with $\GL^+(\R,4)=\{g\in\GL^+(\R,4),\det(g)>0\}$, where the positive general linear group $\GL^+(\R,4)$ is considered as manifold, only. In addition, we also assume that the topological shapes of $\BSigma$ are restricted in such a way that the bundle $\E$ is trivial. General case will be discussed in the forthcomming parts of the series.\\
\hs In the Einstein-Cartan theory $\e\in\E$ is interpreted as an orthonormal coframe settled in the point $\xxx=\proj{\M}{\E}(\e)$. Since the orthonormal coframe $\e=(\xxx,\e^a)$ consists of $\M$-horizontal forms $\e^a$ and $\Tan_{\xxx}\M={\proj{\M}{\E}}_{\bstar}\Tan_{\e}\E$ this defines a metric $\g^{(\e)}$ with Minkowski signature on $\Tan_{\xxx}\M$ by
\ben
\g^{(\e)}({\proj{\M}{\E}}_{\bstar}\vv,{\proj{\M}{\E}}_{\bstar}\ww)=\eta_{ab}\e^a({\proj{\M}{\E}}_{\bstar}\vv)\e^b({\proj{\M}{\E}}_{\bstar}\ww),
\label{ECT_metric}
\een
where $\vv,\ww\in\Tan_{\e}\E$. We can also consider oriented area and volume forms $\blomega^{(\e)}_{ab}$ and $\blomega^{(\e)}_{a}$ in $\xxx\in\M$ 
\ben
\begin{tabular}{lcl}
	$\blomega^{(\e)}_{ab}({\proj{\M}{\E}}_{\bstar}\vv,{\proj{\M}{\E}}_{\bstar}\ww)$&$=$&
	$\left(\frac{1}{2}\bleps_{abcd}\e^c\wedge\e^d\right)(\vv,\ww),$\\
	$\blomega^{(\e)}_{a}({\proj{\M}{\E}}_{\bstar}\uu,{\proj{\M}{\E}}_{\bstar}\vv,{\proj{\M}{\E}}_{\bstar}\ww)$&$=$&
	$\left(\frac{1}{3!}\bleps_{abcd}\e^b\wedge\e^c\wedge\e^d\right)(\uu,\vv,\ww),$
\end{tabular}
\label{ECT_Area_Volume_forms}
\een
where $\uu,\vv,\ww\in\Tan_{\e}\E$.\\
\hs Symmetric bilinear form $\eta_{ab}\e^a\otimes\e^b$ on ${\Tan}_{\e}\E$ is invariant under proper Lorentz transformations $\big(\xxx,\e^a\big)\to\big(\xxx,\OO{a}{\bar{a}}(\e)\e^{\bar{a}}\big)$, i.e. $\det\OO{a}{\bar{a}}(\e)=1$ and $\xxx,\OO{0}{0}(\e)>0$. Such transformation can be exteneded into d-jet diffeomorphism if and only if the matrix $\OO{a}{\bar{a}}(\e)$ depends only on $\xxx$, i.e. $\OO{a}{\bar{a}}(\e)=\OO{a}{\bar{a}}(\xxx)$. Group of such local (proper) Lorentz transformations is denoted by $\SO$.\\
\hs In the standard theory of General Relativity the parallel transport is given by\\ Riemann-Levi-Civita connection, on the other hand the Einstein-Cartan theory works with general metric-compatible connection which should be considered as independent variable. If $\e\in\E$ is orthonormal coframe over the point $\xxx=\proj{\M}{\E}(\e)\in\M$ then the covariant derivative operator $\nabla$ of the metric-compatible connection is given by $\nabla_{\vv}\e^a=-i_{\vv}\AAAA{a}{b}\e^b$ on small neighborhood of $\xxx$, where $\vv\in\Tan_{\xxx}\M$ and $\AAA^{ab}=-\AAA^{ba}=\AAAA{a}{c}\eta^{bc}$ is the connection 1-form.\\
\hs Let $\proj{\E}{\Y}:\Y\to\E$ denote a fibre bundle with coordinates $(\xxx,\e^a,\AAA^{ab})$. Since the bundle $\E\to\M$ is trivial under our assumptions the bundle $\proj{\M}{\E}\circ\proj{\E}{\Y}=\proj{\M}{\Y}:\Y\to\M$ is also trivial. Bundle $\proj{\M}{\Y}:\Y\to\M$ is obviously graded manifold over $\M$.\\
\hs Relation $\nabla_{\vv}\e^a=-i_{\vv}\AAAA{a}{b}\e^b$ yields that the action of the local Lorentz transformation $\OO{}{}\in\SO$ on $\Y$ is given by
\ben
\OO{}{}:(x^{\mu},\e^a,\AAA^{ab})\mapsto(x^{\mu},\OO{a}{\bar{a}}\e^{\bar{a}},
\OO{a}{\bar{a}}\OO{b}{\bar{b}}\AAA^{\bar{a}\bar{b}}+\OO{a}{\bar{a}}\bleta^{\bar{a}\bar{b}}\dd\OO{b}{\bar{b}})\label{Lorentz_action_on_F}.
\een
Let us consider a space $\forms(\Y,\Tan^{\otimes}\M)$ of forms on $\Y$ with values in the total tensor space $\Tan^{\otimes}\M$ over $\M$. We can define an exterior $\SO$-covariant derivative $\DD$ on $\forms(\Y,\Tan^{\otimes}\M)$ by
\be
\DD\uu^a=\dd\uu^a+\AAAA{a}{b}\wedge\uu^b,
\ee
where $\uu^a\in\forms(\Y,\Tan\M)$. Relation $\DD\DD\uu^a=\DD\AAAA{a}{b}\wedge\uu^b$ yields the expression
\be
\DD\AAAA{a}{b}=\dd\AAAA{a}{b}+\AAAA{a}{c}\wedge\AAAA{c}{b}
\ee
of the curvature and Bianchi identity $\DD\DD\AAAA{a}{b}=0$ is also satisfied. It is clear that $\DD\AAA^{ab}$ transforms as a tensor under the action (\ref{Lorentz_action_on_F}) of the local Lorentz group $\SO$.\\
\hs The graded bundle $\proj{\M}{\Y}:\Y\to\M$ is called full configuration bundle of the Einstein-Cartan theory. Its d-jet dual $\JY^{\bstar}$ is equipped with the first canonical form
\ben
\kmef^{(\Y)}=-\p_a\wedge\dd\e^a-\frac{1}{2}\p_{ab}\wedge\dd\AAA^{ab}+\h\label{Full_ECT_kmef}
\een
and the multisymplectic form
\ben
\kmsf^{(\Y)}=-\dd\kmef^{(\Y)}=\dd\p_a\wedge\dd\e^a+\frac{1}{2}\dd\p_{ab}\wedge\dd\AAA^{ab}-\dd\h,\label{Full_ECT_kmsf}
\een
where $\p_a$, $\p_{ab}$ are canonical momenta related to $\e^a$ and $\AAA^{ab}$ and $\h$ is affine hamiltonian coordinate.


\subsection{Equations of Motion, Covariant Legendre Transformation and Reduced Phase Space Bundles}
Let $\M_{\Ii}$ be compact portion of the spacetime bounded by two disjoint embeddings $\BSigma_{\text{ini}}$ and $\BSigma_{\text{fin}}$ and let $\sect\in\Sec(\M,\Y)$ be the section of the configuration bundle $\Y$ such that $\BSigma_{\text{ini}}$ and $\BSigma_{\text{fin}}$ are Cauchy surfaces 
and $\sect^{\bstar}\e^a\vb{\partial\M_{\Ii}}$ are future oriented coframes, i.e. for any t-causal curve $\bgm$ transverse to $\partial\M_{\Ii}$ there holds $\e^0(\dot{\bgm})\vb{\partial\M_{\Ii}}>0$, then the Einstein-Hilbert-Palatini action of the Einstein-Cartan theory is given by 
\ben
\SSS=\int\limits_{\M_{\Ii}}(\djet\sect)^*\LL=
\int\limits_{\M_{\Ii}}\frac{1}{32\pi\varkappa}\bleps_{abcd}\DD_{\sect}\AAA_{\sect}^{ab}\wedge\e_{\sect}^c\wedge\e_{\sect}^d
\label{Lagrangian_ECT_e_A},
\een
where $\varkappa$ is a Newton's constant (the speed of light $c$ is set to $1$), $\e^a_{\sect}=\sect^{\bstar}\e^a$, $\AAA^{ab}_{\sect}=\sect^{\bstar}\AAA^{ab}$ and $\DD_{\sect}$ means $\SO$-covariant exterior derivative on $\M$ given by the connection $\AAA^{ab}_{\sect}$. Variation of the action (\ref{Lagrangian_ECT_e_A})
\be
0=\int\limits_{\M_I}\deltaup\e^a\wedge\left(\frac{1}{16\pi\varkappa}\bleps_{abcd}\DD_{\sect}\AAA_{\sect}^{bc}\wedge\e_{\sect}^d\right)+
\int\limits_{\M_I}\frac{1}{2}\deltaup\AAA^{ab}\wedge\left(-\frac{1}{8\pi\varkappa}\bleps_{abcd}\e_{\sect}^c\wedge\DD_{\sect}\e_{\sect}^d\right),
\ee
where we used boundary conditions $\partial\M_{\Ii}^{\bstar}\big(-\frac{1}{8\pi\varkappa}\bleps_{abcd}\deltaup\AAA^{ab}\wedge\e^c\wedge\e^d)=0$ ($\partial\M_{\Ii}^{\bstar}$ means pullback along embedding of $\partial\M_{\Ii}$ into $\M$), yields equations of motion
\ben
0&=&\frac{1}{16\pi\varkappa}\bleps_{abcd}\DD_{\sect}\AAA_{\sect}^{bc}\wedge\e_{\sect}^d,\label{EOM_ECT_curvature}\\
0&=&-\frac{1}{8\pi\varkappa}\bleps_{abcd}\e_{\sect}^c\wedge\DD_{\sect}\e_{\sect}^d.\label{EOM_ECT_torsion}
\een
The second equation implies that the torsion $\DD_{\sect}\e_{\sect}^a$ of the connection $\AAA_{\sect}^{ab}$ is vanishing on $(\M,\g_{\sect}=\eta_{ab}\e_{\sect}^a\otimes\e_{\sect}^b)$ therefore the first one is equivalent to the Einstein equations of the General Relativity.\\
\hs In section \ref{d_jet} we made assumption that considered Lagrangians are regular, but the Einstein-Hilbert-Palatini Lagrangian $\LL$ given by (\ref{Lagrangian_ECT_e_A}) depends at most linearly on the generalized velocity $\dd\AAA^{ab}$, hence the Lagrangian $\LL$ is singular and we can not express any velocity through the canonical momenta. Instead of exploring the general case of singular Lagrangians we use the following construction. The Legendre transformation (\ref{Legendre_transformation}) $\LTr:\JY\to\JY^{\bstar}$ yields
\ben
\p_a\,\,&=&\phantom{-}0,\label{Legendre_transformation_ECT_I}\\
\p_{ab}&=&\phantom{-}\frac{1}{16\pi\varkappa}\bleps_{abcd}\e^c\wedge\e^d\label{Legendre_transformation_ECT_II},\\
\HH\,\,\,&=&-\frac{1}{32\pi\varkappa}\bleps_{abcd}\AAAA{a}{\bar{a}}\wedge\AAA^{\bar{a}b}\wedge\e^c\wedge\e^d\label{Legendre_transformation_ECT_III}.
\een
The dynamical Cartan-Poincar\' e form on $\Phase=\LTr(\JY)\simeq\Y$ is given by the first equality in (\ref{mef_dynamical})
\ben
\mef=\kmef^{(\Y)}\vb{\Phase}=-\frac{1}{32\pi\varkappa}\bleps_{abcd}\DD\AAA^{ab}\wedge\e^c\wedge\e^d\label{dynamical_Cartan_Poincare_ECT_form},
\een
or if we use embedding $\Hmap:\Phase\to\JE^{\bstar}$
\ben
\e^a&=&\e^a,\label{non_d_jet_transformation}\\
\GGG_a&=&\frac{1}{16\pi\varkappa}\bleps_{abcd}\AAA^{bc}\wedge\e^d\label{canonical_momenta_ECT},\\
\h&=&\HH,
\een
then
\ben
\mef=-\frac{1}{2}\e^a\wedge\dd\GGG_a-\frac{1}{2}\GGG_a\wedge\dd\e^a+\HH(\e,\GGG),\label{dynamical_Cartan_Poincare_ECT_form_G_version}
\een
where the Hamiltonian $\HH(\e,\GGG)$ is given by the expression, see convention of the Hodge operator $*$ in \cite{Fecko} $5.8.1$,
\ben
\HH\equiv\HH(\e,\GGG)=4\pi\varkappa\left[(\GGG_a\wedge\e^b)\wedge*(\GGG_b\wedge\e^a)-\frac{1}{2}(\GGG_a\wedge\e^a)\wedge*(\GGG_b\wedge\e^b)\right].
\label{Hamiltonian_ECT}
\een
The dynamical multisymplectic form is given by 
\ben
\msf=-\dd\mef=-\frac{1}{16\pi\varkappa}\bleps_{abcd}\DD\AAA^{ab}\wedge\e^c\wedge\DD\e^d=
\dd\GGG_a\wedge\dd\e^a-\dd\HH.
\een
This construction shows that the multisymplectic reduction of $\JY^{*}$ defined by (\ref{Legendre_transformation_ECT_I}) and (\ref{Legendre_transformation_ECT_II}) yields that we can use the d-jet dual $\JE^{\bstar}$ of the graded manifold of right-handed coframes $\E$ as the reduced kinematical phase space of the Einstein-Cartan theory if the Cartan-Poincar\' e form on $\JE^{\bstar}$ is given by
\ben
\kmef=-\frac{1}{2}\e^a\wedge\dd\GGG_a-\frac{1}{2}\GGG_a\wedge\dd\e^a+\h=-\GGG_a\wedge\dd\e^a+\h+\dd(\frac{1}{2}\GGG_a\wedge\e^a)
\label{mef_ECT_kinematical}
\een
instead of the first canonical form defined for $\JE^{\bstar}$ by (\ref{mef_dual}). These two forms differ by exact term $\dd(\frac{1}{2}\GGG_a\wedge\e^a)$ therefore the kinematical multisymplectic form of the Einstein-Cartan theory given by
\ben
\kmsf=-\dd\kmef=\dd\GGG_a\wedge\dd\e^a-\dd\h.\label{msf_ECT_kinematical}
\een
and the second canonical form (\ref{msf_dual}) coincide.\\
\hs Lemma \ref{Lemma_on_general_smeared_canonical_observable} yields that following form fields are observables with hamiltonian vectors on $\JE^{\bstar}$
\be
\begin{tabular}{ccc}
	$\e(\bgm)=\bgm_a\wedge\e^a$&$\leftrightarrow$&$\vv(\bgm)=\phantom{-}\bgm_a\partial_{\GGG_a}-(\dd\bgm_a\wedge\e^a)\partial_{\h}$,\\
	$\GGG(\EE)=\GGG_a\wedge\EE^a$&$\leftrightarrow$&$\vv(\EE)=-\EE^a\partial_{\e^a}-(\GGG_a\wedge\dd\EE^a)\partial_{\h}$,
\end{tabular}
\ee
where $\bgm_a$ and $\EE^a$ are smearing $\M$-horizontal $2$- and $1$-forms, respectively, depending only on the spacetime coordinates $x^{\mu}$ and
their non-vanishing local covariant Poisson brackets are
\be
\lpb\e(\bgm),\GGG(\EE)\rpb=\bgm_a\wedge\EE^a.
\ee

There also exist smeared observables related to 2-area and 3-volume. Indeed, it is easy to show that observables $\blomega^{(3)}(\lambda)$ and $\blomega^{(2)}(\BBB)$ given by 
\be
\begin{tabular}{lcl} 
	$\blomega^{(3)}(\lambda)$&$=$&$\frac{1}{3!}\bleps_{abcd}\lambda^a\wedge\e^b\wedge\e^c\wedge\e^d$,\\
	$\blomega^{(2)}(\BBB)$&$=$&$\frac{1}{4}\bleps_{abcd}\BBB^{ab}\wedge\e^c\wedge\e^d$
\end{tabular}
\ee
are related with hamiltonian vectors $\vv(\lambda)$ and $\vv(\BBB)$, where
\be
\begin{tabular}{lcl} 
	$\vv(\lambda)$&$=$&$\Big(\frac{1}{2}\bleps_{abcd}\lambda^a\e^c\wedge\e^d\Big)\partial_{\GGG_b}
	-\left(\frac{1}{3!}\bleps_{abcd}\dd\lambda^a\wedge\e^b\wedge\e^c\wedge\e^d\right)\partial_{\h}$,\\
	$\vv(\BBB)$&$=$&$\Big(\frac{1}{2}\bleps_{abcd}\BBB^{ab}\wedge\e^c\Big)\partial_{\GGG_d}-
	\Big(\frac{1}{4}\bleps_{abcd}\dd\BBB^{ab}\wedge\e^c\wedge\e^d\Big)\partial_{\h}$
\end{tabular}
\ee
and $\lambda^a$, $\BBB^{ab}=-\BBB^{ba}$ are horizontal $0$-, $1$-forms, respectively, depending only on spacetime coordinates $x^{\mu}$.
\subsection{Gauge group of the Einstein-Cartan Theory}
Now, we want to explore the symmetry group of the Einstein-Cartan theory. We employ the coordinates $\e^a, \AAA^{ab}$ rather than the canonical 
pair $\e^a,\GGG_a$. The first observation is that the Lagrangian $\LL$ of the Einstein-Cartan theory is invariant under the action (\ref{Lorentz_action_on_F}) of the local Lorentz group $\SO$. Its infinitesimal action is given by
\ben\label{Lorentz_action_infinitesimal_version}
\begin{array}{ccl}
	\bar{\e}^a&=&\e^a-\Lambda^{a}_{\phantom{a}b}\e^b\\
	\bar{\AAA}^{ab}&=&\AAA^{ab}-\Lambda^{a}_{\phantom{a}\bar{a}}\AAA^{\bar{a}b}-\Lambda^{b}_{\phantom{b}\bar{b}}\AAA^{a\bar{b}}+\dd\Lambda^{ab}
	=\AAA^{ab}+\DD\Lambda^{ab},
\end{array}
\een
where $\Lambda^{ab}=-\Lambda^{ba}$ are $0$-forms on $\M$ playing role of the coordinates on the algebra of the local Lorentz group $\SO$. 
The generating vector of the infinitesimal action (\ref{Lorentz_action_infinitesimal_version}) is given by
\ben
\vv(\Lambda)=-\Lambda^{a}_{\phantom{a}b}\e^b\partial_{\e^a}+\frac{1}{2}\DD\Lambda^{ab}\partial_{\AAA^{ab}}
\een
Its basic properties are summarized in the following lemma.
\begin{lemma}$\phantom{.}$\\
	\hs $\phantom{(ii}i)$ \hs \begin{minipage}[t]{0.75\textwidth}The dynamical Cartan-Poincar\' e form (\ref{dynamical_Cartan_Poincare_ECT_form}) is invariant under the action (\ref{Lorentz_action_on_F}) of the local Lorentz group $\SO$ therefore
		\be\lder{\vv(\Lambda)}\mef=0.\ee
	\end{minipage}\\\\
	\hs $\phantom{(i}ii)$ \hs\begin{minipage}[t]{0.75\textwidth} Noether's current $\ttt(\Lambda):=i_{\vv(\Lambda)}\mef$ is given by formula
		\be 
		\ttt(\Lambda):=-\frac{1}{32\pi\varkappa}\bleps_{abcd}\DD\Lambda^{ab}\wedge\e^c\wedge\e^d.\ee
	\end{minipage}\\\\
	\hs $\phantom{(}iii)$ \hs \begin{minipage}[t]{0.75\textwidth} Vanishing of the Noether's charge
		\[
		\TTT^{\sect}_{\BSigma_{E}}(\Lambda)=\int\limits_{\BSigma_{E}}\sect^*\ttt(\Lambda)=0
		\] 
		for all embeddings $\BSigma_{\text{E}}\simeq\BSigma$ and all $\Lambda^{ab}\in\alg{\SO}$ implies that $\sect\in\Sec(\M,\Phase)$ is a 
		solution of the Hamilton equation (\ref{EOM_ECT_torsion}).\end{minipage}
\end{lemma}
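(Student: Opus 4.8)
The plan is to dispatch the three items in turn, working throughout with the explicit expression (\ref{dynamical_Cartan_Poincare_ECT_form}) for $\mef$ and with the transformation law (\ref{Lorentz_action_on_F}). For $i)$ I would read off invariance from the tensorial behaviour of the ingredients of $\mef=-\frac{1}{32\pi\varkappa}\bleps_{abcd}\DD\AAA^{ab}\wedge\e^c\wedge\e^d$. Under $\OO{}{}\in\SO$ the coframe rotates as $\e^c\mapsto\OO{c}{\bar c}\e^{\bar c}$, while the curvature transforms as an $\SO$-tensor (as already observed above), $\DD\AAA^{ab}\mapsto\OO{a}{\bar a}\OO{b}{\bar b}\DD\AAA^{\bar a\bar b}$. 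The four factors of $\OO{}{}$ contract against $\bleps_{abcd}$ into $\det\OO{}{}\,\bleps_{\bar a\bar b\bar c\bar d}=\bleps_{\bar a\bar b\bar c\bar d}$, since proper Lorentz matrices have unit determinant; hence $(\bleta_{\Phase}^{-1})^*\mef=\mef$. Specialising to the infinitesimal action (\ref{Lorentz_action_infinitesimal_version}) and applying (\ref{symmetry_invariant_mef_lie_derivative}) gives $\lder{\vv(\Lambda)}\mef=0$.

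For $ii)$ I would evaluate the interior product $i_{\vv(\Lambda)}\mef$ directly. Because $\vv(\Lambda)$ carries no $\partial_{\mu}$ component it is $\M$-vertical and therefore annihilates the coordinate forms, $i_{\vv(\Lambda)}\e^c=i_{\vv(\Lambda)}\AAA^{ab}=0$; consequently only the differentiated piece $\dd\AAA^{ab}$ inside $\DD\AAA^{ab}$ is contracted, and the $\partial_{\AAA^{ab}}$-component of the generator produces $i_{\vv(\Lambda)}\DD\AAA^{ab}=\DD\Lambda^{ab}$, the infinitesimal connection shift of (\ref{Lorentz_action_infinitesimal_version}). This leaves $\ttt(\Lambda)=i_{\vv(\Lambda)}\mef=-\frac{1}{32\pi\varkappa}\bleps_{abcd}\DD\Lambda^{ab}\wedge\e^c\wedge\e^d$. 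Combining $i)$ with Cartan's formula $\lder{\vv(\Lambda)}\mef=i_{\vv(\Lambda)}\dd\mef+\dd i_{\vv(\Lambda)}\mef$ then yields $i_{\vv(\Lambda)}\msf=\dd\ttt(\Lambda)$, so $\ttt(\Lambda)$ really is the Noether current of Theorem \ref{Neother_theorem}.

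For $iii)$ the device is an integration by parts on the slice. Since $\bleps_{abcd}$ is $\SO$-invariant ($\DD\bleps_{abcd}=0$) the contracted object is a scalar on which $\DD$ and $\dd$ coincide, and the graded Leibniz rule gives
\be
\bleps_{abcd}\DD\Lambda^{ab}\wedge\e^c\wedge\e^d&=&\dd\big(\bleps_{abcd}\Lambda^{ab}\e^c\wedge\e^d\big)+2\,\Lambda^{ab}\,\bleps_{abcd}\,\e^c\wedge\DD\e^d,
\ee
where the factor $2$ comes from $\bleps_{abcd}\DD\e^c\wedge\e^d=-\bleps_{abcd}\e^c\wedge\DD\e^d$. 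Pulling back along $\sect$ and integrating over the compact boundaryless $\BSigma_E$ kills the exact term by Stokes, so $\TTT^{\sect}_{\BSigma_E}(\Lambda)$ is a multiple of $\int_{\BSigma_E}\Lambda^{ab}\,\bleps_{abcd}\,\e^c_{\sect}\wedge\DD_{\sect}\e^d_{\sect}$. Demanding that this vanish for every $\Lambda^{ab}\in\alg{\SO}$ and every embedding $\BSigma_E\simeq\BSigma$ then forces $\bleps_{abcd}\,\e^c_{\sect}\wedge\DD_{\sect}\e^d_{\sect}=0$, which is exactly the torsion equation (\ref{EOM_ECT_torsion}).

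I expect the genuine work to sit in $iii)$. Two points need care: getting the sign and the combinatorial factor $2$ in the Leibniz identity right, which rests on the antisymmetry of $\bleps_{abcd}$ together with the graded commutation of the one-form $\e^c$ past the two-form $\DD\e^d$; and the final arbitrariness argument, where one must upgrade the vanishing of a smeared surface integral --- over all test $0$-forms $\Lambda^{ab}$ and over embeddings realising every tangent hyperplane at every point --- to the \emph{pointwise} vanishing of the spacetime $3$-form $\bleps_{abcd}\,\e^c_{\sect}\wedge\DD_{\sect}\e^d_{\sect}$, rather than merely of its restriction to one slice. This is where the fundamental lemma of the calculus of variations, applied fibrewise and then swept over all slicings, is invoked.
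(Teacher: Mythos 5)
Your proposal is correct and follows essentially the same route as the paper: invariance of $\mef$ via the unit determinant of proper Lorentz matrices (the paper just writes out the finite one-parameter action $\OO{}{}(s\Lambda)$ and differentiates), the direct contraction $i_{\vv(\Lambda)}\mef$ for $\ttt(\Lambda)$, and in $iii)$ the same integration by parts with Stokes on the closed slice followed by the arbitrariness argument in $\Lambda^{ab}$ and in the embedding $\BSigma_{E}$. Your Leibniz bookkeeping (the factor $2$ from $\bleps_{abcd}\DD\e^c\wedge\e^d=-\bleps_{abcd}\e^c\wedge\DD\e^d$) is sound and, if anything, more explicit than the paper's one-line rewriting, and your flagged caveat about upgrading slice-wise vanishing to pointwise vanishing of the $3$-form is exactly the step the paper compresses into ``arbitrariness of the embedding.''
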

\begin{proof}
	Let $\OO{}{}(s\Lambda)\in\SO$ denote one parametric group associated to the vector $\vv(\Lambda)$. Direct calculation of its action 
	on $\mef$ yields
	\be
	\bar\mef&=&-\frac{1}{32\pi\varkappa}\bleps_{abcd}\bar{\DD}\bar{\AAA}^{ab}\wedge\bar{\e}^c\wedge\bar{\e}^d\\
	&=&-\frac{1}{32\pi\varkappa}\bleps_{abcd}
	\OO{a}{\bar{a}}(s\Lambda)\OO{b}{\bar{b}}(s\Lambda)\OO{c}{\bar{c}}
	(s\Lambda)\OO{d}{\bar{d}}(s\Lambda)\DD\AAA^{\bar{a}\bar{b}}\wedge\e^{\bar{c}}\wedge\e^{\bar{d}}\\
	&=&-\det\left(\OO{\bar{a}}{\bar{b}}(s\Lambda)\right)\frac{1}{32\pi\varkappa}\bleps_{abcd}\DD\AAA^{ab}\wedge\e^c\wedge\e^d=\mef,
	\ee
	since $\det\left(\OO{\bar{a}}{\bar{b}}(s\Lambda)\right)=1$. Therefore we have 
	\be
	\lder{\vv(\Lambda)}\mef=\lim_{s\to 0}\frac{\bar{\mef}-\mef}{s}=0.
	\ee
	The interior product of $\vv(\Lambda)$ with the Cartan-Poincar\' e form $\mef$ is
	\be
	\ttt(\Lambda)=i_{\vv(\Lambda)}\mef=-\frac{1}{32\pi\varkappa}\bleps_{abcd}\DD\Lambda^{ab}\wedge\e^c\wedge\e^d
	\ee
	and the integration over given emmbeding $\BSigma_{\text{E}}$ yields relation
	\be
	\TTT^{\sect}_{\BSigma_{E}}(\Lambda)&=&\int\limits_{\BSigma_{E}}\sect^*\ttt(\Lambda)=
	\int\limits_{\BSigma_{E}}-\frac{1}{32\pi\varkappa}\bleps_{abcd}\DD_{\sect}\Lambda^{ab}\wedge\e_{\sect}^c\wedge\e_{\sect}^d\\
	&=&\int\limits_{\BSigma_{E}}\frac{1}{2}\Lambda^{ab}
	\left(\frac{1}{8\pi\varkappa}\bleps_{abcd}\e_{\sect}^c\wedge\DD_{\sect}\e_{\sect}^d\right)+
	\oint\limits_{\partial\BSigma_{E}}-\frac{1}{32\pi\varkappa}\bleps_{abcd}\Lambda^{ab}\e_{\sect}^c\wedge\e_{\sect}^d.
	\ee
	The last term is vanishing since $\partial\BSigma_{E}=\emptyset$. The condition $\TTT^{\sect}_{\BSigma_{\text{E}}}(\Lambda)=0$ for all 
	$\Lambda^{ab}$ implies that $-\frac{1}{8\pi\varkappa}\bleps_{abcd}\e_{\sect}^c\wedge\DD_{\sect}\e_{\sect}^d=0$ on the embedding 
	$\BSigma_{\text{E}}$ and arbitrariness of the embedding yields that the equation (\ref{EOM_ECT_torsion}) is satisfied on whole $\M$ and 
	vice versa if the equation (\ref{EOM_ECT_torsion}) is satisfied then $\TTT^{\sect}_{\BSigma_{\text{E}}}(\Lambda)=0$ for all embeddings $\BSigma_{\text{E}}$ and all $\Lambda^{ab}$.
\qed\end{proof}
Einstein's principle \cite{Einstein_MoR} of General Relativity says that the laws of the Nature are independent on the choice of the spacetime coordinates we are using for the description of the physical system. Therefore the action (\ref{Lagrangian_ECT_e_A}) should be invariant under the action of the group of diffeomorphisms $\DiffM$. Let $\bleta_{\M}(s)\in\DiffM$ denote one parameter group generated by the spacetime vector $\bxi\in\Tan\M$. Its prolongation $\bleta_{\Phase}(s)$ to the fibre bundle $\Phase\to\M$ is given by 
\ben
\bleta_{\Phase}:(x^{\mu},\e^a,\AAA^{ab})\mapsto\left(\eta_{\M}^{\mu},(\bleta_{\M}^{-1})^*\e^a,(\bleta_{\M}^{-1})^*\AAA^{ab}\right)
\label{diff_action_on_F}.
\een
The generating vector $\ww(\bxi)$ of $\bleta_{\Phase}$ on $\Phase$ is given by an expression
\ben
\ww(\bxi)=\bxi-i_{\dd\bxi}\e^a\partial_{\e^a}-\frac{1}{2}i_{\dd\bxi}\AAA^{ab}\partial_{\AAA^{ab}}\label{diff_generator}.
\een
We have similar lemma for $\ww(\bxi)$ as for $\vv(\Lambda)$.
\begin{lemma}
	\be
	\begin{tabular}{r p{9.5cm}}
		$\phantom{(}i)$&The dynamical Cartan-Poincar\' e form (\ref{dynamical_Cartan_Poincare_ECT_form}) is invariant under the 
		action (\ref{diff_action_on_F}) of the diffeomorphism group $\DiffM$ therefore
		\[\lder{\ww(\bxi)}\mef=0.\]\\
		$\phantom{(}ii)$&Noether's current $\rr(\bxi):=i_{\ww(\bxi)}\mef$ is given by formula
		\be 
		\rr(\bxi):&=&
		\phantom{+} \frac{1}{16\pi\varkappa}\bleps_{abcd}\DD\AAA^{ab}\wedge\e^c i_{\bxi}\e^d+
		\frac{1}{16\pi\varkappa}\bleps_{abcd}i_{\bxi}\AAA^{ab}\wedge\e^c\wedge\DD\e^d\\
		&&+\,\dd\left(\frac{1}{32\pi\varkappa}\bleps_{abcd}i_{\bxi}\AAA^{ab}\wedge\e^c\wedge\e^d\right)	
		\ee
		\\
		$\phantom{(}iii)$&
		If for the section $\sect\in\Sec(\M,\Phase)$ the Hamilton equation (\ref{EOM_ECT_torsion}) is satisfied then the vanishing of 
		the Noether's charge
		\[
		\RR_{\BSigma_{E}}^{\sect}(\bxi)=\int\limits_{\BSigma_{E}}\sect^*\rr(\bxi)=0
		\] 
		for all embeddings $\BSigma_{\text{E}}\simeq\BSigma$ and all $\bxi\in\alg{\DiffM}$ implies that $\sect\in\Sec(\M,\Phase)$ is also a solution 
		of the Hamilton equation (\ref{EOM_ECT_curvature}).
	\end{tabular}
	\ee
\end{lemma}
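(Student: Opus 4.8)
The plan is to dispatch the three parts in order, with the genuine content sitting in part $ii)$.

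\emph{Part $i)$.} I would observe that (\ref{diff_action_on_F}) is precisely the natural lift of the spacetime diffeomorphism to $\Phase$, the construction underlying lemma~\ref{Noethers_current_diffeomorphism_general_case}, and that under this lift the tautological coframe and connection forms are invariant, $\lder{\ww(\bxi)}\e^a=0$ and $\lder{\ww(\bxi)}\AAA^{ab}=0$. These two identities I would verify directly from Cartan's formula $\lder{\ww(\bxi)}=i_{\ww(\bxi)}\dd+\dd\,i_{\ww(\bxi)}$ using the explicit generator (\ref{diff_generator}): the horizontal contraction of $\dd\e^a$ and the $\e$-vertical contribution together cancel $\dd(i_{\bxi}\e^a)$, and similarly for $\AAA^{ab}$. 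Since $\mef$ in (\ref{dynamical_Cartan_Poincare_ECT_form}) is assembled from $\e^a$, $\AAA^{ab}$ using only $\dd$, wedge products and the numerically constant $\bleps_{abcd}$, the Leibniz property of $\lder{\ww(\bxi)}$ then forces $\lder{\ww(\bxi)}\mef=0$. (Equivalently one mimics the Lorentz lemma and checks $(\bleta_{\Phase})^{*}\mef=\mef$ directly, the base Jacobian being absorbed because $\mef$ is a natural top-form.)

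\emph{Part $ii)$.} Here I would compute $\rr(\bxi)=i_{\ww(\bxi)}\mef$ by feeding (\ref{diff_generator}) into (\ref{dynamical_Cartan_Poincare_ECT_form}) and using that $i_{\ww(\bxi)}$ is a graded derivation. Three facts organize the work. First, the $\e$-vertical piece of $\ww(\bxi)$ annihilates $\mef$, which carries no $\dd\e^a$, so it drops out. Second, the horizontal part $\bxi$ acting on the two coframe factors (leaving $\DD\AAA^{ab}$ intact) yields, after relabelling against the antisymmetry of $\bleps_{abcd}$, the term $\tfrac1{16\pi\varkappa}\bleps_{abcd}\DD\AAA^{ab}\wedge\e^c\,i_{\bxi}\e^d$. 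Third, and most delicately, the action of $i_{\ww(\bxi)}$ on the curvature combines the horizontal hit on $\dd\AAA^{ab}$ with the $\AAA$-vertical piece into $i_{\ww(\bxi)}\DD\AAA^{ab}=-\DD(i_{\bxi}\AAA^{ab})$: the non-covariant $\dd\AAA$ pieces reassemble into $\dd(i_\bxi\AAA^{ab})$, while the leftover connection-quadratic terms cancel identically by the antisymmetry of $\AAA^{ab}$ and of $i_\bxi\AAA^{ab}$. Finally, writing $\Lambda^{ab}=i_{\bxi}\AAA^{ab}$ and noting that $\bleps_{abcd}\Lambda^{ab}\wedge\e^c\wedge\e^d$ is a Lorentz scalar, so $\DD$ on it reduces to $\dd$, the relation $\DD\bleps_{abcd}=0$ together with Leibniz splits $\bleps_{abcd}\DD\Lambda^{ab}\wedge\e^c\wedge\e^d$ into the exact term $\dd\big(\tfrac1{32\pi\varkappa}\bleps_{abcd}i_\bxi\AAA^{ab}\wedge\e^c\wedge\e^d\big)$ and the torsion term $\tfrac1{16\pi\varkappa}\bleps_{abcd}i_\bxi\AAA^{ab}\wedge\e^c\wedge\DD\e^d$, which reproduces the stated formula.

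\emph{Part $iii)$.} I would pull $\rr(\bxi)$ back along $\sect$ and integrate over $\BSigma_E$. The exact term contributes a boundary integral over $\partial\BSigma_E=\emptyset$, hence vanishes because $\BSigma_E\simeq\BSigma$ is compact boundaryless. The torsion term equals $(i_\bxi\AAA^{ab})\big(\bleps_{abcd}\e^c_\sect\wedge\DD_\sect\e^d_\sect\big)$ and so vanishes pointwise once (\ref{EOM_ECT_torsion}) is assumed. Thus $\RR^{\sect}_{\BSigma_E}(\bxi)=\int_{\BSigma_E}\tfrac1{16\pi\varkappa}\bleps_{abcd}\DD_\sect\AAA^{ab}_\sect\wedge\e^c_\sect\,(i_\bxi\e^d_\sect)$. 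Because $\sect^*\e^d$ is an invertible coframe, as $\bxi$ ranges over $\Sec(\M,\Tan\M)$ the scalars $i_\bxi\e^d$ range over arbitrary quadruples of functions; vanishing of the integral for every $\bxi$ and every $\BSigma_E$ therefore forces the three-form $\bleps_{abcd}\DD_\sect\AAA^{ab}_\sect\wedge\e^c_\sect$ to vanish on all of $\M$, which after relabelling indices by the total antisymmetry of $\bleps_{abcd}$ is exactly (\ref{EOM_ECT_curvature}).

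The step I expect to be the real obstacle is the curvature computation in part $ii)$: keeping the graded signs of the interior product straight, confirming that the $\e$-vertical direction genuinely decouples, and above all recognizing the two cancellations — that the non-covariant $\dd\AAA$ terms reorganize into $-\DD(i_\bxi\AAA^{ab})$ and that the residual $\AAA$-quadratic terms cancel — so that the covariant integration by parts with $\DD\bleps_{abcd}=0$ produces the clean exact-plus-torsion split. Once part $ii)$ is secured, parts $i)$ and $iii)$ are short.
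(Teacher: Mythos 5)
Your proof is correct and reaches all three stated formulas, but the computational core of parts $i)$ and $ii)$ is organized genuinely differently from the paper's. The paper works at the coordinate level: it first derives the auxiliary identities $i_{\bxi}\dd\mef+\dd i_{\bxi}\mef=i_{\dd\bxi}\mef$ and $i_{\bxi}\dd\y^A+\dd i_{\bxi}\y^A=i_{\dd\bxi}\y^A$ (its equations (\ref{auxiliary_relation_i}) and (\ref{auxiliary_relation_ii})), and then disposes of the residual connection-quadratic terms via the explicit Levi-Civita contraction $\bar{\bleps}^{a\bar{a}\hat{c}\hat{d}}\bleps_{\hat{a}\hat{b}\hat{c}\hat{d}}$ expanded into Kronecker deltas (the calculation (\ref{example_of_calculation})), which it reuses in part $ii)$. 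You instead exploit the structural facts $\lder{\ww(\bxi)}\e^a=\lder{\ww(\bxi)}\AAA^{ab}=0$ --- which the paper proves in the general setting ($\lder{\ww(\bxi)}\y^A=0$ in the proof of lemma \ref{Noethers_current_diffeomorphism_general_case}) but does not invoke in this particular proof --- making part $i)$ immediate by the Leibniz rule, and you package part $ii)$ through the covariance identity $i_{\ww(\bxi)}\DD\AAA^{ab}=-\DD(i_{\bxi}\AAA^{ab})$ followed by covariant integration by parts, using that $\bleps_{abcd}\,i_{\bxi}\AAA^{ab}\wedge\e^c\wedge\e^d$ is a Lorentz scalar so that $\dd$ and $\DD$ agree on it; I have checked that this reproduces the exact-plus-torsion split with the correct coefficients $\frac{1}{16\pi\varkappa}$ and $\frac{1}{32\pi\varkappa}$. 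Your route buys conceptual transparency and avoids the error-prone $\bleps$--$\delta$ gymnastics; the paper's buys a self-contained component-level verification in which one epsilon identity does double duty in parts $i)$ and $ii)$. One caution on wording: your statement that the connection-quadratic terms ``cancel identically'' must be read as cancelling in the difference $i_{\ww(\bxi)}\DD\AAA^{ab}+\DD(i_{\bxi}\AAA^{ab})$ --- they are absorbed into the covariant derivative and resurface inside the torsion term $\DD\e^d$ after the integration by parts, rather than vanishing outright. Part $iii)$, including the vanishing of the boundary term on the compact boundaryless $\BSigma_{E}$, the pointwise use of (\ref{EOM_ECT_torsion}), and the appeal to the invertibility of the coframe plus arbitrariness of the embedding to recover (\ref{EOM_ECT_curvature}) up to an irrelevant sign from cyclic relabelling, coincides with the paper's argument.
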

\begin{proof}
	At first we need to calculate some auxiliar relations. We employ notation $\y^A=(\e^a,\AAA^{ab})$. Cartan-Poincar\' e form $\mef$ can 
	be written as
	\be
	\mef=\dd y^A_{\mu}\wedge\Theta^{\mu\nu}_{A}\dd\Sigma_{\nu}+\tilde{H}\dd\Sigma,
	\ee
	where $\Theta^{\mu\nu}_{A}$ and $\tilde{H}$ are certain functions on $\Phase$. Since $\mef$ does not depend explicitly on the 
	coordinate $x^{\mu}$ we have for $\bxi=\xi^{\mu}\partial_{\mu}$ followig relation
	\be
	i_{\bxi}\dd\mef+\dd i_{\bxi}\mef&=&
	-\dd y^A_{\mu}\wedge\dd\Theta^{\mu\nu}\wedge i_{\bxi}\dd\Sigma_{\nu}-\dd\tilde{H}\wedge i_{\bxi}\dd\Sigma+\\
	&&+\dd y^A_{\mu}\wedge\dd\Theta^{\mu\nu}\wedge i_{\bxi}\dd\Sigma_{\nu}+
	\dd y^A_{\mu}\wedge\Theta^{\mu\nu}\wedge\dd(i_{\bxi}\Sigma_{\nu})+\\
	&&+\dd\tilde{H}\wedge i_{\bxi}\dd\Sigma+\tilde{H}\dd(i_{\bxi}\dd\Sigma)
	\ee
	or after simplification
	\ben
	i_{\bxi}\dd\mef+\dd i_{\bxi}\mef=i_{\dd\bxi}\mef\label{auxiliary_relation_i}.
	\een
	Similar calculation yields
	\ben
	i_{\bxi}\dd\y^A+\dd i_{\bxi}\y^A=i_{\dd\bxi}\y^A\label{auxiliary_relation_ii}.
	\een
	Now, we can prove the first indentity$\phantom{(}i)$ of the lemma
	\be
	\lder{\ww(\bxi)}\mef&=&i_{\ww(\bxi)}\dd\mef+\dd i_{\ww(\bxi)}\mef\\
	&=&\phantom{+}i_{\bxi}\dd\mef-\frac{1}{16\pi\varkappa}\bleps_{abcd}i_{\dd\bxi}\AAA^{ab}\wedge\e^c\wedge\DD\e^d
	+\frac{1}{16\pi\varkappa}\bleps_{abcd}\DD\AAA^{ab}\wedge\e^c\wedge i_{\dd\bxi}\e^d+\\
	&&+\dd i_{\bxi}\mef+\frac{1}{32\pi\varkappa}\dd\left(\bleps_{abcd}i_{\dd\bxi}\AAA^{ab}\wedge\e^c\wedge\e^d\right).
	\ee 
	The first auxiliar formula (\ref{auxiliary_relation_i}) yields
	\be
	\lder{\ww(\bxi)}\mef&=&\frac{1}{16\pi\varkappa}\bleps_{abcd} i_{\dd\bxi}\AAA^{a\bar{a}}\wedge\AAAA{b}{\bar{a}}\wedge\e^c\wedge\e^d-
	\frac{1}{16\pi\varkappa}\bleps_{abcd}i_{\dd\bxi}\AAA^{ab}\wedge\e^c\wedge\AAAA{d}{\bar{d}}\wedge\e^{\bar{d}}.
	\ee
	The first term can be rewritten as
	\ben
	\frac{1}{16\pi\varkappa}\bleps_{abcd} i_{\dd\bxi}\AAA^{a\bar{a}}\wedge\AAAA{b}{\bar{a}}\wedge\e^c\wedge\e^d=
	\frac{1}{16\pi\varkappa}\bleps_{abcd}
	\frac{1}{4}\bar{\bleps}^{a\bar{a}\hat{c}\hat{d}}\bleps_{\hat{a}\hat{b}\hat{c}\hat{d}}
	i_{\dd\bxi}\AAA^{\hat{a}\hat{b}}\wedge\AAAA{b}{\bar{a}}\wedge\e^c\wedge\e^d\nonumber\\
	=\frac{1}{32\pi\varkappa}\bleps_{\hat{a}\hat{b}\hat{c}\hat{d}}
	(\delta^{\bar{a}}_b\delta_{cd}^{\hat{c}\hat{d}}+\delta_c^{\bar{a}}\delta_{db}^{\hat{c}\hat{d}}+\delta_{d}^{\bar{a}}\delta_{bc}^{\hat{c}\hat{d}})
	i_{\dd\bxi}\AAA^{\hat{a}\hat{b}}\wedge\AAAA{b}{\bar{a}}\wedge\e^c\wedge\e^d\;\;\;\;\;\;\;\;
	\label{example_of_calculation}\\
	=\frac{1}{16\pi\varkappa}\bleps_{abcd}i_{\dd\bxi}\AAA^{ab}\wedge\e^c\wedge\AAAA{d}{\bar{d}}\wedge\e^{\bar{d}}
	\;\;\;\;\;\;\;\;\;\;\;\;\;\;\;\;\,\;\;\;\;\;\;\;\;\;\;\;\;\;\;\;\;\;\;\;\;\;\;\;\;\nonumber
	\een
	which yields
	\be
	\lder{\ww(\bxi)}\mef=0.
	\ee
	Noether's current is given by
	\be
	\rr(\bxi)&=\phantom{+}&i_{\ww(\bxi)}\mef=i_{\bxi}\mef+\frac{1}{32\pi\varkappa}\bleps_{abcd}i_{\dd\bxi}\AAA^{ab}\wedge\e^c\wedge\e^d\\
	&=&-\frac{1}{32\pi\varkappa}\bleps_{abcd}i_{\bxi}\DD\AAA^{ab}\wedge\e^c\wedge\e^d+
	\frac{1}{16\pi\varkappa}\bleps_{abcd}\DD\AAA^{ab}\wedge\e^ci_{\bxi}\e^d\\
	&&+\frac{1}{32\pi\varkappa}
	\bleps_{abcd}i_{\dd\bxi}\AAA^{ab}\wedge\e^c\wedge\e^d.
	\ee
	Now, if we use the second auxiliar relation (\ref{auxiliary_relation_ii}) then the similar calculation as in (\ref{example_of_calculation}) yields 
	\be
	\rr(\bxi)&=&\phantom{+}\frac{1}{16\pi\varkappa}\bleps_{abcd}\DD\AAA^{ab}\wedge\e^c i_{\bxi}\e^d+
	\frac{1}{16\pi\varkappa}\bleps_{abcd}i_{\bxi}\AAA^{ab}\wedge\e^c\wedge\DD\e^d\\
	&&+\dd\left(\frac{1}{32\pi\varkappa}\bleps_{abcd}i_{\bxi}\AAA^{ab}\wedge\e^c\wedge\e^d\right).
	\ee
	\\
	Noether's charge on embedding $\BSigma_{\text{E}}$ for the section $\sect\in\Sec(\M,\Phase)$ is\\
	\be
	\RR^{\sect}_{\BSigma_{\text{E}}}(\bxi)&=&\int\limits_{\BSigma_{\text{E}}}\sect^*\rr(\bxi)\\
	&=&\int\limits_{\BSigma_{\text{E}}}
	\frac{1}{16\pi\varkappa}\bleps_{abcd}\DD_{\sect}\AAA_{\sect}^{ab}\wedge\e_{\sect}^c i_{\bxi}\e_{\sect}^d+
	\int\limits_{\BSigma_{\text{E}}}
	\frac{1}{16\pi\varkappa}\bleps_{abcd}i_{\bxi}\AAA_{\sect}^{ab}\wedge\e_{\sect}^c\wedge\DD_{\sect}\e_{\sect}^d.
	\ee
	If $\sect$ is the solution of the Hamilton equation (\ref{EOM_ECT_torsion}) then the second term is vanishing and the arbitrariness of the 
	embedding $\BSigma_{\text{E}}$ and $\bxi$ implies that the vanishing of the charge is equivalent to the Hamilton equation (\ref{EOM_ECT_curvature}).
\qed\end{proof}
$\phantom{,.}$\\
As a direct consequence of these two lemmata we have that the mutual vanishing of the Noether's charges $\TTT(\Lambda)$ and $\RR(\bxi)$ is 
equivalent to the Hamilton equations (\ref{EOM_ECT_curvature}) and (\ref{EOM_ECT_torsion}). Therefore it seems that the symmetry group $\GG$ of the Einstein-Cartan theory can be constructed from groups $\SO$ and $\DiffM$. Indeed, we finally find out that the symmetry group is given by the semidirect product $\GG=\SO\rtimes\DiffM$.\\
\hs Before we do that let us at first show, that the linear span of $\alg{\SO}$ and $\alg{\DiffM}$ is closed subalgebra in Lie algebra of vector fields $(\Tan\Phase,\lsn,\rsn)$.\\
\\
\begin{lemma}
	$\Span\left(\alg{\SO}, \alg{\DiffM}\right)$ is a Lie algebra and the commutators of the generators are
	
	\begin{tabular}{r p{9.5cm}}
		$\phantom{(}i)$&\[\lsn\vv(\Lambda),\vv(\Lambda ')\rsn=\vv(\Lambda\bleta\Lambda '-\Lambda '\bleta\Lambda),\]\\
		$\phantom{(}ii)$&\[\lsn\ww(\bxi),\ww(\bxi ')\rsn=\ww(\lder{\bxi}\bxi '),\;\;\;\;\;\;\;\;\;\;\;\;\;\]\\
		$\phantom{(}iii)$&\[\lsn\ww(\bxi),\vv(\Lambda)\rsn=\vv(\lder{\bxi}\Lambda).\;\;\;\;\;\;\;\;\;\;\;\;\;\]
	\end{tabular}
\end{lemma}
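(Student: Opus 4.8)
The plan is to verify the three displayed commutators by a direct computation in the coordinates $(x^{\mu},\e^a,\AAA^{ab})$ on $\Phase$ and then read off closure: the bracket of two Lorentz generators is again of Lorentz type $(i)$, the bracket of two diffeomorphism generators is again of diffeomorphism type $(ii)$, and the mixed bracket is of Lorentz type $(iii)$, so $\Span\left(\alg{\SO},\alg{\DiffM}\right)$ is stable under $\lsn\,\cdot\,,\,\cdot\,\rsn$. Antisymmetry and the Jacobi identity are automatic for the commutator of vector fields, so these three relations are exactly what is needed to conclude that the span is a Lie subalgebra of $(\Tan\Phase,\lsn\,\cdot\,,\,\cdot\,\rsn)$; the structure constants then identify it with the Lie algebra of the semidirect product $\SO\rtimes\DiffM$.

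Relation $(ii)$ needs no fresh work. The generator $\ww(\bxi)=\bxi-i_{\dd\bxi}\e^a\partial_{\e^a}-\frac{1}{2}i_{\dd\bxi}\AAA^{ab}\partial_{\AAA^{ab}}$ is precisely the restriction to $\Phase$ of the canonical diffeomorphism generator of Lemma \ref{Noethers_current_diffeomorphism_general_case}, specialized to $\y^A=(\e^a,\AAA^{ab})$. Hence $\lsn\ww(\bxi),\ww(\bxi')\rsn=\ww(\lsn\bxi,\bxi'\rsn)=\ww(\lder{\bxi}\bxi')$ follows at once from that lemma, since $\lsn\bxi,\bxi'\rsn=\lder{\bxi}\bxi'$ for spacetime vector fields.

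For $(i)$ I would first note that $\vv(\Lambda)$ and $\vv(\Lambda')$ are both $\M$-vertical, so their bracket is vertical and can be evaluated fibrewise as the commutator of the two derivations acting on $\e^a$ and $\AAA^{ab}$. On the coframe this is the familiar matrix computation: from $\vv(\Lambda)[\e^a]=-\Lambda^{a}_{\phantom{a}b}\e^b$, and using that $\Lambda,\Lambda'$ depend only on $x^{\mu}$ and are therefore annihilated by the vertical fields, one obtains $\lsn\vv(\Lambda),\vv(\Lambda')\rsn[\e^a]=\big((\Lambda')^{a}_{\phantom{a}b}\Lambda^{b}_{\phantom{b}c}-\Lambda^{a}_{\phantom{a}b}(\Lambda')^{b}_{\phantom{b}c}\big)\e^c$, which is exactly $\vv(\Lambda\bleta\Lambda'-\Lambda'\bleta\Lambda)[\e^a]$. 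The substantive step is the connection component: one must show that the commutator of the two inhomogeneous variations $\vv(\Lambda)[\AAA^{ab}]=\DD\Lambda^{ab}$ again has the covariant form $\DD(\Lambda\bleta\Lambda'-\Lambda'\bleta\Lambda)^{ab}$. Here the derivation $\vv(\Lambda')$ must be applied to the connection occurring inside the covariant derivative $\DD\Lambda^{ab}$, whose connection terms are linear in $\AAA^{ab}$; the homogeneous pieces reproduce the matrix commutator, while the inhomogeneous $\dd\Lambda$ terms recombine with the connection terms into a single covariant derivative of the bracketed parameter. This is the standard gauge-algebra closure identity for a connection, and since it is where the index bookkeeping is heaviest, I expect it to be the main obstacle of the whole proof.

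For $(iii)$ the new feature is the base part of $\ww(\bxi)$, which now differentiates the explicit $x^{\mu}$-dependence of $\Lambda$, whereas $\vv(\Lambda)$ carries no $\partial_{\mu}$-part. The contribution in which $\ww(\bxi)$ acts through $\xi^{\mu}\partial_{\mu}$ on $\Lambda^{a}_{\phantom{a}b}$ produces $-(\lder{\bxi}\Lambda)^{a}_{\phantom{a}b}\e^b$ on the coframe, using $\lder{\bxi}\Lambda^{ab}=i_{\bxi}\dd\Lambda^{ab}$ for the $0$-form $\Lambda$; this is precisely $\vv(\lder{\bxi}\Lambda)[\e^a]$, and the remaining vertical cross-terms are to be shown to cancel, with the analogous check on the $\AAA^{ab}$ component completing the identity. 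Together the three relations exhibit $\Span\left(\alg{\SO},\alg{\DiffM}\right)$ as the Lie algebra of $\SO\rtimes\DiffM$, with $\DiffM$ acting on the Lorentz-algebra-valued parameters by the Lie derivative $\lder{\bxi}$.
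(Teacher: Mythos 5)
Your proposal is correct and, for parts $(i)$ and $(iii)$, follows essentially the same route as the paper: a direct coordinate computation of the brackets on $\Phase$, with the observation that closure of the three relations (plus the automatic antisymmetry and Jacobi identity for vector fields) yields the Lie-subalgebra statement. The steps you sketch are precisely the paper's: the matrix commutator on the coframe component, the recombination of the inhomogeneous $\dd\Lambda$ pieces with the $\AAA$-linear pieces of $\DD\Lambda^{ab}$ into $\frac{1}{2}\DD(\Lambda\bleta\Lambda'-\Lambda'\bleta\Lambda)^{ab}\partial_{\AAA^{ab}}$, and the base part $\xi^{\mu}\partial_{\mu}$ producing $\lder{\bxi}\Lambda$ in the mixed bracket. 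One small imprecision in your $(iii)$: on the $\AAA^{ab}$ component the cross-terms do not all cancel — in the paper's computation the term $\frac{1}{2}i_{\dd\bxi}\dd\Lambda^{ab}\partial_{\AAA^{ab}}$ survives and recombines with the $\frac{1}{2}(\lder{\bxi}-i_{\dd\bxi})\dd\Lambda^{ab}$ contribution of $\lsn\bxi,\frac{1}{2}\DD\Lambda^{cd}\partial_{\AAA^{cd}}\rsn$ to give $\frac{1}{2}\dd\lder{\bxi}\Lambda^{ab}$, which is what assembles into $\vv(\lder{\bxi}\Lambda)$; only the purely vertical cross-brackets vanish outright. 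The genuine divergence is $(ii)$: you invoke Lemma \ref{Noethers_current_diffeomorphism_general_case} with $\y^A=(\e^a,\AAA^{ab})$, whereas the paper recomputes the bracket explicitly in coordinates. Your shortcut is structurally sound — the $(\e,\AAA)$-components of $\ww(\bxi)$ do not involve $\p_A$ or $\h$, so the bracket relation restricts from $\Fibre$ to $\Phase$ — but be aware that the identity $\lsn\ww(\bxi),\ww(\bxi')\rsn=\ww\big(\lsn\bxi,\bxi'\rsn\big)$ is merely \emph{asserted} in the proof of Lemma \ref{Noethers_current_diffeomorphism_general_case}, never verified there; the explicit computation in the present lemma is in effect the paper's only substantiation of it, so deferring to the earlier lemma buys brevity at the cost of resting on an unproved (if routine) assertion, which you should either verify once in the general setting or accept as the paper does.
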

\begin{proof}
	We have for$\phantom{(}i)$ an expression
	\be
	\lsn\vv(\Lambda),\vv(\Lambda ')\rsn&=&
	\lsn-\Lambda^{a}_{\phantom{a}b}\e^b\partial_{\e^a},-\Lambda^{'c}_{\phantom{'c}d}\e^d\partial_{\e^c}\rsn+
	\lsn\frac{1}{2}\DD\Lambda^{ab}\partial_{\AAA^{ab}},\frac{1}{2}\DD\Lambda^{'cd}\partial_{\AAA^{cd}}\rsn.
	\ee
	The first bracket can be evaluated as 
	\be
	\lsn-\Lambda^{a}_{\phantom{a}b}\e^b\partial_{\e^a},-\Lambda^{'c}_{\phantom{'c}d}\e^d\partial_{\e^c}\rsn&=&
	\Lambda^{a}_{\phantom{a}b}\e^b\partial_{\e^a}\Lambda^{'c}_{\phantom{'c}d}\e^d\partial_{\e^c}-
	\Lambda^{'c}_{\phantom{'c}d}\e^d\partial_{\e^c}\Lambda^{a}_{\phantom{a}b}\e^b\partial_{\e^a}\\
	&=&-(\Lambda^{a}_{\phantom{a}c}\Lambda^{'c}_{\phantom{'c}b}-\Lambda^{'a}_{\phantom{'a}c}\Lambda^{c}_{\phantom{c}b})\e^b\partial_{\e^a}.
	\ee
	Evaluation of the second term yields
	\begin{align*}
	&\lsn\frac{1}{2}\DD\Lambda^{ab}\partial_{\AAA^{ab}},\frac{1}{2}\DD\Lambda^{'cd}\partial_{\AAA^{cd}}\rsn\;=\\
	&=\phantom{+}
	\lsn\frac{1}{2}\dd\Lambda^{ab}\partial_{\AAA^{ab}},\frac{1}{2}\dd\Lambda^{'cd}\partial_{\AAA^{cd}}\rsn+
	\lsn\frac{1}{2}\dd\Lambda^{ab}\partial_{\AAA^{ab}},-\Lambda^{'c}_{\phantom{'c}\bar{c}}\AAA^{\bar{c}d}\partial_{\AAA^{cd}}\rsn+\\
	&\phantom{=\;}+\lsn-\Lambda^{a}_{\phantom{a}\bar{a}}\AAA^{\bar{a}b}\partial_{\AAA^{ab}},\frac{1}{2}\dd\Lambda^{'cd}\partial_{\AAA^{cd}}\rsn+
	\lsn-\Lambda^{a}_{\phantom{a}\bar{a}}\AAA^{\bar{a}b}\partial_{\AAA^{ab}},-\Lambda^{'c}_{\phantom{'c}\bar{c}}\AAA^{\bar{c}d}
	\partial_{\AAA^{cd}}\rsn\\
	&=
	-\dd\Lambda^{cb}\Lambda^{'a}_{\phantom{'a}c}\partial_{\AAA^{ab}}+\dd\Lambda^{'cb}\Lambda^{a}_{\phantom{a}c}\partial_{\AAA^{ab}}
	-(\Lambda^{a}_{\phantom{a}c}\Lambda^{'c}_{\phantom{'c}\bar{c}}-\Lambda^{'a}_{\phantom{'a}c}\Lambda^{c}_{\phantom{c}\bar{c}})
	\AAA^{\bar{c}b}\partial_{\AAA^{ab}}\\
	&=\frac{1}{2}\dd(\Lambda^{a}_{\phantom{a}c}\Lambda^{'cb}-\Lambda^{'a}_{\phantom{'a}c}\Lambda^{cb})
	\partial_{\AAA^{ab}}-(\Lambda^{a}_{\phantom{a}c}\Lambda^{'c}_{\phantom{'c}\bar{c}}-\Lambda^{'a}_{\phantom{'a}c}\Lambda^{c}_{\phantom{c}\bar{c}})
	\AAA^{\bar{c}b}\partial_{\AAA^{ab}}\\
	&=\frac{1}{2}\DD(\Lambda^{a}_{\phantom{a}c}\Lambda^{'cb}-\Lambda^{'a}_{\phantom{'a}c}\Lambda^{cb})
	\partial_{\AAA^{ab}}.
	\end{align*}
	These two relation establish the first equality$\phantom{(}i)$ of the lemma.\\
	\hs In order to prove the second relation $\phantom{(}ii)$ we employ for a while the notation $\y^A=(\e^a,\AAA^{ab})$ again. 
	The generator $\ww(\bxi)$ can be rewritten as 
	\be
	\ww(\bxi)=\bxi-i_{\dd\bxi}\y^A\partial_{\y^A}.
	\ee
	Thus, we have
	\be
	\lsn\ww(\bxi),\ww(\bxi ')\rsn&=&\lder{\bxi}\bxi '-\xi^{\nu}\xi^{'\lambda}_{,\mu\nu}y^{A}_{\lambda}\partial_{y^{A}_{\mu}}+
	\xi^{'\nu}\xi^{\lambda}_{,\mu\nu}y^{A}_{\lambda}\partial_{y^{A}_{\mu}}\\
	&&\phantom{\lder{\bxi}\bxi '}-\xi^{\nu}_{,\lambda}\xi^{'\lambda}_{,\mu}y^{A}_{\nu}\partial_{y^{A}_{\mu}}
	+\xi^{'\nu}_{,\lambda}\xi^{\lambda}_{,\mu}y^{A}_{\nu}\partial_{y^{A}_{\mu}}\\
	&=&\lder{\bxi}\bxi '-i_{\dd\lder{\bxi}\bxi '}\y^A\partial_{\y^A}\\
	&=&\ww(\lder{\bxi}\bxi '),
	\ee
	which proves$\phantom{(}ii)$.\\
	\hs The last statement$\phantom{(}iii)$ is given by calculation
	\be
	\lsn\ww(\bxi),\vv(\Lambda)\rsn\phantom{+}
	&=&\phantom{+}\lsn\bxi,-\Lambda^c_{\phantom{c}d}\e^d\partial_{\e^c}\rsn+\lsn\bxi,\frac{1}{2}\DD\Lambda^{cd}\partial_{\AAA^{bc}}\rsn+\\
	&&+\lsn-i_{\dd\bxi}\e^a\partial_{\e^a},-\Lambda^c_{\phantom{c}d}\e^d\partial_{\e^c}\rsn+
	\lsn-i_{\dd\bxi}\e^a\partial_{\e^a},\frac{1}{2}\DD\Lambda^{cd}\partial_{\AAA^{bc}}\rsn+\\
	&&+\lsn-\frac{1}{2}i_{\dd\bxi}\AAA^{ab}\partial_{\AAA^{ab}},-\Lambda^c_{\phantom{c}d}\e^d\partial_{\e^c}\rsn+
	\lsn-\frac{1}{2}i_{\dd\bxi}\AAA^{ab}\partial_{\AAA^{ab}},\frac{1}{2}\DD\Lambda^{cd}\partial_{\AAA^{bc}}\rsn.
	\ee
	The fourth and fifth terms are obviously vanishing. The third term is also vanishing because
	\be
	\lsn-i_{\dd\bxi}\e^a\partial_{\e^a},-\Lambda^c_{\phantom{c}d}\e^d\partial_{\e^c}\rsn=\Lambda^c_{\phantom{c}d}i_{\dd\bxi}\e^d\partial_{\e^c}-
	i_{\dd\bxi}(\Lambda^c_{\phantom{c}d}\e^d)\partial_{\e^c}=0.
	\ee
	Calculation of the remaining terms yields expressions:\\
	the first term
	\be
	\lsn\bxi,-\Lambda^c_{\phantom{c}d}\e^d\partial_{\e^c}\rsn=-\lder{\bxi}\Lambda^a_{\phantom{a}b}\e^b\partial_{\e^a},
	\ee
	the second term
	\be
	\lsn\bxi,\frac{1}{2}\DD\Lambda^{cd}\partial_{\AAA^{bc}}\rsn=\frac{1}{2}(\lder{\bxi}-i_{\dd\bxi})\dd\Lambda^{ab}\partial_{\AAA^{ab}}
	-\lder{\bxi}\Lambda^a_{\phantom{a}\bar{a}}\AAA^{\bar{a}b}\partial_{\AAA^{ab}}
	\ee
	and the sixth term
	\be
	\lsn-\frac{1}{2}i_{\dd\bxi}\AAA^{ab}\partial_{\AAA^{ab}},\frac{1}{2}\DD\Lambda^{cd}\partial_{\AAA^{bc}}\rsn=
	\frac{1}{2}i_{\dd\bxi}\dd\Lambda^{ab}\partial_{\AAA^{ab}}.
	\ee
	Therefore
	\be
	\lsn\ww(\bxi),\vv(\Lambda)\rsn=-\lder{\bxi}\Lambda^a_{\phantom{a}b}\e^b\partial_{\e^a}+\frac{1}{2}\DD\lder{\bxi}\Lambda^{ab}\partial_{\AAA^{ab}}
	=\vv(\lder{\bxi}\Lambda).
	\ee
\qed\end{proof}
Now, we want to explore transformation on $\Phase$ given by composition $\OO{}{}(\Lambda)\circ \bleta_{\Phase}(\bxi)$. From definitions (\ref{Lorentz_action_on_F}) and (\ref{diff_action_on_F}) we have
\begin{align*}
&\OO{}{}(\Lambda)\circ\bleta_{\Phase}(\bxi):(x^{\mu},\e^a,\AAA^{ab})\mapsto
(\eta_{\M}^{\mu}, \OO{a}{b}(\eta^{-1}_{\M})^*\e^b,\OO{a}{\bar{a}}\OO{b}{\bar{b}}(\eta^{-1}_{\M})^*\AAA^{\bar{a}\bar{b}}+
\OO{a}{\bar{a}}\eta^{\bar{a}\bar{b}}\dd\OO{b}{\bar{b}})=\\
&=
\left(\eta_{\M}^{\mu}, (\eta^{-1}_{\M})^*\left((\eta_{\M})^*\OO{a}{b}\e^b\right),
(\eta^{-1}_{\M})^*\left((\eta_{\M})^*\OO{a}{\bar{a}}(\eta_{\M})^*\OO{b}{\bar{b}}\AAA^{\bar{a}\bar{b}}+
(\eta_{\M})^*\OO{a}{\bar{a}}\eta^{\bar{a}\bar{b}}\dd(\eta_{\M})^*\OO{b}{\bar{b}}\right)\right)\\
&=\bleta_{\Phase}(\bxi)\circ\OO{}{}\left((\eta_{\M})^*\Lambda\right):(x^{\mu},\e^a,\AAA^{ab}),
\end{align*}
which yields an equality
\be
\OO{}{}(\Lambda)\circ\bleta_{\Phase}(\bxi)=\bleta_{\Phase}(\bxi)\circ\OO{}{}\left((\eta_{\M})^*\Lambda\right).
\ee
This implies that if we define $\GG$ as a set of all elements of the type $\OO{}{}(\Lambda)\circ\bleta_{\Phase}(\bxi)$ then $(\GG,\circ)$ is a 
group given by semidirect product $\GG=\SO\rtimes\DiffM$, where $\SO$ is normal subgroup of $\GG$. Since $\GG$ is localizable the symmetry group is a gauge group.\\
\hs Finally, we can formulate the main theorem of this subsection. 
\begin{theorem}
	Group $\GG$ given by semidirect product $\GG=\SO\rtimes\DiffM$ is a gauge group of the Einstein-Cartan theory. Vanishing of all 
	Noether's charges associated to the generators of $\GG$ on $\Phase$ is equivalent to the Hamilton equations of motion 
	(\ref{EOM_ECT_curvature}) and (\ref{EOM_ECT_torsion}). The local Poisson brackets among Noether's currents $\ttt(\Lambda)$
	and $\rr(\bxi)$ are\\
	\begin{tabular}{r p{9.5cm}}
		$\phantom{(}i)$&\[\Lpb\ttt(\Lambda),\ttt(\Lambda ')\Rpb=
		\ttt(\Lambda\bleta\Lambda '-\Lambda '\bleta\Lambda),\]\\
		$\phantom{(}ii)$&\[\Lpb\rr(\bxi),\rr(\bxi ')\Rpb=
		\rr(\lder{\bxi}\bxi '),\;\;\;\;\;\;\;\;\;\;\;\;\;\]\\
		$\phantom{(}iii)$&\[\Lpb\rr(\bxi),\ttt(\Lambda)\Rpb=
		\ttt(\lder{\bxi}\Lambda).\;\;\;\;\;\;\;\;\;\;\;\;\;\]
	\end{tabular}
\end{theorem}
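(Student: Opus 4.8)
The plan is to handle the three assertions in turn, since nearly all of the substance is already in place. The group-theoretic claim is the cheapest: the composition identity $\OO{}{}(\Lambda)\circ\bleta_{\Phase}(\bxi)=\bleta_{\Phase}(\bxi)\circ\OO{}{}\left((\eta_{\M})^*\Lambda\right)$ derived just above already exhibits $(\GG,\circ)$ as the semidirect product $\SO\rtimes\DiffM$ with $\SO$ normal, so I would only add the remark that both factors are localizable — the parameters $\Lambda^{ab}$ and the fields $\bxi$ generating the flows may be chosen to agree with a prescribed generator on one embedding and to vanish on a disjoint one — whence $\GG$ is a localizable symmetry group, i.e. a gauge group.

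For the equivalence of charge-vanishing with the field equations I would read the correspondence directly off the two preceding lemmata, which is self-contained and in fact sharper than the abstract converse (Theorem~\ref{Converse_Noether_theorem_integral_version}) in that it records which charges encode which equation. Part $(iii)$ of the first lemma gives that $\TTT^{\sect}_{\BSigma_{E}}(\Lambda)=0$ for all embeddings and all $\Lambda\in\alg{\SO}$ is equivalent to the torsion equation~(\ref{EOM_ECT_torsion}); part $(iii)$ of the second gives that, once~(\ref{EOM_ECT_torsion}) holds, vanishing of $\RR^{\sect}_{\BSigma_{E}}(\bxi)$ for all embeddings and all $\bxi\in\alg{\DiffM}$ is equivalent to the curvature equation~(\ref{EOM_ECT_curvature}). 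Chaining these yields both directions at once: vanishing of all charges forces first~(\ref{EOM_ECT_torsion}) and then~(\ref{EOM_ECT_curvature}), while conversely both field equations make every charge vanish. Because $\alg{\GG}=\Span(\alg{\SO},\alg{\DiffM})$, these exhaust the charges of $\GG$.

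The bracket relations $(i)$–$(iii)$ are then pure assembly. First I would note that $\ttt(\Lambda)=i_{\vv(\Lambda)}\mef=\JJ_{\vv(\Lambda)}$ and $\rr(\bxi)=i_{\ww(\bxi)}\mef=\JJ_{\ww(\bxi)}$ are Noether currents, using the invariances $\lder{\vv(\Lambda)}\mef=\lder{\ww(\bxi)}\mef=0$ proved in the lemmata. The proof of Lemma~\ref{lemma_Noether_charges_and_their_LPB} uses nothing beyond $\lder{\vv_i}\kmef=0$, Cartan's formula, and the identity $i_{\lsn\vv_1,\vv_2\rsn}=\lder{\vv_1}i_{\vv_2}-i_{\vv_2}\lder{\vv_1}$, so it applies verbatim to the exact dynamical form on $(\Phase,\msf=-\dd\mef)$ and delivers $\Lpb\JJ_{\vv_1},\JJ_{\vv_2}\Rpb=\JJ_{\lsn\vv_1,\vv_2\rsn}$. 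Substituting the three commutators of the preceding lemma,
\be
\lsn\vv(\Lambda),\vv(\Lambda ')\rsn&=&\vv(\Lambda\bleta\Lambda '-\Lambda '\bleta\Lambda),\\
\lsn\ww(\bxi),\ww(\bxi ')\rsn&=&\ww(\lder{\bxi}\bxi '),\\
\lsn\ww(\bxi),\vv(\Lambda)\rsn&=&\vv(\lder{\bxi}\Lambda),
\ee
then gives $(i)$, $(ii)$ and $(iii)$ immediately.

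I do not expect a genuine obstacle inside the theorem itself: the real work — the invariance of $\mef$, the closed-form currents, and above all the closure of $\Span(\alg{\SO},\alg{\DiffM})$ under $\lsn\,,\,\rsn$ — was discharged in the preceding lemmata, and the theorem merely packages it. The only two points I would flag require care rather than ingenuity. First, the bracket $\Lpb\,,\,\Rpb$ in the statement is by convention the kinematical bracket $\lpb\,,\,\rpb^{\Phase}_{\kmsf}$ transported to $\Phase$; since $\ttt(\Lambda),\rr(\bxi)\in\Ham^{n-1}_{\msf}\Phase$, Lemma~\ref{Lemma_on_dynamical_subalgebra} guarantees it restricts there to the dynamical bracket $\lpb\,,\,\rpb_{\msf}$, which is exactly what the computation above evaluates, so the values obtained are the intended ones. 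Second, the explicit $\dd$-exact tail of $\rr(\bxi)$ needs no separate tracking: it is already part of $i_{\ww(\bxi)}\mef$ and is therefore carried automatically by the $\JJ$-representation, so it contributes nothing extra to the brackets.
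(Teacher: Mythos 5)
Your proposal is correct and follows essentially the same route as the paper, whose own proof is a one-line appeal to the two preceding lemmata, the relation $\lpb\JJ_{\vv_1},\JJ_{\vv_2}\rpb=\JJ_{\lsn\vv_1,\vv_2\rsn}$, and the semidirect-product discussion in the text. You in fact make explicit two points the paper leaves tacit — that the Noether-charge bracket lemma applies verbatim to $(\Phase,\msf)$, and that by Lemma \ref{Lemma_on_dynamical_subalgebra} the kinematical bracket restricts to the dynamical one on these currents — which is sound care rather than a deviation.
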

\begin{proof}
	Statements of the theorem are direct consequencies of the previous two lemmata, the relation (\ref{local_Poisson_bracket_Noethers_charges_general_form}) and discussion in the text above.
\qed\end{proof}


\section{Conclusion}
The main goal of the series is to propose as hypothesis a new theory of covariant quantum gra\-vi\-ty with continuous quantum geometry.
The first part of the series was dealing with the covariant hamiltonian formulation of the Einstein-Cartan theory. We found out that the kinematical
phase space is given as a d-jet dual $\JE^{\bstar}$ of the graded manifold of right-handed coframes over the spacetime manifold $\M$ with the Cartan-Poincar\' e form slightly modified by addition of the exact term, which is an artefact of the multisymplectic reduction since the Lagrangian of the Einstein-Cartan theory is singular. We showed that the gauge group $\GG$ of the Einstein-Cartan theory is given by the semidirect product $\GG=\SO\rtimes\DiffM$ of the local (proper) Lorentz group $\SO$ and the group of spacetime diffeomorphisms $\DiffM$ hence the local Poisson algebra of its generators is closed Lie algebra. Vanishing of all Noether's charges is equivalent to equations of motion of the Einstein-Cartan theory. \\
\hs These suggest that we solved the old Kucha\v r's problem of finding the formulation of constraints which forms closed Lie algebra, but the opposite is truth. We should take into account, that we just proved that only the local algebra is closed. In order to find an integral version of just obtained results we should, at first, introduce an instantaneous formalism what is the goal of the next part of the series. 

\bibliographystyle{spmpsci}
\bibliography{CQG_I}
\end{document}